\newcommand{\floor}[1]{\left\lfloor #1 \right\rfloor}
\newcommand{\var}{\textrm{Var}}
\newcommand{\eat}[1]{}
\newcommand{\calG}{\ensuremath{\mathcal{G}}}
\newcommand{\E}{\mathop{\mathbb{E}\/}}
\newcommand{\poly}{\ensuremath{\mathrm{poly}}}
\newcommand{\prg}{{\ensuremath \mathsf{PRG}}}
\newcommand{\bpp}{{\ensuremath \mathsf{BPP}}}
\newcommand{\rl}{{\ensuremath \mathsf{RL}}}
\newcommand{\logspace}{{\ensuremath \mathsf{L}}}
\newcommand{\ptime}{{\ensuremath \mathsf{P}}}
\newcommand{\tc}{{\ensuremath \mathsf{TC}_0}}
\newcommand{\eps}{\ensuremath{\varepsilon}}
\renewcommand{\epsilon}{\ensuremath{\varepsilon}}
\newcommand{\zo}{\{0,1\}}
\newcommand{\pmo}{\ensuremath \{ \pm 1\}}
\newcommand{\rgta}{\ensuremath{\rightarrow}}
\newcommand{\R}{\mathbb R}
\newcommand{\C}{\mathbb C}
\newcommand{\Z}{\mathbb Z}
\newcommand{\pr}{\Pr}
\newcommand{\sgn}{\ensuremath{\mathds{1}^+}}
\newcommand{\dotp}[2]{\left\langle #1,#2\right\rangle}
\newcommand{\fshape}{Fourier shape}
\newcommand{\fshapes}{Fourier shapes}
\newtheorem{thm}{Theorem}[section]
\newtheorem{definition}{Definition}
\newtheorem{cor}[thm]{Corollary}
\newtheorem{lem}[thm]{Lemma}
\newtheorem*{defn}{Definition}
\newtheorem{fact}[thm]{Fact}
\newcommand{\zpm}{\{-1,0,1\}}
\newcommand{\dpm}{\pmo}
\newcommand{\dtv}{d_{TV}}
\newcommand{\hh}{\mathcal{H}}
\newcommand{\ignore}[1]{{}}
\newcommand{\bkets}[1]{\left(#1\right)}
\newcommand{\sbkets}[1]{\left[#1\right]}
\newcommand{\iprod}[2]{\langle #1,#2\rangle}
\newcommand{\nmo}[1]{\left\|#1\right\|_1}
\newcommand{\nmt}[1]{\left\|#1\right\|_2}
\newcommand{\nm}[1]{\left\|#1\right\|}
\newcommand{\nmp}[1]{\left\|#1\right\|_p}
\newcommand{\abs}[1]{\left|#1\right|}
\newcommand{\calD}{\mathcal{D}}
\newcommand{\mnote}[1]{}
\newcommand{\pnote}[1]{}
\newcommand{\dnote}[1]{}
\newcommand{\tvar}{\mathsf{Tvar}}
\date{}
\title{Pseudorandomness via the discrete Fourier transform}
\author{Parikshit Gopalan \\Microsoft Research \and Daniel M. Kane \\University of California, San Diego \and Raghu Meka \\University of California, Los Angeles}
\begin{document}
\begin{titlepage}
\thispagestyle{empty}
\maketitle
\begin{abstract}
We present a new approach to constructing unconditional pseudorandom
generators against classes of functions that involve computing a linear
function of the inputs. We give an explicit construction of a pseudorandom generator that fools the
{\em discrete Fourier transforms} of linear functions with
seed-length that is nearly logarithmic (up to polyloglog factors) in the input
size and the desired error parameter.
Our result gives a single pseudorandom generator that fools several
important classes of tests computable in logspace that have been
considered in the literature, including halfspaces (over general
domains), modular tests and combinatorial shapes. For all these
classes, our generator is the first that achieves near logarithmic
seed-length in both the input length and the error parameter. Getting such a seed-length is a natural
challenge in its own right, which needs to be overcome in order to derandomize $\rl$ --- a central question in complexity theory. %We believe it marks progress in the quest to derandomize all of {\sc   RL}.

%Our generator can be viewed as a natural and substantial generalization of the small-bias spaces of Naor and Naor \cite{NaorN93}.
Our construction combines ideas from a large body of prior work,
ranging from a classical construction of \cite{NaorN93} to the recent gradually increasing
independence paradigm  of \cite{KaneMN11, CelisRSW13,
  GopalanMRTV12}, while also introducing some novel analytic machinery
which might find other applications.
\end{abstract}

\ignore{
\begin{abstract}
The problem of constructing pseudorandom generators that fool
halfspaces has been studied intensively in recent times. For fooling
halfspaces over $\pmo^n$ with error $\epsilon$,  the best construction
known requires seed-length $O(\log(n) + \log^2(1/\epsilon))$
\cite{MekaZ13}.  For polynomially small error (specifically for $\eps
< 1/\sqrt{n}$), nothing better was known than generators  of seed-length $O(\log^2 (n))$ which fool
all of $\rl$. Getting the seed-length down to $O(\log(n/\epsilon))$ is a natural
challenge in its own right, which needs to be overcome in order to
derandomize $\rl$ \cite{MekaZ13}.

In this work we give the first pseudorandom generators for halfspaces
with nearly logarithmic seed-length. Our generators require seed-length $O(\log(n/\eps)\log\log(n/\eps))$.
We reduce the problem to fooling Fourier coefficients of
real-valued linear functions with polynomially small error. Our generator combines ideas from a classical
construction for small-bias spaces due to \cite{NaorN93} with the more recent gradually increasing
independence paradigm  of \cite{KaneMN11, CelisRSW13,
  GopalanMRTV12}, while its analysis calls for some novel analytic machinery
which might find other applications.
\end{abstract}}
\end{titlepage}

% !TEX root = main.tex

\section{Introduction}

A central goal of computational complexity  is to understand the power that
randomness adds to efficient computation. The main questions in
this area are whether $\bpp = \ptime$ and $\rl =\logspace$, which
respectively assert that randomness can be eliminated from efficient
computation, at the price of a polynomial slowdown in time, and a
constant blowup in space. It is known that proving $\bpp = \ptime$ will
imply strong circuit lower bounds that seem out of reach of
current techniques. In contrast, proving $\rl = \logspace$,
could well be within reach. Indeed, {\em bounded-space algorithms} are a natural computational model
for which we know how to construct strong \emph{pseudo-random generators}, $\prg$s, unconditionally.

Let $\rl$ denote the class of randomized algorithms with $O(\log n)$
work space which can access the random bits in a read-once
pre-specified order. Nisan~\cite{Nisan92} devised a $\prg$ of seed
length $O(\log^2(n/\eps))$ that fools $\rl$ with error $\eps$.
This generator was subsequently used by Nisan \cite{Nisan94} to show that $\rl
\subseteq \mathsf{SC}$ and by Saks and Zhou~\cite{SaksZ99} to
prove that $\rl$ can be simulated in space $O(\log^{3/2} n)$.
Constructing $\prg$s with the optimal
$O(\log(n/\epsilon))$ seed length for this class and showing that $\rl
= \logspace$ is arguably the outstanding open problem in
derandomization (which might not require a breakthrough in lower bounds).
Despite much progress in this area
\cite{ImpagliazzoNW94,NisanZ96,RazR99,Reingold08,ReingoldTV06,BravermanRRY14,BrodyV10,KouckyNP11,De11,GopalanMRTV12},
there are few cases where we can improve on Nisan's twenty year old
bound of $O(\log^2 (n/\eps))$ \cite{Nisan92}.

\subsection{\fshapes}
A conceptual contribution of this work is to propose a class of functions in $\rl$ which we call \emph{\fshapes } that
unify and generalize the problem of fooling many natural classes of test functions that are
computable in logspace and involve computing linear combinations of
(functions of) the input variables.  In the following, let $\C_1 = \{z: |z| \leq
1\}$ be the unit-disk in the complex plane.

\begin{definition}
\label{def:fshape}
A $(m,n)$-\emph{\fshape} $f:[m]^n \to \C_1$ is a function of the form
$f(x_1,\ldots,x_n) = \prod_{j=1}^n f_j(x_j)$ where each $f_j:[m] \to
\C_1$. We refer to $m$ and $n$ as the alphabet size and the dimension
of the \fshape\, respectively.
\end{definition}

Clearly, $(m,n)$-\fshapes\, can be computed with $O(\log n)$ workspace, as long as the
bit-complexity of $\log(f_j)$ is logarithmic for each $j$; a condition that can be enforced
without loss of generality.  Since our goal is to fool functions
$f:\zo^n \to \zo$, it might be unclear why we should consider complex-valued
functions (or larger domains). The answer comes from the discrete
Fourier transform which maps integer random variables to $\C_1$.
Concretely consider a Boolean function $f:\zo^n \to \zo$ of the form
$f(x) = g(\sum_jw_jx_j)$ where $x \in \zo^n$,  $w_j \in \Z$, and $g:\Z \to \zo$ is a {\em
   simple} function like a threshold or a mod function. To fool such a 
function $f$,  it suffices to \emph{fool} the linear function $w(x) =
 \sum_jw_jx_j$. A natural way to establish the
 closeness of distributions on the integers is via the discrete
 Fourier transform. The discrete Fourier transform of $w(x)$ at
 $\alpha \in [0,1]$ is given by
\[ \phi_\alpha(w(x)) = \exp(2\pi i \alpha  \cdot w(x)) =
\prod_{j=1}^n\exp(2\pi i \alpha w_jx_j) \]
which is a Fourier shape.

Allowing a non-binary alphabet $m$ not only allows us to capture more general
classes of functions (such as combinatorial shapes), it makes the class
more robust. For instance, given a \fshape\ $f:\zo^n \to \C$,  if we
consider inputs bits in blocks of length $\log(m)$, then the resulting
function is still a Fourier shape over a larger input domain $[m]$ (in dimension $n/\log(m)$). This allows
certain compositions of $\prg$s and simplifies our construction even
for the case $m =2$.

\subsubsection{$\prg$s for \fshapes\ and their applications.}

A $\prg$ is a function $\calG:\zo^r \rgta [m]^n$. We refer to $r$ as
the seed-length of the generator. We say $\calG$ is \emph{explicit} if
the output of $\calG$ can be computed in time
$\poly(n)$.\footnote{Throughout, for a multi-set $S$, $x \in_u S$
  denotes a uniformly random element of $S$.}

\begin{definition}
\label{def:prg}
A $\prg$ $\calG:\zo^r \to [m]^n$ fools a class of
functions $\mathcal{F} = \{f:[m]^n \to \C\}$ with error $\eps$
(or $\eps$-fools $\mathcal{F}$) if for every $f \in \mathcal{F}$,
\[\left|\E_{x \in_u [m]^n}[f(x)] - \E_{y \in_u \zo^r}[f(\calG(y))]\right| < \eps.\]
%\[\left|\pr_{x \in \dpm^n}[f(x) = 1] - \pr_{y \in \pmo^r}[f(G(y)) = 1]\right| < \eps.\]
\end{definition}

We motivate the problem of constructing $\prg$s for \fshapes\ by
discussing how they capture a variety of well-studied classes like
halfspaces (over general domains), combinatorial rectangles,  modular
tests and combinatorial shapes.

\paragraph{$\prg$s for halfspaces.} 
Halfspaces are functions $h:\zo^n \to \zo$ that can be represented as
\[ h(x) = \sgn(\dotp{w}{x} - \theta)\]
for some \emph{weight} vector $w \in \Z^n$ and \emph{threshold}
$\theta \in \Z$ where $\sgn(a) = 1$ if $a \geq 0$ and $0$ otherwise.
Halfspaces are of central importance
in computational complexity, learning theory and social choice. Lower
bounds for halfspaces are trivial, whereas the problem of proving
lower bounds against depth-$2$ $\tc$ or halfspaces of
halfspaces is a frontier open problem in computational complexity. The
problem of constructing explicit $\prg$s that can fool halfspaces is a
natural challenge that has seen a lot of exciting progress recently
\cite{DGJSV09, MekaZ13, Kane11,Kane14,KothariM14}. The best known $\prg$
construction for halfspaces is that of Meka and Zuckerman
\cite{MekaZ13} who gave a $\prg$ with seed-length $O(\log n +
\log^2(1/\epsilon))$, which is $O(\log^2(n))$ for polynomially small
error. They also showed that $\prg$s against $\rl$ with
inverse polynomial error can be used to fool halfspaces, and thus
constructing better $\prg$s for halfspaces is a
necessary step towards progress for bounded-space algorithms.
However, even for special cases of halfspaces (such as derandomizing
the Chernoff bound), beating seed-length
$O(\log^2 (n))$ has proved difficult. 

We show that a $\prg$ for $(2,n)$-\fshapes\, with error $\epsilon/n^2$
also fools halfspaces with error $\epsilon$. In particular, $\prg$s fooling
\fshapes\ with polynomially small error also fool halfspaces with
small error.

\paragraph{$\prg$s for generalized halfspaces.}
$\prg$s for $(m,n)$-\fshapes\, give us $\prg$s for halfspaces not
just for  the uniform distribution over the
hypercube, but for a large class of distributions that have been
studied in the literature. We can derive these results in a unified
manner by considering the class of {\em generalized   halfspaces}.

\begin{definition}
\label{def:gh}
A generalized halfspace over $[m]^n$ is a function $g:[m]^n \to
\zo$ that can be represented as
\[ g(x) = \sgn\bkets{\sum_{j=1}^n g_j(x_j) - \theta}.\]
where $g_j:[m] \to \R$ are arbitrary functions for $j \in [n]$ and $\theta \in \R$. 
\end{definition}

$\prg$s for $(m,n)$-\fshapes\, imply $\prg$s for generalized halfspaces. This in turn captures settings
of fooling halfspaces with respect to the Gaussian distribution and
the uniform distribution on the sphere  \cite{KarninRS12,MekaZ13,Kane14,KothariM14},
and a large class of product distributions over $\R^n$
\cite{GopalanOWZ10}.

 \paragraph{Derandomizing the Chernoff-Hoeffding bound.}
 \mnote{This formulation is perhaps clean and general enough ...}
A consequence of fooling generalized halfspaces is to
{\em derandomize} Chernoff-Hoeffding type bounds for sums of
independent random variables which are ubiquitous in the analysis of
randomized algorithms. We state our result in the language of ``randomness-efficient samplers'' (cf.~\cite{Zuckerman97}). Let $X_1,\ldots,X_n$ be
independent random variables over a domain $[m]$ and let $g_1,\ldots,g_n:[m] \to [-1,1]$
be arbitrary bounded functions. The classical Chernoff-Hoeffding
bounds \cite{Hoeffding} say that  
\[ \pr\sbkets{\abs{\sum_{i=1}^n g_i(X_i) - \sum_{i=1}^n \E[g_i(X_i)]}
  \geq t} \leq 2 \exp(-t^2/4n).\] 
There has been a long line of work on showing sharp tail bounds for
pseudorandom sequences starting from \cite{SchmidtSS95} who showed
that similar tail bounds hold under limited independence. But all previous
constructions for the polynomial small error regime
required seed-length $O(\log^2(n))$. $\prg$s for generalized
halfspaces give Chernoff-Hoeffding tail bounds with polynomially
small error, with seed-length $\tilde{O}(\log(n))$.

\ignore{There
has been a long line of work on showing sharp tail bounds for
pseudorandom sequences starting from \cite{SchmidtSS95}. A $\prg$ for
halfspaces with polynomially small error implies
Chernoff-like sub-Gaussian tail bounds.  But previous constructions that guarantee such strong tail
bounds \cite{MekaZ13,SchmidtSS95} required seed-length $O(\log^2(n))$.}

\paragraph{$\prg$s for modular tests.}
An important class of functions in $\logspace$ is that of
\emph{modular tests}, i.e., functions of the form $g:\zo^n \to \zo$,
where $g(x) = \mathds{1}(\sum_i a_i x_i \bmod m \in S)$, for $m \leq M$, coefficients
$a_i \in \Z_m$ and $S \subseteq \Z_m$. Such a test is computable in
$\logspace$ as long as $M \leq \poly(n)$. The case when $m =2$
corresponds to small-bias spaces, for which optimal constructions were
first given in the seminal work of Naor and Naor \cite{NaorN93}. The
case of arbitrary $m$ was considered by \cite{LovettRTV09} (see also
\cite{MekaZ09}), their generator gives seed-length
$\tilde{O}(\log(n/\eps) + \log^2(M))$. Thus for $M = \poly(n)$, their
generator does not improve on Nisan's generator even for constant
error $\epsilon$. $\prg$s fooling $(2,n)$-\fshapes\ with polynomially small
error fools modular tests.

\ignore{
While we rigorously establish this later on, the crux here is that fooling halfspaces is equivalent to fooling all linear functions $L(x) = \sum_i w_i x_i$, where $w_i \in \Z$, in Kolmogorov or cdf distance (see Equation \eqref{eq:cdf-dist} for formal definition). Further, fooling such tests amounts to studying the discrete random variable $Y = \sum_i w_i X_i$ for $X$ distributed over $\zo^n$. On the other hand, if $Y$ is bounded in the range $[-N,N]$, $Y$ is described completely by its discrete Fourier coefficients over $\Z_N$: i.e.,
$$\hat{Y}(k) = \E[\exp(2\pi i k Y/N)],$$
where $k \in \{0,1,\ldots,N-1\}$. Finally, note that each of these Fourier coefficients is the expectation of a \fshape: For $0 \leq k \leq N-1$, define a $(2,n)$-\fshape\, $\hat{L}^k:\{0,1\}^n \to \C_1$ by
$$\hat{L}^k(x) = \prod_{j=1}^n \exp(2 \pi i k a_j x_j /N).$$
Then, $\hat{Y}(k) = \E[\hat{L}^k(X)]$. Therefore, fooling \fshapes\,
amounts to fooling the Fourier transform of the linear function
$L$. We then exploit this connection by showing that for bounded
integer-valued random variables, closeness in Kolmogorov distance is
implied by closeness in Fourier transforms of the random variables. 

Note that in the above argument to fool Halfspaces, we only needed to
fool $(2,n)$-\fshapes. While this is the case formally, considering
$(m,n)$-\fshapes\, for arbitrary $m$ plays an instrumental (and even
simplifying) role in our iterative constructions even when only trying
to fool $(2,n)$-\fshapes.}

\paragraph{$\prg$s for combinatorial shapes.}

\emph{Combinatorial shapes} were introduced in the work of
\cite{GopalanMRZ13} as a generalization of combinatorial rectangles
and to address fooling linear sums in statistical distance. These are
functions  $f:[m]^n \to \zo$ of the form
\[ f(x) = h\bkets{\sum_{i=1}^n g_i(x_i)}\]
for functions $g_i: [m] \to \zo$ and a function $h:\{0,\ldots,n\} \to
\zo$. The best previous generators of \cite{GopalanMRZ13} and
\cite{De14} for combinatorial shapes achieve a seed-length of $O(\log
(mn) + \log^2(1/\epsilon))$, $O(\log m + \log(n/\epsilon)^{3/2})$; in
particular, the best previous seed-length for polynomially small error
was $O(\log^{3/2} (n))$. $\prg$s for $(m,n)$-\fshapes\  with error $\eps/n$ imply $\prg$s for
combinatorial shapes. 

\emph{Combinatorial rectangles} are a well-studied subset of combinatorial
shapes \cite{EvenGLNV98, ArmoniSWZ96, LinialLSZ97, Lu02}.
They are functions that can be written as
$f(x) = \prod_j \mathds{1}(x_j \in A_j)$
for some arbitrary subsets $A_j \subseteq [m]$.
The best known $\prg$ due to \cite{GopalanMRTV12,GopalanY14} gives a seed-length of
$O(\log(mn/\epsilon)\log \log(mn/\epsilon))$. Combinatorial
rectangles are special cases of  \fshapes\, so our $\prg$ for $(m,n)$-\fshapes \ also fools combinatorial rectangles, 
but requires a slightly longer seed. The {\em alphabet-reduction} step in our construction
is inspired by the generator of \cite{GopalanMRTV12,GopalanY14}.

\subsubsection{Achieving optimal error dependence via \fshapes.}

We note that having generators for Fourier shapes with seed-length
$\tilde{O}(\log(n))$ even when $\eps$ is polynomially small is
essential in our reductions: we sometimes need error $\eps/\poly(n)$
for \fshapes\ in order to get $\eps$ error for our target class of
functions. Once we have this, starting with $\eps$ a sufficiently
small polynomial results in polynomially small error for the target
class of functions. 

We briefly explain why previous techniques based on {\em limit theorems} were unable to
achieve polynomially small error with optimal seed-length, by
considering the setting of halfspaces under the uniform distribution
on $\zo^n$.
Fooling halfspaces is equivalent to fooling all linear functions
$L(x) = \sum_iw_ix_i$ in Kolmogorov or cdf distance.
Previous work on fooling halfspaces
\cite{DGJSV09,MekaZ13} relies on the Berry-Ess\'een theorem, a quantiative form of the central limit theorem, to show that the cdf
of {\em regular} linear functions is close to that of the Gaussian
distribution, both under the uniform distribution and under the
pseudorandom distribution. However, even for the majority function (which is the
most regular linear function), the discreteness of $\sum_ix_i$ means
that the Kolmogorov distance from the Gaussian distribution is
$1/\sqrt{n}$, even when $x$ is uniformly random. Approaches that
show closeness in cdf distance by comparison to the Gaussian distribution seem unlikely
to give polynomially small error with optimal seed-length. 

We depart from the {\em derandomized limit
  theorem} approach taken by several previous works \cite{DGJSV09, DiakonikolasKN10, GopalanOWZ10, HarshaKM12,
  GopalanMRZ13, MekaZ13} and work directly with the Fourier transform. A crucial insight (that is formalized in
Lemma \ref{lem:introfouriertocdf}) is that fooling the Fourier
transform of linear forms to within
polynomially small error implies polynomially small Kolmogorov distance. 

%P: I think we made the below point already.
\ignore{We also remark that we work with general $(m,n)$-\fshapes\, from the beginning which gives us an advantage even when only trying to fool $(2,n)$-\fshapes\, which in turn is enough to fool halfspaces over $\zo^n$. In particular,  considering $(m,n)$-\fshapes\, for arbitrary $m$ plays an instrumental, and even simplifying role in our iterative constructions even when targeting $(2,n)$-\fshapes.}

\ignore{By working with the Fourier transform, we depart from the {\em derandomized limit
  theorem} approach taken by several previous works \cite{DGJSV09, DiakonikolasKN10, GopalanOWZ10, HarshaKM12,
  GopalanMRZ13, MekaZ13}. A crucial insight (that is formalized in
Lemma \ref{lem:introfouriertocdf}) is that fooling the Fourier
transform of linear forms to within
polynomially small error implies polynomially small Kolmogorov distance. }

\subsection{Our results}
%\mnote{Let us have $\zo$ inputs for PRG ... seems more natural somehow.}
%Parik: I'd rather not switch between 0,1 and -1,1: gives me a headache.
\ignore{
A $\prg$ is a function $\calG:\pmo^r \rgta \pmo^n$. We refer to $r$ as
the seed-length of the generator. The $\tilde{O}()$ notation hides
polylogarithmic factors in its argument. We say $\calG$ is \emph{explicit} if
the output of $\calG$ can be computed in time $\poly(n)$.

\begin{definition}
A $\prg$ $\calG:\pmo^r \to \dpm^n$ fools a class of
functions $\mathcal{F} = \{f:\dpm^n \to \R\}$ with error $\eps$
(or $\eps$-fools $\mathcal{F}$) if for every $f \in \mathcal{F}$,
\[\left|\E_{x \in \dpm^n}[f(x)] - \E_{y \in \pmo^r}[f(G(y))]\right| < \eps.\]
%\[\left|\pr_{x \in \dpm^n}[f(x) = 1] - \pr_{y \in \pmo^r}[f(G(y)) = 1]\right| < \eps.\]
\end{definition}}

Our main result is the following:

\begin{thm}
\label{thm:main}
There is an explicit generator $\calG:\zo^r \to [m]^n$ that fools all $(m,n)$-\fshapes\, with error $\eps$, and has seed-length $r = O(\log(mn/\epsilon) \cdot (\log \log (mn/\epsilon))^2)$.
\end{thm}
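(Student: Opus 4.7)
The plan is to give a recursive, iterative construction that combines small-bias spaces in the style of \cite{NaorN93} with the gradually increasing independence paradigm of \cite{GopalanMRTV12, CelisRSW13, KaneMN11}. The central quantity is a notion of ``Fourier mass'' associated with each coordinate, namely $\tau_j := 1 - |\E_{U_j}[f_j(U_j)]|$ (or the variance-like quantity $1 - |\E[f_j]|^2$). The total mass $T = \sum_j \tau_j$ governs the overall concentration of the product $f = \prod_j f_j$, so Fourier shapes with small $T$ are trivially fooled (the product is essentially a constant), while those with large $T$ require more care.

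First I would partition the coordinates into ``heavy'' (large $\tau_j$) and ``light'' (small $\tau_j$) groups with respect to a threshold $\tau^*$. By a counting argument, the number of heavy coordinates is at most $T/\tau^*$, which can be kept small. For the few heavy coordinates, $k$-wise independence or direct recursion (reducing to a Fourier shape of smaller dimension with a larger effective alphabet) suffices. For the many light coordinates, the key claim is that a small-bias distribution over the relevant character group fools $\prod_j f_j$; the proof would expand the product into Fourier characters and use the bounded variance of each $f_j$ to control the error contribution at each Fourier level.

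The full construction then iterates this partition-and-fool procedure, with each iteration using a $k$-wise independent hash to split coordinates into blocks and combining a small-bias generator across blocks with recursion inside. Each level would contribute $O(\log(mn/\epsilon)\log\log(mn/\epsilon))$ seed bits: roughly $\log(mn/\epsilon)$ for the small-bias seed and an additional $\log\log(mn/\epsilon)$ factor for the hash function's independence. After $O(\log\log(mn/\epsilon))$ iterations, the residual error is small enough, yielding the claimed seed length $O(\log(mn/\epsilon) \cdot (\log\log(mn/\epsilon))^2)$.

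The main obstacle I anticipate is the analytic lemma underlying the argument for light coordinates: that an $\epsilon$-biased distribution fools the product $\prod_j f_j$ of bounded complex-valued functions with error polynomially related to $\epsilon$. Unlike the Boolean-valued setting, there is no uniform bound on the individual Fourier coefficients of an $f_j$, so the argument must carefully use the unit-disk constraint $|f_j|\le 1$ together with the variance bounds, handling the non-trivial cancellations in the Fourier expansion. This is the novel analytic machinery alluded to in the abstract; its quantitative strength is what lets us drive the error to be polynomially small while keeping the seed length near-logarithmic. Pairing this Fourier-analytic lemma with the iterative \cite{GopalanMRTV12}-style recursion then yields Theorem~\ref{thm:main}.
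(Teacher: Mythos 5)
There is a genuine gap, on two fronts. First, your entire argument for the ``light'' coordinates rests on an unproved analytic claim, and the claim as you state it is the wrong one: you ask that a \emph{small-bias} distribution fool a product $\prod_j f_j$ of arbitrary unit-disk-valued functions with small total variance. Small-bias spaces fool characters, but a product of arbitrary $f_j:[m]\to\C_1$ has a Fourier expansion whose $\ell_1$ mass can be exponentially large, so there is no direct route from small bias to fooling such products, and you offer none. What actually works (and is the technical core of the result) is a statement about \emph{limited independence}: if $Y_1,\ldots,Y_n$ are $k$-wise independent with values in $\C_1$, then $\bigl|\E[Y_1\cdots Y_n]-\prod_j\E[Y_j]\bigr|\le \exp(O(k))\bigl(\sum_j\sigma^2(Y_j)/\sqrt{k}\bigr)^{\Omega(k)}$, proved by writing $Y_j=\mu_j(1+Z_j)$, taking logarithms, truncating the Taylor series of $\exp$ at degree $k$ (which $k$-wise independence fools), and handling unbounded deviations by a truncation/inclusion--exclusion step. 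Naming this as ``the main obstacle'' does not discharge it; without it (in the correct, limited-independence form) the recursion has no base to stand on. Relatedly, your assertion that shapes with small total mass $T$ are ``trivially fooled (the product is essentially a constant)'' is false in the relevant regime: the trivial bound is $\sum_j\sigma_j\le\sqrt{nT}$, which is useless when $T$ is only polynomially small and the target error is $\poly(1/n)$; again, the lemma above is exactly what is needed here.

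Second, your construction has no mechanism for \fshapes\ of \emph{large} total variance. Your heavy/light split is per coordinate, but a shape can have $\tvar(f)=\Omega(\log(1/\eps))$ or even $\Omega(\sqrt{n})$ with every individual $\sigma_j^2$ tiny, so there are no heavy coordinates and the low-variance error bound $(\sigma^2/\sqrt{k})^{\Omega(k)}$ exceeds $1$. For such shapes one knows $|\E_{X\in_u[m]^n}[f(X)]|\le\exp(-\tvar(f)/2)$ is negligible, but one still has to \emph{construct} a generator under which the pseudorandom expectation is also negligible. The paper does this with a dedicated component: pairwise-independent subsampling at geometrically varying scales (Valiant--Vazirani style) to isolate a bucket with $\tvar=\Theta(1)$ and get constant error, then amplification via a spreading hash into $\polylog(1/\eps)$ buckets, recycling the per-bucket seeds with a PRG for read-once branching programs (this is where the Naor--Naor idea actually enters); finally the high-variance and low-variance generators are combined by adding their outputs coordinatewise mod $m$, so that every shape falls into one of the two analyses. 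Your sketch also leaves the recursion's bookkeeping (alphabet reduction to keep $m\le\poly(n)$, and the fact that after hashing into $\sqrt{n}$ buckets the seeds themselves form a $(\poly(n/\eps),\sqrt{n})$-\fshape\ to be fooled recursively) at the level of gesture, but the two omissions above are the ones that break the proof as proposed.
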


We now state various corollaries of our main result starting with
fooling halfspaces. 

\begin{cor}
\label{cor:halfspaces}
There is an explicit generator $\calG:\zo^r \to \zo^n$ that fools
halfspaces over $\zo^n$ under the uniform distribution with error
$\eps$, and has seed-length $r = O(\log(n/\epsilon)(\log \log
(n/\epsilon))^2)$. 
\end{cor}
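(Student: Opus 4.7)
The plan is to derive Corollary~\ref{cor:halfspaces} as a direct consequence of Theorem~\ref{thm:main}, following the reduction advertised in the introduction: a $\prg$ for $(2,n)$-\fshapes\ with polynomially small error automatically fools halfspaces over $\zo^n$. The chain of reductions is halfspaces $\Rightarrow$ Kolmogorov-closeness of the linear form $L(X) = \sum_j w_j X_j$ under $X$ uniform and under $\calG(Y)$ $\Rightarrow$ closeness of the discrete Fourier coefficients of $L$ $\Rightarrow$ fooling $(2,n)$-\fshapes. Once this chain is in place, the seed-length bound is immediate from Theorem~\ref{thm:main} and the fact that $\log(1/\eps')$ and $\log(1/\eps)$ differ only by a $\poly(\log n)$ factor when $\eps' = \eps/\poly(n/\eps)$.

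First I would observe that fooling a halfspace $\sgn(\langle w,x\rangle - \theta)$ uniformly over $\theta$ is exactly the statement that $\sup_t |\pr[L(X) \le t] - \pr[L(\calG(Y)) \le t]| \le \eps$. By a standard weight-reduction step (truncating coordinates of negligible weight and rounding the rest), one may assume the $w_j$ are integers with $|w_j| \le \poly(n/\eps)$, so that $L$ is supported on $\{-N,\dots,N\}$ with $N = \poly(n/\eps)$; this step can be made to cost at most $\eps/2$ in total error. Next I would invoke Lemma~\ref{lem:introfouriertocdf}, which upgrades $\ell_\infty$-closeness of the discrete Fourier coefficients of an integer-valued random variable on $\{-N,\dots,N\}$ to Kolmogorov closeness at the cost of a $\poly(N)$ factor. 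The $k$-th Fourier coefficient factors as
\[\E[\exp(2\pi i k L(X)/(2N+1))] = \E\Big[\prod_{j=1}^n \exp(2\pi i k w_j X_j/(2N+1))\Big],\]
which is precisely the expectation of a $(2,n)$-\fshape\ with $f_j(x_j) = \exp(2\pi i k w_j x_j/(2N+1))$. Thus if $\calG$ $\eps'$-fools all $(2,n)$-\fshapes, then it fools every Fourier coefficient of every such $L$ to within $\eps'$, and Lemma~\ref{lem:introfouriertocdf} yields Kolmogorov distance at most $\poly(N) \cdot \eps'$ between $L(X)$ and $L(\calG(Y))$.

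Setting $\eps' = \eps/\poly(n/\eps)$ for a sufficiently large polynomial makes the Fourier-to-cdf loss at most $\eps/2$, which combined with the weight-reduction loss proves the corollary. We then instantiate Theorem~\ref{thm:main} with $m = 2$ and error $\eps'$; since $\log(n/\eps') = \Theta(\log(n/\eps))$, the seed length is $O(\log(n/\eps)(\log\log(n/\eps))^2)$ as claimed. The main obstacle is the Fourier-to-cdf step: the $\poly(N)$ loss factor forces us to fool Fourier shapes with polynomially small error, so the argument relies decisively on Theorem~\ref{thm:main} delivering near-logarithmic seed length in that regime. This is precisely where prior Berry-Ess\'een-based approaches fail, since the $\Theta(1/\sqrt{n})$ discreteness gap between the cdf of a linear form and its Gaussian approximation already exceeds the target accuracy.
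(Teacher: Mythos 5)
There is a genuine gap. Your proposal misreads the key lemma and then compensates with a reduction that is false in general. Lemma~\ref{lem:introfouriertocdf} converts Fourier closeness to Kolmogorov closeness with a loss of only $O(\log N)$, not $\poly(N)$; this logarithmic dependence on the weight range is exactly the point of the lemma, and it is what lets the paper work with the \emph{exact} integer representation of the halfspace, whose weights are only guaranteed to be bounded by $N = 2^{O(n\log n)}$ (cf.~\cite{LewisC67}). With that, the loss is $O(\log(nN))\,\delta = O(n\log n)\,\delta$, so taking $\delta = \eps/(Cn\log n)$ and invoking Theorem~\ref{thm:main} with $m=2$ finishes the proof -- no weight reduction is needed, and $\log(1/\delta) = O(\log(n/\eps))$ keeps the seed length as claimed.

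Because you instead budgeted a $\poly(N)$ loss, you were forced to assert a ``standard weight-reduction step'' bringing the weights down to $\poly(n/\eps)$ at cost $\eps/2$ under the uniform distribution. No such step exists: halfspaces over the cube can require exponentially large integer weights, and the known low-weight \emph{approximation} results (Servedio-type) give weights exponential in $\poly(1/\eps)$, with matching superpolynomial lower bounds -- hopeless in the polynomially small error regime this corollary targets. Naive ``truncate small weights and round the rest'' changes the function on a constant fraction of inputs for weight vectors with geometrically decaying coefficients. There is also a circularity problem even if an approximator existed: replacing $h$ by a low-weight $h'$ controls $\Pr_X[h(X)\neq h'(X)]$ only under the uniform distribution, whereas you would also need this under the pseudorandom distribution, which is precisely what you have not yet established the generator fools. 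The rest of your chain (Fourier coefficients of $\langle w,x\rangle$ are expectations of $(2,n)$-Fourier shapes, hence $d_{FT} \le \eps'$, then Kolmogorov closeness, then uniformity over thresholds) matches the paper and is fine; the fix is simply to use the $O(\log N)$ bound of Lemma~\ref{lem:introfouriertocdf} with the exact exponential-weight representation and drop the weight-reduction step.
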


The best previous generator due to \cite{MekaZ13} had a seed-length of $O(\log n +
\log^2(1/\epsilon))$, which is $O(\log^2 n)$ for polynomially small error $\epsilon$.

We also get a $\prg$ with similar parameters for generalized halfspaces.
\begin{cor}\label{cor:halfspaces2}
There is an explicit generator $\calG:\zo^r \to [m]^n$ that
$\eps$-fools {\em generalized halfspaces} over $[m]^n$, and has 
seed-length $r = O(\log(mn/\epsilon) \cdot (\log \log (mn/\epsilon))^2)$. 
\end{cor}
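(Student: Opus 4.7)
The plan is to reduce fooling a generalized halfspace over $[m]^n$ to fooling $(m,n)$-\fshapes\ via the Fourier-to-cdf principle, and then to invoke Theorem~\ref{thm:main}. Given $g(x) = \sgn(\sum_j g_j(x_j) - \theta)$, note that $g$ depends on $x$ only through the real-valued sum $Y := \sum_j g_j(x_j)$, so fooling $g$ with error $\eps$ is equivalent to matching the cdf of $Y$ at the single threshold $\theta$ to within $\eps$. We will in fact prove the uniform statement that the cdf of $Y$ under our $\prg$ is $\eps$-close in Kolmogorov distance to the cdf of $Y$ under the uniform distribution, uniformly over \emph{every} choice of threshold.

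The first step is normalization: since the halfspace $\sgn(\sum_j g_j(x_j) - \theta)$ is invariant under a common rescaling of all $g_j$ and $\theta$, we may assume $\sum_j \max_{a\in[m]} |g_j(a)| \le 1$, so in particular $|Y| \le 1$ deterministically. The key observation is then that for every real frequency $\alpha$, the Fourier coefficient of $Y$ factors over coordinates:
\[
\phi_\alpha(Y) \;=\; \E\!\left[\exp(2\pi i \alpha Y)\right] \;=\; \E\!\left[\prod_{j=1}^n \exp\!\bigl(2\pi i \alpha\, g_j(x_j)\bigr)\right],
\]
so $\phi_\alpha(Y)$ is precisely the expectation of the $(m,n)$-\fshape\ whose $j$th component is the function $a \mapsto \exp(2\pi i \alpha g_j(a)) \in \C_1$. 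Thus any $\prg$ that $\eps'$-fools the entire class of $(m,n)$-\fshapes\ automatically $\eps'$-fools each $\phi_\alpha(Y)$, regardless of the underlying $g_j$'s.

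It remains to convert closeness in Fourier coefficients into closeness in cdfs. This is the content of Lemma~\ref{lem:introfouriertocdf}: because $Y$ lives in a bounded interval, an $\eps'$ bound on $\phi_\alpha(Y_{\text{unif}}) - \phi_\alpha(Y_\calG)$ for $\alpha$ ranging over an arithmetic progression of length $K = \poly(1/\eps)$ (with the top frequency also $\poly(1/\eps)$) yields Kolmogorov distance at most $O(\eps' K) + o(\eps)$. Setting $\eps' = \eps/\poly(mn/\eps)$ and invoking Theorem~\ref{thm:main} with error parameter $\eps'$ produces a $\prg$ that $\eps'$-fools \emph{every} $(m,n)$-\fshape---in particular every $\phi_\alpha(Y)$ above---and therefore fools the cdf of $Y$ to within $\eps/2$, which in turn fools the generalized halfspace to error $\eps$.

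The claimed seed length is then immediate: Theorem~\ref{thm:main} gives $O(\log(mn/\eps')(\log\log(mn/\eps'))^2)$, and since $\log(mn/\eps') = \Theta(\log(mn/\eps))$ when $\eps' = \eps/\poly(mn/\eps)$, this equals $O(\log(mn/\eps)(\log\log(mn/\eps))^2)$. The main conceptual work thus lives inside Theorem~\ref{thm:main} and Lemma~\ref{lem:introfouriertocdf}; the only potential obstacle in the reduction itself is the ``polynomial loss'' step above, and as in the halfspace corollary this is absorbed into $\eps'$ precisely because \fshape\ generators are required to work at polynomially small error with only a constant-factor slowdown in seed length.
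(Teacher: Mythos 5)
There is a genuine gap, and it sits exactly at the step you flag as the ``only potential obstacle.'' Your conversion from Fourier closeness to cdf closeness is not supported by Lemma~\ref{lem:introfouriertocdf}: that lemma is stated (and proved, via exact discrete Fourier inversion over $\alpha\in[0,1]$) for \emph{integer-valued} random variables supported on $[-N,N]$, with loss $O(\log N)$. After your normalization, $Y=\sum_j g_j(x_j)$ is a real-valued variable in $[-1,1]$ with arbitrary atom locations, and the claim that matching $\phi_\alpha(Y)$ for frequencies up to $\poly(1/\eps)$ yields Kolmogorov distance $O(\eps' K)+o(\eps)$ is false in general: any Esseen-type smoothing argument of this kind needs anti-concentration of $Y$ at scale comparable to the inverse of the top frequency, and arbitrary $g_j$ provide none. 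Concretely, $Y$ can have two atoms at distance $2^{-\poly(n)}$ straddling $\theta$; shifting mass between them changes every $\phi_\alpha$ with $|\alpha|\le\poly(1/\eps)$ by only $O(\poly(1/\eps)\cdot 2^{-\poly(n)})$ while changing $\E[g]$ by a constant. Even using the fact that the \fshape\ generator fools $\phi_\alpha$ at \emph{every} real $\alpha$ does not rescue the argument quantitatively, since recovering cdf closeness from a uniform bound on all frequencies costs a factor of the number of atoms, which is $m^n$. Your normalization $|Y|\le 1$ actually goes in the wrong direction: it compresses atom spacings rather than controlling them.

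The missing ingredient is the paper's Lemma~\ref{lem:gen-halfspace}: a generalized halfspace over the finite domain $[m]^n$ can be re-represented, without changing its values, with \emph{integer} weights $g_i(j)$ of magnitude $(mn)^{O(mn)}$ (embed $[m]^n$ into $\zo^{mn}$ via indicators and invoke the classical integer-weight bound for halfspaces over the cube). After this discretization, $Z_1=\sum_i g_i(X_i)$ and $Z_2=\sum_i g_i(X_i')$ are integer-valued and supported on $[-N,N]$ with $\log N=O(mn\log(mn))$, so Lemma~\ref{lem:introfouriertocdf} applies verbatim and gives $d_K(Z_1,Z_2)\le O(mn\log(mn))\cdot\eps'$; taking $\eps'=\eps/\poly(mn)$ and invoking Theorem~\ref{thm:main} then yields the stated seed length, exactly because the \fshape\ generator tolerates polynomially small error. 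The remainder of your reduction (each Fourier coefficient of $\sum_j g_j(x_j)$ is the expectation of an $(m,n)$-\fshape, so the generator fools the Fourier transform uniformly) agrees with the paper; it is the real-to-integer weight reduction that must be added for the argument to go through.
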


From this we can derive  $\prg$s with seed-length
$O(\log(n/\eps)(\log\log(n/\eps))^2)$ for fooling halfspaces with error $\eps$ under the Gaussian distribution and the uniform distribution on
the sphere. Indeed, we get the following bound for arbitrary product
distributions over $\R^n$, which depends on the $4^{th}$ moment of
each co-ordinate. 

\begin{cor}
\label{cor:general}
Let $X$ be a product distribution on $\R^n$ such that for all $i \in [n]$,
\[ \E[X_i] =0, \E[X_i^2] =1, \E[X_i^4] \leq C. \]
There exists an explicit generator $\calG:\zo^r \to \R^n$ such that if
$Y = \calG(z)$, then for every halfspace $h:\R^n \to \zo$, 
\[ \abs{\E[h(X)] - \E[h(Y)]} \leq \eps.\]
The generator $G$ has seed-length $r = O(\log(nC/\eps)(\log\log(nC/\eps))^2)$. 
\end{cor}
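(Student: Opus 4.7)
The plan is to reduce Corollary \ref{cor:general} to Corollary \ref{cor:halfspaces2} via a coordinate-wise discretization of the product distribution $X$. For each $i \in [n]$, let $F_i$ be the CDF of $X_i$, partition $\R$ into $m = \poly(nC/\eps)$ intervals $I_{i,1}, \ldots, I_{i,m}$ of equal $F_i$-mass $1/m$, and set $\phi_i : [m] \to \R$ with $\phi_i(k) := \E[X_i \mid X_i \in I_{i,k}]$. Let $\calG_{\mathrm{gh}} : \zo^r \to [m]^n$ denote the generator of Corollary \ref{cor:halfspaces2} with alphabet $m$ and error $\eps/2$, whose seed length is $r = O(\log(nC/\eps)(\log\log(nC/\eps))^2)$. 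Our target generator is $\calG(z) := (\phi_1(V_1), \ldots, \phi_n(V_n)) \in \R^n$ with $V = \calG_{\mathrm{gh}}(z)$.

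Fix a halfspace $h(x) = \sgn(L(x) - \theta)$ with $L(x) = \sum_i w_i x_i$, and let $\tilde X := (\phi_1(U_1), \ldots, \phi_n(U_n))$ for $U$ uniform on $[m]^n$. Triangle inequality splits the total error as
\[
\abs{\E[h(X)] - \E[h(\calG(z))]} \leq \abs{\E[h(X)] - \E[h(\tilde X)]} + \abs{\E[h(\tilde X)] - \E[h(\calG(z))]}.
\]
The second term is at most $\eps/2$ since $u \mapsto h(\phi_1(u_1), \ldots, \phi_n(u_n))$ is a generalized halfspace over $[m]^n$ with components $g_i(k) = w_i \phi_i(k)$, so Corollary \ref{cor:halfspaces2} applies directly. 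The first (discretization) term is the Kolmogorov distance between the one-dimensional laws of $L(X)$ and $L(\tilde X)$; by Lemma \ref{lem:introfouriertocdf}, it suffices to match the characteristic functions $\E[e^{i\alpha L(X)}]$ and $\E[e^{i\alpha L(\tilde X)}]$ to within $\eps/N^{O(1)}$ for all $|\alpha| \leq N := \poly(nC/\eps)$. Both factor across coordinates, and the telescoping estimate $\abs{\prod_i a_i - \prod_i b_i} \leq \sum_i \abs{a_i - b_i}$ for unit-modulus factors, combined with Lipschitzness $\abs{e^{iu} - e^{iv}} \leq |u-v|$, reduces this to controlling $|\alpha| \sum_i |w_i| \cdot \E\abs{X_i - \phi_i(K_i)}$ under the natural coupling where $K_i$ is the bucket index containing $X_i$.

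The main obstacle is bounding the per-coordinate discretization error $\E\abs{X_i - \phi_i(K_i)}$ using only the fourth moment bound $\E[X_i^4] \leq C$. Truncate at $T := (mC)^{1/4}$: Markov applied to $X_i^4$ gives $\Pr(|X_i| > T) \leq 1/m$, so the tail contributes at most $O(T/m)$ in expectation. Inside $[-T,T]$ the equiprobable buckets cover total length at most $2T$, and since $\phi_i(K_i) = \E[X_i \mid K_i]$ is $L^2$-optimal, a direct computation using the martingale identity $\E[(X_i - \phi_i(K_i))^2] = 1 - \var(\phi_i(K_i))$ together with the fourth-moment bound yields $\E[(X_i - \phi_i(K_i))^2] \leq m^{-\Omega(1)} \cdot \poly(C)$, which Jensen's inequality upgrades to the desired $L^1$ bound. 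Choosing $m = \poly(nC/\eps)$ a sufficiently large polynomial drives each coordinate contribution below $\eps/(nN)^{O(1)}$, making the total discretization error at most $\eps/2$ and giving the claimed seed length.
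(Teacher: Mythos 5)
Your overall architecture matches the paper's: discretize each coordinate to an alphabet of size $\poly(nC/\eps)$, observe that the halfspace composed with the discretization maps is a generalized halfspace over $[m]^n$, and fool it with Corollary \ref{cor:halfspaces2}. That second half of your argument is exactly what the paper does and is fine. The difference is that the paper treats the discretization step as a black box: it invokes Lemma \ref{lem:gowz} (Lemma 6.1 of \cite{GopalanOWZ10}), which asserts precisely that a suitable discrete product distribution $\tilde X$ fools all halfspaces against $X$ to error $\eps$. You instead try to prove this discretization step from scratch, and that is where your argument has a genuine gap.

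The problem is your use of Lemma \ref{lem:introfouriertocdf} to pass from closeness of characteristic functions of $L(X)$ and $L(\tilde X)$ to Kolmogorov closeness. That lemma is stated and proved only for \emph{integer-valued} random variables supported on $[-N,N]$: its proof uses the discrete inversion formula $\pr[Z=j]=\int_0^1 \exp(-2\pi i j\alpha)\hat Z(\alpha)\,d\alpha$ and the $1$-periodicity of the characteristic function, neither of which is available here. Your $L(X)=\sum_i w_i X_i$ has real weights and real-valued (possibly continuous, possibly atomic at arbitrary reals) coordinates, so the lemma simply does not apply, and restricting to frequencies $|\alpha|\le N$ has no justification. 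The correct continuous substitute is Esseen's smoothing inequality, but that carries an additional anti-concentration term of the form $\sup_\theta \Pr\bigl[|L(\tilde X)-\theta|\le 1/N\bigr]$, which cannot be controlled using only the assumptions $\E[X_i]=0$, $\E[X_i^2]=1$, $\E[X_i^4]\le C$: for instance, a halfspace supported on a single coordinate whose distribution has large atoms makes this term $\Omega(1)$, and then Fourier closeness yields nothing. Equivalently, in the coupling picture, $h(X)\neq h(\tilde X)$ exactly when $L(X)$ lies within the coupling error of the threshold, and bounding that probability requires anti-concentration of the linear form near $\theta$, which is false at this level of generality. This is precisely the hard content of the \cite{GopalanOWZ10} lemma, whose proof handles it via halfspace-specific structure (a regular versus non-regular case analysis with Berry--Ess\'een-type anti-concentration for the regular case), not via a generic Fourier-to-cdf conversion. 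A secondary, more minor issue: the equal-mass interval partition defining your buckets is not even well-defined when some $X_i$ has an atom of mass exceeding $1/m$, so the construction itself needs modification. Your moment calculations bounding $\E|X_i-\phi_i(K_i)|$ are essentially fixable, but they do not rescue the argument, since the missing ingredient is anti-concentration, not the size of the per-coordinate rounding error. The direct repair is to do what the paper does: quote Lemma \ref{lem:gowz} for the discretization and reserve your argument for the pseudorandomness step.
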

This improves on the result of \cite{GopalanOWZ10} who obtained
seedlength $O(\log(nC/\eps)\log(C/\eps)$ for this setting via a suitable
modification of the generator from \cite{MekaZ13}.

The next corollary is a near-optimal derandomization of the Chernoff-Hoeffding bounds.
To get a similar guarantee, the best known seed-length that follows
from previous work \cite{SchmidtSS95,MekaZ13, GopalanOWZ10} was
$O(\log (mn) + \log^2(1/\epsilon))$.  

\begin{cor}
\label{cor:ch-bound}
Let $X_1,\ldots,X_n$ be independent random variables over the domain
$[m]$. Let $g_1,\ldots,g_n:[m] \to [-1,1]$ be arbitrary bounded
functions. There exists an explicit generator $G:\zo^r \to [m]^n$ 
such that if $(Y_1,\ldots,Y_n) = G(z)$ where $z
\in_u \zo^r$, then $Y_i$ is distributed identically to $X_i$ and
\[ \pr\sbkets{\abs{\sum_{i=1}^n g_i(Y_i) - \sum_{i=1}^n \E[g_i(Y_i)]}
  \geq t} \leq 2 \exp(-t^2/2n) + \epsilon.\]
$G$ has seed-length $r = O(\log(mn/\epsilon)(\log \log
(mn/\epsilon))^2)$. 
\end{cor}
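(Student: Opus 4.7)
The plan is to derive the derandomized Chernoff-Hoeffding bound by reducing the tail event to an application of the generalized-halfspace $\prg$ from Corollary \ref{cor:halfspaces2}.

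First, I would reduce sampling from the arbitrary product distribution to uniform sampling on a large alphabet. Set $M = \poly(mn/\epsilon)$ and for each $i$ define a ``sampling'' map $\sigma_i: [M] \to [m]$ by partitioning $[M]$ into $m$ consecutive blocks $I_{i,1},\ldots,I_{i,m}$ with $|I_{i,j}|/M$ as close as possible to $\Pr[X_i = j]$; then $\sigma_i(Z_i)$, for $Z_i \in_u [M]$, has distribution within total variation $m/M$ of $X_i$, which we can make negligible (or exactly match $X_i$ if the probabilities are rationals with denominator dividing $M$). The generator $G$ will be the composition of a $\prg$ outputting into $[M]^n$ with the coordinatewise map $(\sigma_1,\ldots,\sigma_n)$.

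Next, I would observe that the tail event is a union of two generalized halfspaces. Put $\tilde{g}_i = g_i \circ \sigma_i : [M] \to [-1,1]$ and $\mu = \sum_i \E[\tilde{g}_i(Z_i)]$. Then
\[
\left\{\,\left| \textstyle\sum_i \tilde{g}_i(z_i) - \mu \right| \geq t\,\right\}
\]
is the union of $\{\sum_i \tilde{g}_i(z_i) \geq \mu+t\}$ and $\{-\sum_i \tilde{g}_i(z_i) \geq -\mu+t\}$, each of which is a generalized halfspace over $[M]^n$ in the sense of Definition \ref{def:gh}. Now apply Corollary \ref{cor:halfspaces2} at alphabet $M$ and error $\epsilon/4$ to obtain $\mathcal{G}: \zo^r \to [M]^n$ with
\[
r = O\bkets{\log(Mn/\epsilon)(\log\log(Mn/\epsilon))^2} = O\bkets{\log(mn/\epsilon)(\log\log(mn/\epsilon))^2},
\]
since $M=\poly(mn/\epsilon)$.

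Finally, for truly uniform $Z \in_u [M]^n$ the $\tilde{g}_i(Z_i)$ are independent and bounded in $[-1,1]$, so classical Hoeffding gives $\Pr[|\sum_i \tilde{g}_i(Z_i) - \mu| \geq t] \leq 2\exp(-t^2/2n)$. Applying the $\epsilon/4$-fooling property to each of the two halfspaces and a union bound transfers this to $Y = (\sigma_1,\ldots,\sigma_n)(\mathcal{G}(z))$ with an extra additive $\epsilon/2$, giving the stated tail bound. The main subtlety is the distribution-matching step: ensuring $Y_i$ is \emph{identically} distributed to $X_i$ rather than merely close, which we handle by a standard rounding argument (picking $M$ a common multiple of the denominators of the $p_i(j)$, or absorbing the marginal discrepancy into $\epsilon$). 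Modulo this, the proof is a clean composition of the generalized-halfspace $\prg$ with the classical Chernoff--Hoeffding inequality.
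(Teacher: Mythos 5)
Your proposal is correct and follows essentially the same route as the paper: discretize each $X_i$ as the image of a uniform variable over a $\poly(mn/\epsilon)$-size alphabet, express the two-sided tail event as a pair of generalized halfspaces over that alphabet, apply Corollary \ref{cor:halfspaces2} to each, and combine with the classical Chernoff--Hoeffding bound and a triangle inequality. Even your caveat about exact versus approximate matching of the marginals mirrors the paper's own handling (it discards low-probability atoms and absorbs the discrepancy into $\epsilon$).
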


We get the first generator for fooling modular tests whose dependence
on the modulus $M$ is near-logarithmic. The best previous generator
from \cite{LovettRTV09} had a seed-length of $\tilde{O}(\log(n/\eps) +
\log^2(M))$, which is $\tilde{O}(\log^2 n)$ for $M = \poly(n)$.

\begin{cor}
\label{cor:modular}
There is an explicit generator $\calG:\zo^r \to \zo^n$ that fools
all linear tests modulo $m$ for all $m \leq M$ with error $\eps$, and has seed-length $r =
O(\log(Mn/\eps)\cdot (\log\log(Mn/\eps))^2)$.
\end{cor}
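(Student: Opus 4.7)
The plan is to reduce fooling modular tests to fooling $(2,n)$-\fshapes\ and then invoke Theorem~\ref{thm:main}. Fix any modulus $m \leq M$, coefficients $a_1,\ldots,a_n \in \Z_m$, and set $S \subseteq \Z_m$, giving the modular test $g(x) = \mathds{1}_S\!\left(\sum_i a_i x_i \bmod m\right)$. Expand $\mathds{1}_S:\Z_m \to \zo$ via the discrete Fourier transform on $\Z_m$:
\[
\mathds{1}_S(y) \;=\; \sum_{k=0}^{m-1} c_k \exp(2\pi i k y / m), \qquad c_k = \frac{1}{m}\sum_{j \in S} \exp(-2\pi i k j/m).
\]
By Parseval, $\sum_k |c_k|^2 = |S|/m \leq 1$, so Cauchy--Schwarz gives $\sum_k |c_k| \leq \sqrt{m} \leq \sqrt{M}$.

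Substituting $y = \sum_i a_i x_i$ and taking expectations over any distribution $X$ on $\zo^n$,
\[
\E[g(X)] \;=\; \sum_{k=0}^{m-1} c_k \cdot \E\!\left[\prod_{j=1}^n \exp\!\left(\tfrac{2\pi i\, k a_j X_j}{m}\right)\right].
\]
Each inner expectation is the expectation of the $(2,n)$-\fshape\ $\phi_k(x) = \prod_j \exp(2\pi i k a_j x_j / m)$, since $\exp(2\pi i k a_j \cdot / m):\zo \to \C_1$. So it suffices to fool every such $\phi_k$.

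Apply the generator $\calG$ of Theorem~\ref{thm:main} with alphabet $m=2$ and error parameter $\eps' := \eps/\sqrt{M}$; its seed-length is
\[
O\!\left(\log(n/\eps')\cdot (\log\log(n/\eps'))^2\right) \;=\; O\!\left(\log(Mn/\eps)\cdot(\log\log(Mn/\eps))^2\right).
\]
For any $m \leq M$, any $S$, and any coefficients $a_i$, the triangle inequality combined with $\sum_k|c_k|\leq \sqrt{M}$ yields
\[
\bigl|\E_u[g] - \E_{\calG}[g]\bigr| \;\leq\; \sum_{k=0}^{m-1}|c_k|\cdot \bigl|\E_u[\phi_k]-\E_{\calG}[\phi_k]\bigr| \;\leq\; \sqrt{M}\cdot \eps' \;=\; \eps.
\]
The same generator works uniformly for every $m \leq M$, so no union bound over moduli is needed.

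There is no real obstacle here once Theorem~\ref{thm:main} is in hand; the proof is the standard Fourier-to-distance reduction. The only point worth noting is that crude use of $\sum_k |c_k| \leq m$ would give the worse bound $\eps' = \eps/M$, which still yields the same seed-length up to constants, so the Cauchy--Schwarz refinement is convenient but inessential. The entire payoff of the corollary is concentrated in the fact that Theorem~\ref{thm:main} tolerates polynomially small error $\eps'$ with nearly logarithmic seed, which is exactly what lets us absorb the $\sqrt{M}$ (or $M$) loss from the Fourier expansion of $\mathds{1}_S$.
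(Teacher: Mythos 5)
Your proof is correct, and it takes a mildly different route from the paper's. The paper does not expand the indicator $\mathds{1}_S$ over $\Z_m$ at all: it instead considers the integer-valued random variables $Z_1=\sum_i a_i X_i$ and $Z_2=\sum_i a_i Y_i$ (not reduced mod $m$), which are supported on $[0,Mn]$, and invokes Lemma \ref{lem:fouriertotv} (Plancherel over the integer support) to get $d_{TV}(Z_1,Z_2)\leq O(\sqrt{Mn})\,d_{FT}(Z_1,Z_2)$; since each character $\exp(2\pi i\alpha\sum_i a_i x_i)$ is a $(2,n)$-\fshape\ by Equation \eqref{eq:lin-fshape}, a generator with \fshape-error $\eps/\sqrt{Mn}$ suffices. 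You instead expand $\mathds{1}_S$ in the discrete Fourier basis of $\Z_m$, bound $\sum_k |c_k|\leq\sqrt{|S|}\leq\sqrt{M}$ via Parseval and Cauchy--Schwarz, and apply the triangle inequality, so you only need the finitely many frequencies $\alpha=k/m$ and only error $\eps'=\eps/\sqrt{M}$. Your argument is more self-contained (it bypasses Lemma \ref{lem:fouriertotv} and the statistical-distance formulation entirely) and has a slightly milder error requirement, while the paper's argument proves the stronger statement that the full distribution of $\sum_i a_i X_i$ is fooled in total variation, and reuses the $d_{FT}$/$d_{TV}$ machinery shared with the other corollaries. Either way the loss is polynomial in $M$ and $n$, so both give the claimed seed-length $O(\log(Mn/\eps)\cdot(\log\log(Mn/\eps))^2)$; your closing remark that even the crude bound $\sum_k|c_k|\leq m$ suffices is also accurate.
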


Finally, we get a generator with near-logarithmic seedlength for
fooling combinatorial shapes. 
\cite{GopalanMRZ13} gave a $\prg$ for combinatorial shapes
with a seed-length of $O(\log(mn) + \log^2(1/\epsilon))$. This was
improved recently by De \cite{De14} who gave a $\prg$ with seed-length
$O(\log m + \log(n/\epsilon)^{3/2})$; in particular, the best previous
seed-length for polynomially small error was $O((\log(n)^{3/2})$.

\begin{cor}
\label{cor:cshapes}
There is an explicit generator $\calG:\zo^r \to [m]^n$ that fools
$(m,n)$-combinatorial shapes to error $\eps$ and has seed-length $r =
O(\log(mn/\epsilon)(\log \log (mn/\epsilon))^2)$.
\end{cor}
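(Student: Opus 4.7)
The plan is to reduce fooling $(m,n)$-combinatorial shapes to fooling $(m,n)$-Fourier shapes and then apply Theorem \ref{thm:main} with a slightly smaller error parameter. The key observation is that a combinatorial shape $f(x) = h(\sum_{j=1}^n g_j(x_j))$ depends on $x$ only through the integer sum $S(x) := \sum_{j=1}^n g_j(x_j)$, which takes values in $\{0, 1, \ldots, n\}$. Hence to fool $f$ with error $\eps$ for all choices of $h$, it suffices to show that the distribution of $S$ under the PRG is $\eps$-close in total variation to the distribution of $S$ under the uniform distribution on $[m]^n$.

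Next I would use the discrete Fourier transform over $\Z_N$ with $N = n+1$ to reduce bounding the total variation distance to bounding the Fourier coefficients. The probability mass function of any $\{0,\ldots,n\}$-valued random variable is determined by its DFT, and the standard inverse-DFT-plus-triangle-inequality argument gives
\[ \dtv(S_{\calG}, S_{\text{unif}}) \leq \frac{1}{2}\sum_{k=0}^{N-1} \abs{\hat{S}_{\calG}(k) - \hat{S}_{\text{unif}}(k)}, \]
where $\hat{S}(k) = \E[\exp(2\pi i k S/N)]$. The crucial observation is that each such Fourier coefficient is the expectation of a Fourier shape: since $S(x) = \sum_j g_j(x_j)$, we have
\[ \exp(2\pi i k S(x)/N) = \prod_{j=1}^n \exp(2\pi i k g_j(x_j)/N), \]
and each factor $x_j \mapsto \exp(2\pi i k g_j(x_j)/N)$ is a function $[m] \to \C_1$, so $\hat{S}(k) = \E[\phi_k(x)]$ for an $(m,n)$-Fourier shape $\phi_k$.

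Finally, I would invoke Theorem \ref{thm:main} with error parameter $\eps' = \eps/(n+1)$ to obtain a generator $\calG$ that $\eps'$-fools every $(m,n)$-Fourier shape. This simultaneously fools all $N$ Fourier coefficients $\hat{S}(k)$, so by the bound above, $\dtv(S_\calG, S_{\text{unif}}) \leq (n+1)\eps'/2 \leq \eps/2$, which in turn gives $\abs{\E[f(\calG(z))] - \E[f(x)]} \leq \eps$ for every combinatorial shape $f$. The seed-length is $O(\log(mn/\eps')(\log\log(mn/\eps'))^2) = O(\log(mn/\eps)(\log\log(mn/\eps))^2)$ since $\log(mn^2/\eps) = O(\log(mn/\eps))$.

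There is no serious obstacle here; the argument is a direct ``DFT to TV'' reduction and all the technical difficulty lies in Theorem \ref{thm:main} itself. The only mild point of care is choosing the modulus $N = n+1$ so that $S$ is supported on a full residue system of $\Z_N$ (which eliminates any aliasing in the inverse DFT) and verifying that the polynomial loss $\eps \mapsto \eps/(n+1)$ is absorbed by the outer $\log(mn/\eps)$ factor—both of which are immediate.
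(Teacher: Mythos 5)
Your proof is correct, but it takes a different route from the paper's. You bound the total variation distance of the sum $S(x)=\sum_j g_j(x_j)$ directly via its discrete Fourier transform modulo $N=n+1$: since $S$ is supported on $\{0,\dots,n\}$, the inverse-DFT/triangle-inequality argument loses only a factor $N=n+1$, and each coefficient $\E[\exp(2\pi i k S/N)]$ is the expectation of an $(m,n)$-Fourier shape, so Theorem \ref{thm:main} with error $\eps/(n+1)$ finishes the job. This is essentially a sharper, $\ell_1$ variant of the paper's Lemma \ref{lem:fouriertotv} (which the paper only deploys for modular tests), and it in fact yields the stronger conclusion that the PRG fools the sum in statistical distance, from which fooling every outer function $h$ is immediate. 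The paper instead observes that, because $S\in\{0,\dots,n\}$, any combinatorial shape is determined by the $n+1$ threshold events $\mathds{1}(S\le\theta)$, i.e.\ by generalized halfspaces, and then invokes Corollary \ref{cor:halfspaces2} with error $\eps/n$; that corollary in turn rests on the Fourier-to-Kolmogorov lemma (Lemma \ref{lem:introfouriertocdf}). Both routes lose only a $\poly(n)$ factor in the error, which is absorbed into the $\log(mn/\eps)$ term, so the seed-lengths match; your argument is somewhat more self-contained (it bypasses the generalized-halfspace machinery), while the paper's is a one-line reduction given the corollaries it has already established.
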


%, which does not rely on results about expansion in Lie groups.

\subsection{Other related work}\label{sec:previouswork}

Starting with the work of Diakonikolas et al.~\cite{DGJSV09}, there
has been a lot of interest in
constructing $\prg$s for halfspaces and related classes such as intersections
of halfspaces and polynomial threshold functions over the domain $\pmo^n$
\cite{DiakonikolasKN10, GopalanOWZ10, HarshaKM12, MekaZ13, Kane11,
  Kane11b,Kane14}. Rabani and Shpilka \cite{RabaniS09} construct optimal
hitting set generators for halfspaces over $\pmo^n$; hitting set
generators are weaker than $\prg$s.

Another line of work gives $\prg$s for halfspaces for the uniform distribution over
the sphere (\emph{spherical caps}) or the Gaussian distribution. For spherical caps, Karnin, Rabani and Shpilka \cite{KarninRS12} gave a $\prg$ with a seed-length of $O(\log n + \log^2(1/\eps))$. For the Gaussian distribution, \cite{Kane14} gave a $\prg$ which achieves a seed-length of $O(\log n + \log^{3/2}(1/\eps))$. Recently, \cite{KothariM14} gave the first $\prg$s for these settings with seedlength $O((\log (n/\eps))(\log\log(n/\eps)))$. Fooling halfspaces over the hypercube is known to be {\em harder} than the Gaussian setting or the uniform distribution on the sphere; hence our result gives a construction with similar parameters up to a $O(\log \log n)$ factor. At a high level, \cite{KothariM14} also uses a iterative dimension reduction approach like in \cite{KaneMN11, CelisRSW13, GopalanMRTV12}; however, the final construction and its analysis are significantly different from ours.

Gopalan {\em et al.} \cite{GopalanOWZ10} gave a generator fooling halfspaces under
product distributions with bounded fourth moments, whose
seed-length is $O(\log(n/\eps)\log(1/\eps))$. 

The present work completely subsumes a manuscript of the authors which
essentially solved the special-case of derandomizing Chernoff bounds
and a special class of halfspaces \cite{GopalanKM14a}.  

\ignore{
Fooling halfspaces over the hypercube is known to be {\em harder} than the Gaussian setting or the uniform distribution on the sphere. Recently, \cite{KothariM14} gave the first $\prg$s for these settings with seedlength $O((\log (n/\eps))(\log\log(n/\eps)))$. Our result gives a construction with similar parameters up to a $O(\log \log n)$ factor. Gopalan {\em
  et al.} \cite{GopalanOWZ10} gave a generator fooling halfspaces under
product distributions where the bias is bounded away from $0$, whose
seed-length is $O(\log(n/\eps)\log(1/\eps))$. We improve on their
seed-length which  while also relaxing the bias constraint.

Another line of work gives $\prg$s for halfspaces for the uniform distribution over
the sphere (\emph{spherical caps}) or the Gaussian distribution.
For spherical caps, Karnin, Rabani and Shpilka \cite{KarninRS12} gave
a $\prg$ with a seed-length of $O(\log n + \log^2(1/\eps))$. For the
Gaussian distribution, \cite{Kane14} gave a $\prg$ which achieves a
seed-length of $O(\log n + \log^{3/2}(1/\eps))$. Recently, Kothari and
Meka \cite{KothariM14} gave a $\prg$ for spherical caps and the
Gaussian distribution with seed-length $\tilde{O}(\log(n/\eps))$. At a
high level, \cite{KothariM14} also uses a iterative dimension
reduction approach like in \cite{KaneMN11, CelisRSW13, GopalanMRTV12};
however, the final construction and its analysis are significantly
different from ours.}

%\emph{Combinatorial shapes} were introduced by \cite{GopalanMRZ13}. De \cite{De14} recently gave a $\prg$ for combinatorial shapes having seed-length $O((\log(n/\eps))^{3/2})$, improving on the $O(\log(n/\eps))^2)$ bound ffrom \cite{GopalanMRZ13}.

\newcommand{\Par}{\ensuremath {\mathsf{Parity}}}

\section{Proof overview}\label{sec:overview}
We describe our $\prg$ for \fshapes\, as in Theorem
\ref{thm:main}. The various corollaries are derived from this Theorem using properties of the discrete Fourier transform of
integer-valued random variables.

Let us first consider a very  simple $\prg$: $O(1)$-wise independent
distributions over $[m]^n$. At a glance, it appears to do very
poorly as it is easy to express the parity of a subset of bits as a \fshape\, and parities are not fooled even by $(n-1)$-wise independence. %Bounded independence is only sufficient to fool parities where $|S| \leq k$.
The starting point for our construction is that bounded independence
does fool a special but important class of \fshapes,  namely those with
polynomially small {\em total variance}.

For a complex valued random variable $Z$, define
the variance of $Z$ as
\[ \sigma^2(Z) = \E\sbkets{|Z - \E[Z]|^2} = \E[|Z|^2] - |\E[Z]|^2.\]
It is easy to verify that
\[ \sigma^2(Z) + |\E[Z]|^2 = E[|Z|^2], \]
so that if $Z$ takes values in $\C_1$, then
\[ \sigma^2(Z) + |\E[Z]|^2 \leq 1. \]

The \emph{total-variance} of a $(m,n)$-\fshape\, $f:[m]^n \to \C_1$ with $f(x) = \prod_{j=1}^n f_j(x_j)$ is defined as
\[ \tvar(f) = \sum_j \sigma^2(f_j(x_j)).\]
To gain some intuition for why this is a natural quantity,
note that $\tvar(f)$ gives an easy upper bound on the expectation of a \fshape:
\begin{equation}
\label{eq:intro1}
\abs{\E_{x \in [m]^n}[f(x)]} = \prod_j \abs{\E[f_j(x_j)]} \leq
\prod_j\sqrt{1 - \sigma^2(f_j(x_j))}  \leq \exp(-\tvar(f)/2).
\end{equation}

This inequality suggests a natural dichotomy for the task of fooling
\fshapes.  It suggests that high variance shapes where $\tvar(f)
\gg \log(1/\eps)$ are {\em easy} in the sense that $\E[f] \ll
\eps$ is small for such \fshapes. So a $\prg$ for such shapes only needs to
ensure that $\E[f]$ is also sufficiently small under the pseudorandom output. %The main challenge is to match the expectation in the low-variance case where $\tvar(f)$ is smaller than $O(\log(1/\eps))$.

To complement the above, we show that if the total-variance $\tvar(f)$ is very small, then generators based on limited independence do fairly well. Concretely, our main technical
lemma says that limited independence fools products of bounded (complex-valued) random
variables, provided that the sum of their variances is small.
\mnote{I think it is better to have the previous version here with the exp(O(k)) floating around.}
\begin{lem}\label{intro:maintech}
Let $Y_1,\ldots,Y_n$ be $k$-wise independent random variables taking
values in $\C_1$. Then, % there exists constant $a_1, a_2$ such that for $k \geq a_1$,
\[ \abs{\E[Y_1\cdots Y_n] - \prod_{j=1}^n \E[Y_j]} \leq \exp(O(k))
\bkets{\frac{\sum_{j=1}^n \sigma^2(Y_j)}{\sqrt{k}}}^{\Omega(k)}.\]
\end{lem}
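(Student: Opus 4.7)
The approach is to centre the variables, expand the product into a subset sum, use $k$-wise independence to kill the low-order terms, and control the remainder via a second-moment calculation.

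Set $\mu_j = \E[Y_j]$ and $Z_j = Y_j - \mu_j$, so that $\E[Z_j]=0$, $\E[|Z_j|^2]=\sigma_j^2$, $|Z_j|\le 2$, and (from the identity $\sigma^2(Z)+|\E Z|^2\le 1$ noted in the excerpt) $|\mu_j|^2+\sigma_j^2\le 1$. The expansion
\[
\prod_j Y_j - \prod_j \mu_j \;=\; \sum_{S \neq \emptyset}\Big(\prod_{i \notin S}\mu_i\Big)\prod_{j \in S}Z_j,
\]
together with the fact that $k$-wise independence factorises $\E[\prod_{j\in S}Z_j]$ as $\prod_{j\in S}\E[Z_j]=0$ whenever $1\le|S|\le k$, reduces the problem to bounding $\bigl|\sum_{s>k}\E[C_s]\bigr|$, where $C_s := \sum_{|S|=s}(\prod_{i\notin S}\mu_i)\prod_{j\in S}Z_j$.

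For each $s>k$ I would apply Cauchy--Schwarz, $|\E[C_s]|\le(\E|C_s|^2)^{1/2}$, and expand $\E|C_s|^2$ as a double sum over pairs $(S,T)$ with $|S|=|T|=s$. The key observation is that on pairs with $|S\cup T|\le k$, $k$-wise independence factorises the joint marginal of $\{Z_j\}_{j\in S\cup T}$, and the zero-mean property of each $Z_j$ forces every surviving term with $S\neq T$ to vanish, leaving the ``diagonal''
\[
\sum_{|S|=s}\Big(\prod_{i\notin S}|\mu_i|^2\Big)\prod_{j\in S}\sigma_j^2 \;=\; [t^s]\prod_j\bigl(|\mu_j|^2+t\sigma_j^2\bigr).
\]
Using $|\mu_j|^2+\sigma_j^2\le 1$, one obtains $|\mu_j|^2+t\sigma_j^2\le 1+(t-1)\sigma_j^2$ and hence $\prod_j(|\mu_j|^2+t\sigma_j^2)\le\exp((t-1)V)$; Cauchy's integral formula on the circle $|t|=s/V$ then gives the $n$-free estimate
\[
[t^s]\prod_j(|\mu_j|^2+t\sigma_j^2)\;\le\;e^{-V}\bkets{\frac{eV}{s}}^{\!s}.
\]

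The main obstacle is the ``off-diagonal'' contribution to $\E|C_s|^2$ from pairs with $|S\cup T|>k$, where $k$-wise independence alone does not factorise the joint marginal. I would handle these by iterated Cauchy--Schwarz, splitting $S\cup T$ into sub-blocks of size at most $k$ inside which the marginal does factorise, at the cost of replacing $\sigma_j^2$ by higher moments $\E[|Z_j|^{2^\ell}]\le 4^{2^{\ell-1}-1}\sigma_j^2$ (using $|Z_j|\le 2$); after a constant number of levels these absorb into an overall factor of $\exp(O(s))$. Combining with the diagonal gives $|\E[C_s]|\le\exp(O(s))(eV/s)^{s/2}$, and summing this rapidly decaying series over $s>k$ produces the claimed bound $\exp(O(k))(V/\sqrt k)^{\Omega(k)}$. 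The crucial ingredient is the $n$-free generating-function bound in the previous step, which turns what would otherwise be a naive per-$S$ estimate (with $n$-dependent sum $\binom{n}{s}$) into a bound depending only on $V$ and $k$.
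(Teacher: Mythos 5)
Your reduction to bounding $\sum_{s>k}\E[C_s]$ is fine (terms with $1\le|S|\le k$ do vanish under $k$-wise independence), but the step you call the ``key observation'' is vacuous, and this breaks the whole argument. When $s=|S|>k$, \emph{every} pair $(S,T)$ with $|S|=|T|=s$ satisfies $|S\cup T|\ge s>k$ --- including the diagonal pairs $S=T$. So $k$-wise independence factorizes none of the terms in the expansion of $\E|C_s|^2$: you are not entitled to replace $\E\bigl[\prod_{j\in S}|Z_j|^2\bigr]$ by $\prod_{j\in S}\sigma_j^2$, nor to conclude that the off-diagonal terms vanish, because these are moments of degree $>k$ (indeed degree up to $2s$), about which a $k$-wise independent distribution promises nothing. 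Your generating-function estimate $[t^s]\prod_j(|\mu_j|^2+t\sigma_j^2)\le e^{-V}(eV/s)^s$ is correct as a bound on the \emph{diagonal of the truly independent case}, but that is not the quantity you need to bound. The proposed repair by iterated Cauchy--Schwarz over blocks of size $\le k$ does not rescue this: (i) once you apply Cauchy--Schwarz the factors $Z_j$ appear through even powers, so the zero-mean cancellation that kills the $S\neq T$ terms in the independent case is irretrievably lost, and the off-diagonal sum then contributes on the order of $\binom{n}{s}^2$ nonnegative terms --- exactly the $n$-dependent count your approach was designed to avoid; and (ii) the number of Cauchy--Schwarz levels is $\Theta(\log(s/k))$ with $s$ ranging up to $n$, not a constant, so the claimed absorption of the losses into $\exp(O(s))$ is unsubstantiated. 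In short, the terms with $|S|>k$ are precisely where all the difficulty of the lemma lives, and your argument has no valid handle on them.

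For contrast, the paper never expands into subsets of unbounded size. It writes $Y_i=\mu_i(1+Z_i)$, passes to $W_i=\log(1+Z_i)$, and approximates $\exp\bigl(\sum_i W_i\bigr)$ by its Taylor truncation of degree $<k$; that truncation is a polynomial of degree $<k$ in the $W_i$'s and hence is \emph{exactly} matched by $k$-wise independence, while the Taylor remainder is controlled by a $k$-th moment bound for sums of bounded centered variables (again a degree-$k$ quantity, so computable under limited independence). The hypotheses of boundedness and nonnegligible means required for that core step are then removed by a truncation plus inclusion--exclusion argument and a separate treatment of coordinates with tiny means. If you want to salvage your outline, you would need some mechanism of this kind that only ever evaluates moments of degree at most $k$; as written, the proposal has a genuine gap.
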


We defer discussion of the proof to Section
\ref{sec:coretechintro}, and continue the description of our $\prg$
construction.  Recall that we are trying to fool a $(m,n)$-\fshape\,
$f:[m]^n \to \C_1$ with $\tvar(f) \leq O(\log(1/\eps)$ to error $\epsilon=\poly(1/nm)$. It is helpful to
think of the desired error $\epsilon$ as being fixed at the beginning
and staying unchanged through our iterations, while
$m$ and $n$ change during the iterations. Generating $k$-wise independent distributions over $[m]^n$ takes
$O(k \log(mn))$ random bits.  Thus if we use $k = O(\log(1/\eps))$-wise
independence, we would achieve error $\eps$, but with seed-length
$O(\log(1/\eps)\log(mn))$ rather than $O(\log(1/\eps))$.

\mnote{This paragraph has changed significantly ...}
On the other hand, if $\tvar(f) \leq 1/(mn)^c$ for a fixed constant $c$, then choosing $k = O(\log(1/\eps)/(\log mn))$-wise independence is enough to get error $\epsilon$ while also achieving seed-length $O(k\log(mn)) = O(\log(1/\epsilon))$ as desired. We exploit this observation by combining the use of limited independence with the recent {\em iterative-dimension-reduction} paradigm of \cite{KaneMN11, CelisRSW13, GopalanMRTV12}. Our construction reduces the problem of fooling \fshapes\, with $\tvar(f) \leq O(\log(1/\eps))$ through a sequence of iterations to fooling \fshapes\ where the total variance is polynomially small in $m, n$ in each iteration and then uses limited independence in each iteration.

\ignore{To reduce the seed-length, we combine the use of limited independence with the recent gradually increasing independence paradigm  of \cite{KaneMN11, CelisRSW13,
  GopalanMRTV12}. Our goal is to use seed-length
$O(\log(1/\eps))$  per iteration, which only allows for $k =
\log(1/\eps)/\log(mn)$-wise independence.
Lemma \ref{intro:maintech} implies that such  distributions over fool $f$ with error at most
$(\tvar(f))^{-\Omega(\log(1/\eps)/\log(mn))}$. This is bounded by $\eps^{O(1)}$ for
if $\tvar(f)  =O(1/(mn)^c)$ for some constant $c >0$. Our construction reduces the problem of
fooling \fshapes\ with $\tvar(f) \leq O(\log(1/\eps))$ to a sequence of
iterations, each fooling \fshapes\ where the total variance is polynomially small in $m$ and
$n$ in each iteration and then using limited independence.}

To conclude our high-level description, our generator consists of three modular parts. The
first is a generator for \fshapes\ with high total variance: $\tvar(f)
\geq  \poly(\log(1/\epsilon))$. We then give two reductions to handle
low variance \fshapes: an alphabet-reduction step reduces the alphabet $m$ down to $\sqrt{m}$
and leaves $n$ unchanged, and a dimension-reduction step that reduces
the dimension from $n$ to $\sqrt{n}$ while possibly blowing up the
alphabet to $\poly(1/\eps)$. We describe each of these parts in more
detail below.

\subsection{Fooling high-variance \fshapes}
We construct a $\prg$ with seed-length $O(\log(mn/\epsilon) \log
\log(1/\epsilon))$ which $\epsilon$-fools $(m,n)$-\fshapes\, $f$ when
$\tvar(f) \geq (\log(1/\epsilon))^{C}$ for some sufficiently large
constant $C$.  We build the generator in two steps.

In the first step, we build a $\prg$ with seed-length $O(\log(mn))$
which achieves constant error  for $(m,n)$-\fshapes\, $f$
with $\tvar(f) \geq 1$. In the second step, we drive the error down to
$\epsilon$ as follows. We hash the coordinates into roughly
$(\log(1/\epsilon))^{O(1)}$ buckets, so that for at least
$\Omega(\log(1/\epsilon))$ buckets, $f$ restricted to the coordinates
within the bucket has total-variance at least $1$. We use the
$\prg$ with constant error within each bucket, while the seeds
across buckets are recycled using a $\prg$ for small-space
algorithms. This construction is inspired by the construction of
small-bias spaces due to Naor and Naor \cite{NaorN93}; the difference
being that we use generators for space bounded algorithms for
amplification, as opposed to expander random walks as done in \cite{NaorN93}.

%The above gives us a handle on high-variance \fshapes. To handle general \fshapes, we simplify the problem by series of reductions. We will exploit this observation by going through a sequence of reductions which gradually reduce the total-variance of \fshapes\, we have to study. The first step in this direction is the following which says that we can always assume that the \emph{alphabet-size} $m$ is polynomially bounded in terms of the \emph{dimension} $n$.

\subsection{Alphabet-reduction}
The next building block in our construction is
\emph{alphabet-reduction} which helps us assume without loss of
generality that the alphabet-size $m$ is polynomially bounded in terms
of the dimension $n$. This is motivated by the construction of
\cite{GopalanMRTV12}.

Concretely, we show that constructing an $\epsilon$-$\prg$ for
$(m,n)$-\fshapes\, can be reduced to that of constructing an
$\epsilon'$-$\prg$ for $(n^4,n)$-\fshapes\, for $\epsilon' \approx
\epsilon/(\log m)$. The alphabet-reduction step consists of $(\log \log m)$ steps where in
each step we reduce fooling $(m,n)$-\fshapes\, for $m > n^4$, to that
of fooling $(\sqrt{m},n)$-\fshapes, at the cost of
$O(\log(m/\epsilon))$ random bits.

We now describe a single step that reduces the alphabet from $m$ to $\sqrt{m}$. Consider the following procedure
for generating a uniformly random element in $[m]^n$:
\begin{itemize}
\item For $D \approx \sqrt{m}$, sample uniformly random subsets
\[ S_1 = \{X[1,1],X[1,2],\ldots,X[D,1]\},\ldots,S_n = \{X[1,n], X[2,n],\ldots,X[D,n]\} \subseteq [m].\]
\item Sample $Y = (Y_1,\ldots,Y_n)$ uniformly at random from $[D]^n$.
\item Output $(Z_1,\ldots,Z_n)$, where $Z_j = X[Y_j,j]$.
\end{itemize}
Our goal is to derandomize this procedure.
The key observation is that once the subsets $S_1,\ldots,S_n$ are
chosen, we are left with a $(D,n)$-\fshape\ as a function of
$Y$. So the choice of $Y$ can be derandomized using a $\prg$ for
\fshapes\ with  alphabet $[D]$, and  it suffices to derandomize the choice of the
$X$'s. A calculation shows that (because the $Y$'s are uniformly
random), derandomizing the choice of the $X$'s reduces to that of
fooling a \fshape\, of total-variance $1/m^{\Omega(1)}$. Lemma
\ref{intro:maintech} implies that this can be done with limited
independence.

\subsection{Dimension-reduction for low-variance \fshapes}

We show  that constructing an $\epsilon$-$\prg$ for $(n^4,
n)$-\fshapes\, $f$ with $\tvar(f) \leq \poly(\log(mn/\epsilon))$ can
be reduced to that of $\epsilon'$-fooling
$(\poly(n/\epsilon),\sqrt{n})$-\fshapes\, for $\epsilon' \approx
\epsilon/\log n$. Note that here we decreased the dimension at the
expense of increasing the alphabet-size. However, this can be fixed by
employing another iteration of alphabet-reduction. This is the
reason why considering $(m,n)$-\fshapes\, for arbitrary $m$ helps us
even if we were only trying to fool $(2,n)$-\fshapes. The dimension-reduction proceeds as follows:
\begin{enumerate}
\item We first hash the coordinates into roughly $\sqrt{n}$ buckets
  using a $k$-wise independent hash function $h \in_u \hh = \{h:[n]
  \leftarrow [\sqrt{n}]\}$ for $k \approx O(\log(n/\epsilon)/\log
  n)$. Note that this only requires $O(\log(n/\epsilon))$ random
  bits.
\item For the coordinates within each bucket we use a
  $k'$-wise independent string in $[m]^n$ for $k' \approx
  O(\log(n/\epsilon)/\log n)$. We use true independence across
  buckets. Note that this requires $\sqrt{n}$ independent seeds of length $r = O(\log(n/\epsilon))$.
\end{enumerate}
While the above process requires too many random bits by itself, it
is easy to analyze. We then reduce the seed-length by observing that
if we fix the hash function $h$, then what we are left with as a function of the
seeds used for generating the symbols in each bucket is a
$(2^r \leq \poly(n/\eps),\sqrt{n})$-\fshape. So rather than using
independent seeds, we can use the output of a generator for such \fshapes.

The analysis of the above construction again relies on Lemma
\ref{intro:maintech}. The intuition is that since $\tvar(f) \leq
\poly(\log(n/\epsilon))$, and we are hashing into $\sqrt{n}$ buckets,
for most hash functions $h$ the \fshape\, restricted to each bucket
has variance  $O(1/n^c)$ for some fixed constant $c > 0$. By Lemma \ref{intro:maintech}, limited
independence fools such \fshapes.

\subsection{Main Technical Lemma}\label{sec:coretechintro}

The lemma can be seen as a generalization of a similar result proved
for real-valued random variables in \cite{GopalanY14}(who also have an
additional restriction on the means of the random variables
$Y_j$). However, the generalization to complex-valued variables is
substantial and seems to require different proof techniques.

We first consider the case where the $Y_j$'s not
only have small total-variance, but also have small
absolute deviation from their means. Concretely, let $Y_j = \mu_j(1 +
Z_j)$ where $\E[Z_j] = 0$ and $|Z_j| \leq 1/2$. In this case, we do a
variable change $W_j = \log(1+Z_j)$ (taking the principal branch of
the algorithm) to rewrite $$\prod_j Y_j = \prod_j \mu_j (1+ Z_j) = \prod_j \mu_j \cdot \exp\bkets{\sum_j W_j}.$$
We then argue that $\exp(\sum_j W_j)$ can be approximated by a
polynomial $P(W_1,\ldots,W_n)$ of degree less than $k$ with small expected error. The polynomial $P$ is obtained by truncating the Taylor series expansion of the $\exp(\;)$ function. Once, we have such a low-degree polynomial approximator, the claim follows as limited independence fools low-degree polynomials.

To handle the general case  where $Z_j$'s are not necessarily bounded, we use an inclusion-exclusion argument and exploit the fact that with high probability, not many of the $Z_j$'s (say more than $k/2$) will deviate too much from their expectation. We leave the details to the actual proof.
%%% Local Variables:
%%% mode: latex
%%% TeX-master: "main"
%%% End:

% !TEX root = main.tex
\section{Preliminaries}
\label{sec:prelims}

We start with some notation:
\begin{itemize}
%\item Define the function $e:\R\rightarrow\C$ by $e(x)=\exp(2\pi i x)$.
%\item For $v \in \R^n$, define $\phi_{v}(x) = \exp(2 \pi i (v \cdot x))$.
\item For $v \in\R^n$ and a hash function $h:[n] \to [m]$, define
\begin{align}
\label{eq:hv}
h(v) = \sum_{j=1}^m \|v_{|h^{-1}(j)}\|_2^4
\end{align}
\item $\C_1 = \{z: z \in \C, |z| \leq 1\}$ be the unit disk in the complex plane.
\item For a complex valued random variable $Z$,
$$Var(Z) \equiv \sigma^2(Z) \equiv \E\sbkets{|Z - \E[Z]|^2}.$$
\item Unless otherwise stated $c,C$ denote universal constants.
\item Throughout we assume that $n$ is sufficiently large and that
$\delta,\epsilon>0$ are sufficiently small.
\item For positive functions $f,g,h$ we write $f = g+ O(h)$ when $|f-g| = O(h)$. 
\item For a integer-valued random variable $Z$, its Fourier transform is given as follows: for $\alpha \in [0,1]$, $\hat{Z}(\alpha) = \E[\exp(2 \pi i \alpha Z)]$. Further, given the Fourier coefficients $\hat{Z}(\alpha)$, one can compute the probability density function of $Z$ as follows: for any integer $j$,
$$\pr[Z = j] = \int_{0}^1 \exp(2\pi i j \alpha) \hat{Z}(\alpha) \,d \alpha.$$
\end{itemize}

\begin{defn}
For $n,m,\delta>0$ we say that a family of hash functions $\hh = \{h:[n] \to [m]\}$ is $\delta$-biased if for any $r \leq n$ distinct indices $i_1,i_2,\ldots,i_r \in [n]$ and $j_1,\ldots,j_r \in [m]$,
\[ \pr_{h\in_u \hh} \sbkets{ h(i_1) = j_1 \,\wedge \,h(i_2) = j_2
  \,\wedge\,\cdots\,\wedge h(i_r) = j_r} = \frac{1}{m^r} \pm \delta. \]

We say that such a family is $k$-wise independent if the above holds with $\delta=0$ for all $r\leq k$.

We say that a distribution over $\dpm^n$ is $\delta$-biased or $k$-wise independent if the corresponding family of functions $h:[n]\to[2]$ is.
\end{defn}

Such families of functions can be generated efficiently using small seeds.
\begin{fact}\label{HashFamilyFact}
For $n,m,k,\delta>0$, there exist explicit $\delta$-biased families of hash functions $\hh = \{h:[n]\to [m]\}$ that can be generated efficiently from a seed of length $s=O(\log(n/\delta))$. There are also, explicit $k$-wise independent families that can be generated efficiently from a seed of length $s=O(k\log(nm))$.

Taking the pointwise sum of such generators modulo $m$ gives a family of hash functions that is both $\delta$-biased and $k$-wise independent generated from a seed of length $s=O(\log(n/\delta)+k\log(nm))$.
\end{fact}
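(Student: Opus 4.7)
The plan is to treat the two constructions separately and then combine them by pointwise addition modulo $m$.

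For the $k$-wise independent family, I would use the classical polynomial-based construction of Joffe. Embed $[n]$ and $[m]$ into a finite field $\mathbb{F}_q$ with $q \geq \max(n,m)$; sample $a_0, \dots, a_{k-1} \in \mathbb{F}_q$ uniformly (using $O(k \log q) = O(k \log(nm))$ random bits) and define $h(i) = (\sum_{j=0}^{k-1} a_j i^j) \bmod m$, where $i$ is identified with an element of $\mathbb{F}_q$. Full $k$-wise independence over $\mathbb{F}_q$ is immediate from Lagrange interpolation; one carries it over to $[m]$ either by choosing $q$ to be a multiple of $m$, or by taking $q$ large enough so that the rounding error in reduction modulo $m$ is negligible for the use in the paper.

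For the $\delta$-biased family, I would reduce to the binary small-bias case. Writing each output symbol as a string in $\zo^{\lceil \log m \rceil}$ yields a distribution on $\zo^{n \lceil \log m \rceil}$; invoking the Naor--Naor or Alon--Goldreich--H\aa stad--Peralta construction one gets a $\delta$-biased distribution on this bit string from a seed of length $O(\log(n/\delta))$, absorbing the $\log\log m$ overhead. The one thing that must be checked is that $\delta$-bias in the Fourier sense gives the pointwise $L_\infty$ bound used in the paper's definition: for any $r$ coordinates $i_1,\ldots,i_r$ and values $j_1,\ldots,j_r$, one writes the indicator of $\{h(i_t)=j_t\}_t$ as an $m^{-r}$-normalized product of character sums over $\mathbb{Z}_m$ (or of parities over $\zo$ after the binary encoding), expands, and observes that the trivial term reproduces $m^{-r}$ while the contribution of every non-trivial character is at most $\delta$ and these contributions cancel up to an additive $\delta$ by the triangle inequality. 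For general $m$ one can either apply this reduction through $\lceil \log m \rceil$ bits, or work directly over $\mathbb{Z}_m^n$ using additive characters.

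To obtain the combined family, sample independent seeds $y, z$, draw $h_1$ from the $\delta$-biased family and $h_2$ from the $k$-wise independent family, and set $h(i) = h_1(i) + h_2(i) \bmod m$. Fix any $r \leq n$ distinct indices $i_1,\ldots,i_r$ and targets $j_1,\ldots,j_r$. Conditioning on $h_2$, the event $\{h(i_t)=j_t\}_t$ is equivalent to $\{h_1(i_t) = j_t - h_2(i_t) \bmod m\}_t$, which by $\delta$-bias of $h_1$ has probability $m^{-r} \pm \delta$ regardless of $h_2$; conditioning instead on $h_1$ and using $k$-wise independence of $h_2$ when $r \le k$ gives exact probability $m^{-r}$. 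Summing the two seed lengths gives the claimed $O(\log(n/\delta) + k \log(nm))$.

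The main obstacle is the translation from the Fourier-style definition of small bias standard in the literature to the pointwise $L_\infty$ definition used here, together with careful handling of alphabets $m$ that are not prime powers; once these routine technicalities are taken care of, both individual constructions and their combination step are essentially immediate.
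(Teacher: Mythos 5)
The paper does not actually prove this Fact: it is invoked as a standard citation (small-bias spaces \`a la Naor--Naor/AGHP and polynomial-based $k$-wise independent families), so there is no internal proof to compare against; your proposal is essentially a reconstruction of those standard constructions. The two components that matter most for this paper are handled correctly: your conversion from Fourier bias (over additive characters of $\Z_m$) to the pointwise definition used here is right, since the $m^{-r}$ normalization in the character expansion exactly offsets the $m^{r}-1$ nontrivial terms, so Fourier bias $\delta$ gives pointwise bias $\delta$ with no loss; and your combination step is right, since conditioning on $h_2$ shows the sum inherits $\delta$-bias from $h_1$ for every $r\leq n$, while conditioning on $h_1$ and using $k$-wise independence of $h_2$ gives the exact $m^{-r}$ probability for $r\leq k$.

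Two of your patches for general $m$ do not work as stated, though both are fixable by standard means. First, for exact $k$-wise independence you cannot ``choose $q$ to be a multiple of $m$'': $q$ must be a prime power, so this only covers $m$ a power of that prime, and the fallback of taking $q$ large and absorbing the rounding error yields only approximately $k$-wise independent variables, not the $\delta=0$ clause of the Fact (adequate for the paper's applications, but not the statement). The standard exact fix is a balanced ($\F_p$-linear) surjection from $\F_{p^b}$ onto a prime-power range, combined with a CRT decomposition of $m$ for general $m$. Second, for the $\delta$-biased family, encoding each symbol with $\lceil\log m\rceil$ bits and using a binary small-bias space gives reference probability $2^{-r\lceil\log m\rceil}$ rather than $m^{-r}$ when $m$ is not a power of two, and repairing this with longer per-symbol encodings fails because the event $\{h(i_t)=j_t\}_{t\leq r}$ becomes a union of exponentially many (in $r$) bit patterns, so the per-pattern bias cannot simply be union-bounded. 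Your alternative suggestion --- working directly with $\Z_m$-characters --- is the correct route; known explicit constructions then give seed $O(\log(nm/\delta))$, where the extra $\log m$ is harmless for the combined bound, and in the paper's actual uses the range is either $\{\pm 1\}$ or a bucket count chosen by the construction (which may be taken to be a power of two), so the stated $O(\log(n/\delta))$ form is available where it is needed.
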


\subsection{Basic Results}

We start with the simple observation that to $\delta$-fool an $(m,n)$-\fshape\,$f$, we can assume the functions in $f$ have bit-precision $2\log_2(n/\delta)$. This observation will be useful when we use PRGs for small-space machines to fool \fshapes\, in certain parameter regimes. 
\begin{lem}\label{lem:bounded}
If a $\prg$ $\calG:\zo^r \to [m]^n$ $\delta$-fools $(m,n)$-\fshapes\, $f = \prod_j f_j$ when $\log(f_j)$'s have bit precision $2\log_2(n/\delta)$, then $\calG$ fools all $(m,n)$-\fshapes\, with error at most $2\delta$.
\end{lem}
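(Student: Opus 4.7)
The plan is to reduce any Fourier shape to one whose logarithms lie on a fine grid compatible with the bit-precision hypothesis, and then apply the hypothesis. Given $f(x)=\prod_{j=1}^n f_j(x_j)$ with $f_j:[m]\to\C_1$, I will construct $\tilde f_j:[m]\to\C_1$ so that (i) $\log \tilde f_j$ takes values in the grid $\eta\cdot(\Z+i\Z)$ with $\eta=2^{-2\log_2(n/\delta)}=(\delta/n)^2$, matching the bit-precision assumption, and (ii) $|\tilde f_j(x)-f_j(x)|\le \delta/(2n)$ pointwise.

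For $(j,x)$ with $|f_j(x)|\ge \delta/(4n)$, round $\log f_j(x)$ coordinatewise to the nearest point of the grid to get $\log \tilde f_j(x)$. Since $|f_j|\le 1$ we have $\log|f_j|\le 0$, so the rounded value has non-positive real part and $|\tilde f_j(x)|\le 1$. The rounding error in $\log f_j$ is $O(\eta)$, so using $|e^z-1|\le 2|z|$ for $|z|\le 1$,
\[ |\tilde f_j(x)-f_j(x)|=|f_j(x)|\cdot\bigl|e^{\log \tilde f_j-\log f_j}-1\bigr|\le O(\eta)=O\bigl((\delta/n)^2\bigr)\le \frac{\delta}{2n}. \]
For $(j,x)$ with $|f_j(x)|<\delta/(4n)$ (including the case $f_j(x)=0$, where $\log f_j(x)$ is undefined), set $\tilde f_j(x)$ to any fixed value of magnitude at most $\delta/(4n)$ whose logarithm lies on the grid---say a purely real value with $\log \tilde f_j(x)$ equal to the nearest grid multiple of $\eta$ to $-\log(4n/\delta)$. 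Then trivially $|\tilde f_j(x)-f_j(x)|\le|\tilde f_j(x)|+|f_j(x)|\le \delta/(2n)$ as well.

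A standard telescoping argument using $|f_j|,|\tilde f_j|\le 1$ then gives the pointwise bound $|\tilde f(x)-f(x)|\le \sum_j|\tilde f_j(x)-f_j(x)|\le \delta/2$. This bound survives taking expectations under any distribution on $[m]^n$, so both $|\E_x[f(x)]-\E_x[\tilde f(x)]|$ and $|\E_y[f(\calG(y))]-\E_y[\tilde f(\calG(y))]|$ are at most $\delta/2$. Combining these with the hypothesis $|\E_x[\tilde f(x)]-\E_y[\tilde f(\calG(y))]|\le \delta$ via the triangle inequality yields $|\E_x[f(x)]-\E_y[f(\calG(y))]|\le 2\delta$, as required.

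The one substantive obstacle is the near-zero case: when $|f_j(x)|$ is tiny, $\log f_j(x)$ has unbounded magnitude (or is undefined when $f_j(x)=0$) and so cannot be placed on a fixed grid of spacing $\eta$ by naive rounding. This is handled by the observation that in this regime $f_j(x)$ is itself so small that replacing it by any surrogate of comparable magnitude contributes at most $\delta/(2n)$ to the coordinate error, which is within the pointwise budget; the rest of the argument is routine rounding bookkeeping.
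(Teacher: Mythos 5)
Your proof is correct and follows essentially the same route as the paper's: replace each $f_j$ by an $\tilde f_j$ whose $\log$ has the allowed bit precision, bound the pointwise error coordinatewise (you use $\delta/(2n)$ per factor, telescoping to $\delta/2$), and finish with the triangle inequality against the hypothesis on $\tilde f$. Your explicit handling of the near-zero/zero case, where $\log f_j(x)$ is huge or undefined and naive truncation breaks down, is a detail the paper's one-line truncation argument glosses over, but it does not change the nature of the argument.
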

\begin{proof}
Consider an arbitrary $(m,n)$-\fshape\, $f:[m]^n \to \C_1$ with $f = \prod_j f_j$. Let $\tilde{f_j}:[m] \to \C_1$ be obtained by truncating the $\log(f_j)$'s to $2\log_2(n/\delta)$ bits. Then, $|f_j(x_j) - \tilde{f_j}(x_j)| \leq \delta/n$ for all $x_j \in [m]$. Therefore, if we define $\tilde{f} = \prod_j \tilde{f_j}$, then for any $x \in [m]^n$, (as the $f_j$'s and $\tilde{f}_j$'s are in $\C_1$)
$$\abs{f(x) - \tilde{f}(x)} = \abs{\prod_j f_j(x_j) - \prod_j \tilde{f}(x_j)} \leq \sum_j \abs{f_j(x_j) - \tilde{f}(x_j)} \leq \delta. $$
The claim now follows as the above inequality holds point-wise and by assumption, $\calG$ $\delta$-fools $\tilde{f}$.
\end{proof}

We collect some known results about pseudorandomness and prove some
other technical results that will be used later.

\begin{comment}
We will use the following result from \cite{GopalanMRZ13} giving $\prg$s
for signed majorities.
\begin{thm}\cite{GopalanMRZ13}
\label{simpleGeneratorThm}
For $n,\eps>0$ there exists an explicit pseudorandom generator, $Y\in \dpm^n$ generated from a seed of length $s=O(\log(n)+\log^2(1/\epsilon))$ so that for any $v\in\zpm^n$ and $X \in_u \dpm^n$, we have that $\dtv(v\cdot Y,v\cdot X)\leq \epsilon$.
\end{thm}
\end{comment}

We shall use $\prg$s for small-space machines or read-once branching programs (ROBP) of Nisan \cite{Nisan92}, \cite{NisanZ96} and Impagliazzo, Nisan and Wigderson \cite{ImpagliazzoNW94}. We extend the usual definitions of read-once branching programs to compute complex-valued functions; the results of \cite{Nisan92}, \cite{NisanZ96}, \cite{ImpagliazzoNW94} apply to this extended model readily\footnote{This is because these results in fact give guarantees in terms of statistical distance.}. 
\begin{definition}[$(S,D,T)$-ROBP]
An $(S,D,T)$-ROBP $M$ is a layered directed graph with $T+1$ layers and $2^S$ vertices per layer with the following properties.
\begin{itemize}
\item The first layer has a single \emph{start} node and the vertices in the last layer are labeled by complex numbers from $\C_1$.%last layer has two nodes labeled $0,1$ respectively.
\item A vertex $v$ in layer $i$, $0\leq i < T$ has $2^D$ edges to layer $i+1$ each labeled with an element of $\zo^D$.
\end{itemize}
A graph $M$ as above naturally defines a function $M:\left(\zo^D\right)^T \to \C_1$ where on input $(z_1,\ldots,z_T) \in \left(\zo^D\right)^T$ one traverses the edges of the graph according to the labels $z_1,\ldots,z_T$ and outputs the label of the final vertex reached.
\end{definition}

\begin{thm}[\cite{Nisan92}, \cite{ImpagliazzoNW94}]\label{th:inwprg}
There exists an explicit $\prg$ $\calG^{INW}:\zo^r \to \left(\zo^D\right)^T$
which $\epsilon$-fools $(S,D,T)$-branching programs and has
seed-length $r = O(D + S \log T + \log(T/\delta) \cdot (\log T))$.
\end{thm}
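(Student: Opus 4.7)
The plan is to adapt the classical Impagliazzo--Nisan--Wigderson recursive construction, extended to branching programs with $\C_1$-valued outputs. I build generators $G_k:\zo^{r_k}\to(\zo^D)^{2^k}$ by induction on $k$, so that $G_{\log T}$ is the desired pseudorandom generator. The base case is $G_0 = \mathrm{id}_{\zo^D}$, which trivially fools $(S,D,1)$-ROBPs. For the recursive step, given $G_{k-1}$, define
\[
G_k(y,h) \;=\; \bigl(G_{k-1}(y),\; G_{k-1}(\mathrm{Ext}_k(y,h))\bigr),
\]
where $\mathrm{Ext}_k:\zo^{r_{k-1}}\times\zo^{d_k}\to\zo^{r_{k-1}}$ is a strong seeded $(r_{k-1}-S,\,\delta/T)$-extractor. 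Standard extractor constructions (Guruswami--Umans--Vadhan, for instance) yield such $\mathrm{Ext}_k$ with seed length $d_k = O(S + \log(T/\delta))$.

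I then analyze $G_k$ by induction, showing that $G_k$ fools every $(S,D,2^k)$-ROBP to error $\epsilon_k$ with $\epsilon_k \le 2\epsilon_{k-1} + \delta/T$. Fix such an ROBP $M$, split it at its midpoint into $A$ (first $2^{k-1}$ layers) and $B$ (last $2^{k-1}$ layers), and index the intermediate state by $q\in\zo^S$. The key step (the ``INW Lemma'') is that conditioning on $A$ reaching a particular state $q$ partitions the $r_{k-1}$-bit seed $y$ into at most $2^S$ classes, leaving $y$ with residual min-entropy at least $r_{k-1}-S$. The strong-extractor property then guarantees $\mathrm{Ext}_k(y,h)$ is $\delta/T$-close to uniform conditioned on this intermediate state, so that $G_{k-1}(\mathrm{Ext}_k(y,h))$ correctly simulates $B$ starting from $q$. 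Combining this with the inductive hypothesis applied to $A$ yields the claimed recursion, which telescopes to $\epsilon_{\log T}\le \delta$ over the $\log T$ levels. Extending from real-valued to $\C_1$-valued outputs is harmless, since the analysis only invokes statistical distance between distributions over intermediate states, and any function into $\C_1$ is fooled to at most twice the statistical distance.

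Unrolling the seed-length recursion gives $r_{\log T} = D + \sum_{k=1}^{\log T} d_k = O(D + (\log T)(S + \log(T/\delta)))$, matching the stated bound. Explicitness of $G_{\log T}$ follows because both the base map and the extractors are explicit, and the recursion only multiplies evaluation time by a polynomial factor.

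The main obstacle in this plan is the extractor step. Since $y$ is reused both to produce the first-half output and as the weak source fed into the extractor, I must argue that after conditioning on the first half's branching-program state, $y$ retains enough min-entropy for the extractor to recover near-uniformity \emph{without} incurring a union-bound factor of $2^S$ over intermediate states. Using a \emph{strong} (rather than standard) extractor, with its error budget set to $\delta/T$ per level so that the overall error remains $O(\delta)$ after telescoping, is precisely what avoids this blow-up. Verifying that extractors with the claimed short seed length $O(S + \log(T/\delta))$ exist, and that the errors telescope cleanly over $\log T$ levels, is the technical heart of the argument.
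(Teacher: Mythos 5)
The paper does not actually prove this theorem: it is imported from \cite{Nisan92} and \cite{ImpagliazzoNW94}, with only a footnote observing that the extension to $\C_1$-valued ROBPs is immediate because those results are established via statistical-distance guarantees on the distribution over intermediate states. Your proposal reconstructs the INW recursion in its extractor-based formulation, which is exactly the approach of the cited works; the seed-length accounting $O(D+(\log T)(S+\log(T/\delta)))$ and the reduction of $\C_1$-valued outputs to statistical distance are both fine.

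One step, as written, is not correct and needs the standard repair: conditioning the seed $y$ on the event that the first half of the program reaches a particular state $q$ does \emph{not} leave $y$ with min-entropy $r_{k-1}-S$. The $2^S$ conditioning classes can be arbitrarily small, and a class reached with probability $p$ only guarantees conditional min-entropy $r_{k-1}-\log(1/p)$. The usual fix is to discard the states reached with probability below $\delta/(T2^S)$ (their total contribution to the error is at most $\delta/T$) and to run the extractor at min-entropy threshold $r_{k-1}-S-\log(T/\delta)$, or equivalently to argue via average conditional min-entropy combined with the strong-extractor property. With that adjustment the per-level seed length is still $O(S+\log(T/\delta))$, and your telescoping of errors and the final seed-length bound go through unchanged.
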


\begin{thm}[\cite{NisanZ96}]\label{th:nz}
For all $C > 1$ and $0 < c < 1$, there exists an explicit $\prg$ $\calG^{NZ}:\zo^r \to \left(\zo^D\right)^T$
which $\epsilon$-fools $(S,S,S^C)$-branching programs for $\epsilon = 2^{-\log^{1-c}S}$ and has
seed-length $r = O(S)$.
\end{thm}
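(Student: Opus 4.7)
The plan is to implement the classical Nisan--Zuckerman recursive extractor-based recycling construction. The starting point is Nisan's generator (subsumed by Theorem~\ref{th:inwprg}), which already gives seed length $O(S\log T) = O(S\log S)$ for $T = S^C$; the goal is to shave the extra $\log S$ factor. The key insight is that at any intermediate layer of an $(S,D,T)$-ROBP the vertex reached carries only $S$ bits of information about the random input, so once one half of the computation has been simulated from a ``master source'' $x\in\zo^n$ with $n\gg S$, most of the min-entropy of $x$ survives and can be extracted to provide the randomness for the other half.

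The construction is a balanced binary recursion of depth $L=\log T$. Fix $n=AS$ for a sufficiently large constant $A$ and let $d=O(\log(n/\varepsilon))$. Take an explicit strong seeded extractor $E:\zo^n\times\zo^d\to\zo^n$ that extracts with error $\varepsilon/T^2$ from sources of min-entropy $n/2$, together with fresh extractor seeds $y_1,\dotsc,y_L\in\zo^d$. Define
\[
G_\ell(x,y_1,\dotsc,y_\ell)\;=\;G_{\ell-1}(x,y_1,\dotsc,y_{\ell-1})\;\circ\;G_{\ell-1}\bigl(E(x,y_\ell),y_1,\dotsc,y_{\ell-1}\bigr),
\]
with $G_0(x)$ being the first $D$ bits of $x$. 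The total seed length is
\[
n+Ld\;=\;O(S)+O\bigl(\log(n/\varepsilon)\cdot\log T\bigr)\;=\;O(S)+O(\log^{2-c}S)\;=\;O(S),
\]
where the last equality uses $\varepsilon=2^{-\log^{1-c}S}$ and $T=S^C$.

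The analysis is an induction on $\ell$ showing that $G_\ell$ $\varepsilon_\ell$-fools $(S,D,2^\ell)$-ROBPs with $\varepsilon_\ell=O(\ell\cdot\varepsilon/T^2)$. For the inductive step, fix an ROBP $M$ of depth $2^\ell$ and split it at its midpoint into $M_1,M_2$. By induction $G_{\ell-1}$ fools $M_1$, so conditioning on the midpoint vertex $v$ (taking at most $2^S$ values) reveals only $S$ bits about $x$, leaving $x$ with average conditional min-entropy at least $n-S\geq n/2$. The strong extractor guarantee then yields that $E(x,y_\ell)$ is $(\varepsilon/T^2)$-close to uniform in statistical distance even conditioned on $v$; applying induction to $M_2$ with the extracted source as its new master completes the step.

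The main obstacle is the entropy bookkeeping across the $L=\log T$ levels of recursion: the seeds $y_1,\dotsc,y_{\ell-1}$ are reused between the two halves at each level, so one has to justify that the bits consumed on the ``left'' subtree do not correlate badly with those on the ``right'' subtree. The resolution is that the extractor application at every internal node purifies the source, and the independence of the $y_j$'s across levels breaks any circular dependence. The precise bound $\varepsilon=2^{-\log^{1-c}S}$ is dictated by the constraint $d\cdot\log T = O(\log(1/\varepsilon)\log S) = o(S)$, needed so that the extractor-seed cost does not overwhelm the master-source budget; driving the error down to an inverse polynomial in $S$ would force $\Omega(\log^2 S)$ extra seed bits and defeat the target $r=O(S)$.
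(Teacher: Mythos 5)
Theorem~\ref{th:nz} is quoted from \cite{NisanZ96} and the paper gives no proof of it, so there is no internal argument to compare against; your sketch has to stand on its own, and it has a genuine gap. What you describe --- a balanced binary recursion in which the sub-seeds $y_1,\dots,y_{\ell-1}$ are shared between the two halves and one extractor is applied per internal node --- is the Impagliazzo--Nisan--Wigderson generator, not the Nisan--Zuckerman one, and the parameters you assign to it are unachievable. Your extractor $E:\zo^n\times\zo^d\to\zo^n$ must output $n$ bits from an $n$-bit source whose (average) min-entropy after conditioning on the $S$-bit midpoint state is at most $n-S$; any seeded extractor outputs at most $k+d-2\log(1/\eps)+O(1)$ bits (and a strong one at most $k-2\log(1/\eps)+O(1)$), so with $d=O(\log(n/\eps))=O(\log S)\ll S$ and $k\le n-S$ (let alone $k=n/2$, as you stipulate) no such $E$ exists. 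Repairing this forces either $d=\Omega(S+\log(1/\eps))$ per level --- which is exactly why the INW generator of Theorem~\ref{th:inwprg} has seed length $O(S\log T+\log(T/\delta)\log T)=\Theta(S\log S)$ in this regime --- or an extractor output that shrinks by $\Omega(S+\log(1/\eps))$ per level, which over $\log T=C\log S$ levels consumes $\Omega(S\log S)\gg n=O(S)$ bits of master source. Either way the tree-based recycling cannot reach seed length $O(S)$ for $T=S^C$; the remark that ``the extractor purifies the source'' does not address this, and the intermediate claim $\eps_\ell=O(\ell\cdot\eps/T^2)$ is also not preserved by the recursion (the error satisfies roughly $\eps_\ell\le 2\eps_{\ell-1}+O(\eps_E)$, so it scales with the number of nodes, not the depth --- harmless for the final bound, but symptomatic of the bookkeeping being off).

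The actual Nisan--Zuckerman argument is structurally different, and the difference is precisely what makes $O(S)$ possible: the output blocks are produced \emph{sequentially} as $E(x,y_1),E(x,y_2),\dots$ from a single master source $x\in\zo^{O(S)}$ with fresh short seeds $y_i$. Because a width-$2^S$ ROBP carries forward only its $S$-bit state (the previously emitted blocks are forgotten), the conditional min-entropy deficiency of $x$ given the current state remains $O(S+\log(1/\eps))$ no matter how many blocks have been emitted --- the deficiency does not accumulate, which is exactly what the binary recursion cannot avoid. The remaining problem, that $T$ fresh extractor seeds would cost far more than $S$ bits, is handled in \cite{NisanZ96} by recycling through composition of this sequential scheme with itself (a bounded number of times for $T=\poly(S)$), and the modest error $2^{-\log^{1-c}S}$ in the statement reflects the trade-offs in that composition and in the extractors available there --- not the constraint $d\cdot\log T=o(S)$ you cite, which by your own accounting would permit error as small as $2^{-S/\log^2 S}$. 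Since the paper only uses Theorem~\ref{th:nz} as a black box with constant error, the citation is all that is needed; if you do want to prove it, you must abandon the shared-seed tree recursion in favor of the sequential extract-from-one-source scheme.
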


The next two lemmas quantify load-balancing properties of
$\delta$-biased hash functions in terms of the $\ell_p$-norms of
vectors. Proofs can be found in Appendix \ref{sec:appendix}.

\begin{lem}\label{hashmomentsLem}
Let $p\geq 2$ be an integer. Let $v \in \R^n$ and $\hh = \{h:[n] \to [m]\}$ be either a $\delta$-biased hash family for $\delta>0$ or a $p$-wise independent family for $\delta=0$. Then
$$\E[h(v)^p] \leq O(p)^{2p} \left(\frac{\|v\|_2^4}{m}\right)^p + O(p)^{2p} \|v\|_4^{4p}+ m^p \|v\|_2^{4p}\delta.$$
\end{lem}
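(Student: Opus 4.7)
The plan is to compute $\E[h(v)^p]$ by direct moment expansion. Expanding $\|v_{|h^{-1}(j)}\|_2^4 = \bigl(\sum_i v_i^2 \mathds{1}[h(i) = j]\bigr)^2$ and summing over $j$ gives $h(v) = \sum_{a,b \in [n]} v_a^2 v_b^2 \mathds{1}[h(a) = h(b)]$, so
\[
\E[h(v)^p] = \sum_{i_1, \ldots, i_{2p} \in [n]} \left(\prod_{k=1}^{2p} v_{i_k}^2\right) \cdot \Pr\bigl[h(i_{2k-1}) = h(i_{2k})\ \forall k \in [p]\bigr].
\]
I would group tuples by their equality-pattern: the partition $\pi$ of $[2p]$ with $a \sim_\pi b \iff i_a = i_b$. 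Let $\rho$ be the perfect matching $\{(2k-1,2k) : k \in [p]\}$ on $[2p]$, let $\tau = \pi \vee \rho$ be its join in the partition lattice, and set $s = |\pi|$, $t = |\tau|$. The hash-equality event holds iff the function $j \mapsto h(i_j)$ is constant on each $\tau$-block. Applying $\delta$-bias to the $s$ distinct hash evaluations indexed by $\pi$-block-representatives, and summing over the $m^t$ consistent value-assignments, yields $\Pr[\text{event}] \leq m^{t - s} + m^t \delta$; the $\delta$-error vanishes when $\hh$ is taken to have sufficient (i.e., $2p$-wise) independence.

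The combinatorial factor $M_\pi$, the sum of $\prod v_{i_k}^2$ over tuples of exact type $\pi$, is bounded above by $\prod_{B \in \pi} \|v\|_{2|B|}^{2|B|}$. Using the elementary inequality $\|v\|_{2a}^{2a} \leq \|v\|_4^{2a}$ for $a \geq 2$ (from $\|v\|_\infty \leq \|v\|_4$) and $\|v\|_2^2$ for singleton blocks, I get $M_\pi \leq \|v\|_2^{2 s_1} \|v\|_4^{2(2p - s_1)}$, where $s_1$ is the number of singleton $\pi$-blocks.

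The key combinatorial step is the lower bound $s - t \geq s_1/2$. Every singleton $\pi$-block participates in exactly one matching edge, necessarily external. Letting $e_{SS}$ count matching edges whose both endpoints are singletons ($e_{SS} \leq s_1/2$), the $e_{SS}$ SS-edges plus the $s_1 - 2 e_{SS}$ singleton-to-larger-block edges each contribute distinct merges, giving $s - t \geq s_1 - e_{SS} \geq s_1/2$. Setting $A = \|v\|_2^4/m$ and $B = \|v\|_4^4$, the contribution of each $\pi$ is then
\[
M_\pi \cdot m^{t-s} \leq \|v\|_2^{2s_1} \|v\|_4^{2(2p - s_1)} m^{-s_1/2} = A^{s_1/2} B^{p - s_1/2} \leq A^p + B^p,
\]
the final bound coming from weighted AM-GM on the degree-$p$ monomial (interpolating between the two extremes $s_1 = 2p$ giving $A^p$ and $s_1 = 0$ giving $B^p$). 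Summing over the at most $(2p)^{2p}$ partitions of $[2p]$ absorbs into the $O(p)^{2p}$ factor on the first two terms; the last term arises from summing $m^t \delta \cdot M_\pi \leq m^p \|v\|_2^{4p} \delta$ (using $t \leq p$ and $M_\pi \leq \|v\|_2^{4p}$).

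The main obstacle I expect is verifying the combinatorial inequality $s - t \geq s_1/2$, which requires careful tracking of how singleton $\pi$-blocks force external matching edges, each of which contributes a distinct merge reducing the block count. Once this is in hand, the AM-GM interpolation and $\ell_p$-norm inequalities are elementary and the remaining bookkeeping is routine.
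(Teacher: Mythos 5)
Your proposal is correct and takes essentially the same route as the paper's own proof: both expand $\E[h(v)^p]$, group terms by the coincidence pattern of the indices, show that each unrepeated index forces a factor of $m^{-1/2}$ in the hash-collision probability (your inequality $s-t\geq s_1/2$ is the paper's equivalence-class count, which works with odd-multiplicity indices where you use singletons), and bound the index sums via $\|v\|_2$ and $\|v\|_4$, with the bias error collected into the $m^p\|v\|_2^{4p}\delta$ term. The remaining differences are purely bookkeeping (partition-lattice join versus explicit bucket values $k_t$), so the two arguments are interchangeable.
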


\begin{lem}\label{lm:hashing}
For all $v \in \R^n_+$, let $p\geq 2$ be even and $\hh = \{h:[n] \to [m]\}$ a $p$-wise independent family, and $j \in [m]$,
$$\pr\sbkets{ \abs{\nmo{v_{|h^{-1}(j)}} - \nmo{v}/m} \geq t} \leq \frac{O( p)^{p/2} \nmt{v}^p }{t^p}.$$
%$$\pr_{h \in_u \hh} \sbkets{\left|\nmo{q_{|h^{-1}(j)}} - \frac{\nmo{q}}{m}\right| \geq \frac{c_\alpha\sqrt{\log(1/\delta)}}{\nms{q}^\alpha} \cdot \nmo{q}} \leq \delta.$$
\end{lem}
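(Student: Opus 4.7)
The plan is to bound the tail probability via Markov's inequality applied to the $p$-th central moment, and then reduce to the truly independent case to appeal to Hoeffding. First I would set $X = \nmo{v_{|h^{-1}(j)}} = \sum_i v_i \mathds{1}[h(i) = j]$, so $\E[X] = \nmo{v}/m$, and write $X - \E X = \sum_i v_i Y_i$ with $Y_i := \mathds{1}[h(i) = j] - 1/m$. The $Y_i$'s are $p$-wise independent (inherited from $h$), centered, and each lies in an interval of length $1$. Since $p$ is even, Markov gives $\pr[|X - \E X| \geq t] \leq \E[(X - \E X)^p]/t^p$, so it suffices to show $\E\!\left[(\sum_i v_i Y_i)^p\right] \leq O(p)^{p/2}\, \nmt{v}^p$.

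The central observation I would use is that the expansion $\E[(\sum v_i Y_i)^p] = \sum_{i_1, \ldots, i_p} v_{i_1} \cdots v_{i_p} \E[Y_{i_1} \cdots Y_{i_p}]$ depends only on joint expectations of at most $p$ of the $Y_i$'s, and by $p$-wise independence each such expectation factors over the distinct indices appearing in the tuple exactly as it would for truly independent variables with the same marginals. Hence the $p$-th moment is unchanged if we replace $(Y_i)$ by truly independent copies $(Y_i^*)$ with matching marginal distributions. For the independent copies, since each $v_i Y_i^*$ is mean-zero and bounded in an interval of length $v_i$, Hoeffding's inequality yields
\[ \pr\sbkets{\abs{\textstyle\sum_i v_i Y_i^*} \geq t} \leq 2 \exp\bkets{-2 t^2/\nmt{v}^2}. \]
Integrating this tail against $p\, t^{p-1}\, dt$ and bounding $\Gamma(p/2) \leq (p/2)^{p/2}$ via Stirling gives $\E\!\left[\abs{\sum_i v_i Y_i^*}^p\right] \leq p\, \Gamma(p/2)\, \nmt{v}^p/2^{p/2} \leq O(p)^{p/2}\, \nmt{v}^p$, which combined with the reduction yields the desired moment bound.

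I expect the subtle step to be the independence reduction: a direct moment-method expansion that controls each partition term by $\nmt{v}^p/m^s$ (using $\|v\|_q \leq \nmt{v}$ for $q \geq 2$ and $|\E[Y_i^k]| \leq 1/m$ for $k \geq 2$) bounds the $p$-th moment by $\nmt{v}^p$ times the coefficient $[z^p/p!]\exp((e^z - 1 - z)/m)$, whose growth can exceed $O(p)^{p/2}$ for large $p$. Noticing that $p$-th moments depend only on $p$-wise joint distributions lets us sidestep this combinatorial difficulty by appealing to Hoeffding for genuinely independent variables; the remaining conversion from tail bound to moment bound is a standard $\Gamma$-integral.
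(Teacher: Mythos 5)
Your proposal is correct and follows essentially the same route as the paper's proof: reduce the $p$-th central moment to the truly independent case (the paper phrases this as the moment bound carrying over from a truly random hash function, with an extra $\nmo{v}^p\delta$ term in the $\delta$-biased setting), obtain that moment from Hoeffding's tail bound via the standard $\Gamma$-integral, and finish with Markov's inequality. Your write-up merely makes explicit the two steps the paper states without detail (why $p$-wise independence lets the moment transfer, and the tail-to-moment conversion), so there is nothing substantively different to compare.
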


%%% Local Variables:
%%% mode: latex
%%% TeX-master: "main"
%%% End:

% !TEX root = main.tex
\section{Fooling products of low-variance random variables}\label{sec:maintech} 
We now show one of our main technical claims that products of complex-valued random variables are fooled by limited independence if the sum of variances of the random variables is small. The lemma is essentially equivalent to saying that limited independence fools low-variance \fshapes.
\begin{lem}\label{lm:maintech}
Let $Y_1,\ldots,Y_n$ be $k$-wise independent random variables taking values in $\C_1$. Then,
$$\abs{\E[Y_1\cdots Y_n] - \prod_{j=1}^n \E[Y_j]} \leq \exp(O(k))\cdot \bkets{\frac{\sum_j \sigma^2(Y_j)}{k}}^{\Omega(k)}.$$
\end{lem}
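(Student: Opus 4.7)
The plan is to express $\prod_j Y_j$ as essentially a polynomial of degree less than $k$ in the $Y_j$'s up to a controllable tail error, and then exploit the fact that $k$-wise independence reproduces such polynomial expectations exactly. I will first handle the \emph{bounded-deviation} case, in which $|Y_j - \mu_j| \leq |\mu_j|/2$ for every $j$ (where $\mu_j := \E[Y_j]$), and then remove this assumption via truncation and inclusion--exclusion.

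In the bounded-deviation case, write $Y_j = \mu_j(1 + Z_j)$ with $|Z_j| \leq 1/2$ and $\E Z_j = 0$, and set $W_j := \log(1 + Z_j)$ using the principal branch, which is well-defined since $\mathrm{Re}(1+Z_j) \geq 1/2$. Then
$$\prod_j Y_j = \Bigl(\prod_j \mu_j\Bigr)\exp(S), \qquad S := \sum_j W_j.$$
From $\log(1+z) = z - z^2/2 + \cdots$ one has $|W_j| \leq 2|Z_j|$ and $|\E W_j| = O(\E|Z_j|^2) = O(\sigma^2(Y_j))$; consequently $|\E S| = O(V)$ and $\mathrm{Var}(S) \leq O(V)$, where $V := \sum_j \sigma^2(Y_j)$. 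Approximate $\exp(S)$ by its degree-$(k-1)$ Taylor polynomial $P(S) := \sum_{\ell < k} S^\ell/\ell!$. Since each $W_j$ depends only on $Y_j$, the $W_j$'s are $k$-wise independent whenever the $Y_j$'s are, and $P(S)$ is a polynomial of total degree less than $k$ in them; hence $\E_{k\text{-wise}}[P(S)] = \E_{\mathrm{true}}[P(S)]$ exactly.

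The residual error is controlled via $|\exp(S) - P(S)| \leq |S|^k e^{|S|}/k!$. A Rosenthal-type moment inequality for sums of $k$-wise independent bounded mean-zero variables gives $\E[|S - \E S|^k] \leq (CVk)^{k/2}$ under either distribution (since it uses moments only up to order $k$), and splitting on the event $|S| \leq \Theta(k)$ while approximating $e^{|S|}$ on that event by its own degree-$O(k)$ Taylor polynomial (with a Markov tail bound on $|S|^{2k}$ for the complement) yields $\E[|\exp(S) - P(S)|] \leq \exp(O(k))(V/k)^{\Omega(k)}$. Multiplying by $|\prod_j \mu_j| \leq 1$ closes the bounded case.

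For the general case, truncate each variable: $\tilde{Y}_j := Y_j$ if $|Y_j - \mu_j| \leq |\mu_j|/2$ and $\tilde{Y}_j := \mu_j$ otherwise; set $B_j := Y_j - \tilde{Y}_j$, so that $\E|B_j|^2 \leq \sigma^2(Y_j)$ and the $\tilde{Y}_j$'s are bounded-deviation up to a mean correction of size $O(\sigma^2(Y_j))$ absorbed into the final $\exp(O(k))$ factor. Expanding
$$\prod_j Y_j = \sum_{T \subseteq [n]} \prod_{j \in T} B_j \prod_{j \notin T} \tilde{Y}_j,$$
I handle $|T| < k/2$ by the bounded-deviation estimate applied to the $(k - |T|)$-wise independent subproduct on the complement of $T$, and $|T| \geq k/2$ by bounding $|\E \prod_{j \in T} B_j|$ via Cauchy--Schwarz and $\E|B_j|^2 \leq \sigma^2(Y_j)$ combined with AM--GM on $\sum_{j \in T} \sigma^2(Y_j) \leq V$, extracting the required $(V/k)^{\Omega(k)}$ factor even after accounting for the $\binom{n}{|T|}$ multiplicity. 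Indices with $|\mu_j|$ very small (where the reparameterization degenerates) are always placed in $T$; since each contributes $\Omega(1)$ to $V$ there are at most $O(V)$ of them. The main obstacle is the moment control of $\E[|S|^k e^{|S|}]$ in the bounded case: standard MGF/Chernoff arguments require full independence and must be replaced by polynomial approximations plus $k$-wise moment bounds, so extracting the sharp $(V/k)^{\Omega(k)}$ decay demands careful splitting of main-event and tail contributions.
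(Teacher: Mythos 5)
Your skeleton matches the paper's (reparameterize through $W_j=\log(1+Z_j)$, truncate the Taylor series of $\exp$ at degree $k-1$ so that $k$-wise independence reproduces the main term, then reduce the general case by truncation and inclusion--exclusion), but the bounded case as you set it up does not close. First, the remainder bound $|\exp(S)-P(S)|\le |S|^k e^{|S|}/k!$ leaves you needing $\E\bigl[|S|^k e^{|S|}\bigr]$, and on the tail event $|S|>\Theta(k)$ the factor $e^{|S|}$ can be as large as $e^{\Omega(n)}$; a Markov bound on $|S|^{2k}$ (or any $O(k)$-order moment) cannot control $\E\bigl[e^{|S|}\mathds{1}(|S|>\Theta(k))\bigr]$, and under $k$-wise independence no MGF or product bound is available either. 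The paper avoids this with a structural cancellation you discard: the integral form of the remainder gives $O(1)\,|S|^k\max(1,\exp(\Re(S)))/k!$, and the prefactor satisfies $\bigl|\prod_j\mu_j\exp(\E[W_j])\bigr|\exp(\Re(S))=\bigl|\prod_j Y_j\bigr|\le 1$ pointwise, so only $\E[|S|^k]$ remains, which is exactly what the moment bound (Lemma \ref{lm:mombound}) controls; by multiplying by $|\prod_j\mu_j|\le 1$ only at the end you throw away precisely the factor needed to kill $\exp(\Re(S))$. Second, even granting moment control, the resulting bound carries an additive term $\exp(O(k))\,B^k$ where $B$ bounds the relative deviation $|Z_j|$; with your truncation at level $|\mu_j|/2$ you have $B=1/2$, so this term is $2^{-\Omega(k)}$ and does not decay with $V=\sum_j\sigma^2(Y_j)$, and moreover $\mathrm{Var}(Z_j)=\mathrm{Var}(\tilde Y_j)/|\mu_j|^2$, so your claim $\mathrm{Var}(S)=O(V)$ fails when the $|\mu_j|$ are small. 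This is why the paper truncates at the absolute level $(\sigma/\sqrt k)^{2/3}$, only for indices with $|\mu_j|\ge(\sigma/\sqrt k)^{1/3}$, and pays an exponent $3k$ to recover $(\sigma/\sqrt k)^{k}$.

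The reduction to the bounded case has two further problems. Your dismissal of small-mean indices (``each contributes $\Omega(1)$ to $V$, so there are at most $O(V)$ of them'') is false: a variable concentrated near a small constant has both small mean and small variance, so small $|\mu_j|$ does not force $\sigma^2(Y_j)=\Omega(1)$. The paper instead splits on whether the number of indices with $|\mu_j|\le(\sigma/\sqrt k)^{1/3}$ is below or above $6k$, and in the latter case bounds $|\prod_j\mu_j|$ and $|\E\prod_j Y_j|$ directly, the second via $\Pr[N\ge 3k]\le\E\binom{N}{3k}$, which uses only $3k$-wise independence of indicator events. Finally, your inclusion--exclusion terms with $|T|\ge k/2$ cannot be handled term by term: for $|T|>k$ the expectation $\E\bigl[\prod_{j\in T}B_j\prod_{j\notin T}\tilde Y_j\bigr]$ does not factor under $k$-wise independence, and even if each term were bounded by $\prod_{j\in T}\sigma_j$ via Cauchy--Schwarz, the sum over sets is governed by $(\sum_j\sigma_j)^{|T|}/|T|!$, and $\sum_j\sigma_j$ can be as large as $\sqrt{nV}\gg V$, so no $(V/k)^{\Omega(k)}$ factor survives the $\binom{n}{|T|}$ multiplicity. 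The paper's fix is to truncate the expansion at sets of size $m=3k/2$ (within the independence range), bound the discarded part pointwise by $2^m\binom{N}{m}$, and control $\E\binom{N}{m}$ through $\Pr(Y_i\ne\tilde Y_i)\le\sigma_i^2(\sigma/\sqrt k)^{-4/3}$, which is quadratic in $\sigma_i$ and hence summable against $V$. As written, your argument does not establish the lemma.
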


More concretely, let $X_1,\ldots,X_n$ be independent random variables taking values in $\C_1$. Let $\sigma_i^2 = \var(X_i)$ and $\sum_{i=1}^n \sigma^2_i\leq \sigma^2$. Let $k$ be a positive even integer and let $Y_1,\ldots,Y_n$ be a $Ck$-wise independent family of random variables with each $Y_i$ distributed identically to $X_i$. Then, we will show that for $C$ a sufficiently big constant, 
\begin{equation}\label{mainEqn}
\left|\E\sbkets{Y_1 \cdots Y_n} - \E\sbkets{X_1 \cdots X_n}  \right| = \exp(O(k)) \cdot (\sigma/\sqrt{k})^k.
\end{equation}

We start with the following standard bound on moments of bounded random variables whose proof is deferred to appendix \ref{app:maintech}. 
\begin{lem}\label{lm:mombound}
Let $Z_1,\ldots,Z_n \in \C$ be random variables with $\E[Z_i] = 0$, $\|Z_i\|_\infty < B$ and $\sum_i Var(Z_i) \leq \sigma^2$. Then, for all even positive integers $k$,
$$\E\sbkets{\abs{\sum_i Z_i}^k} \leq 2^{O(k)} (\sigma \sqrt{k} + Bk)^k.$$
\end{lem}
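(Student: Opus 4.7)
The plan is to reduce to real-valued random variables and then combine Bernstein's inequality with a standard moment-tail integration. Writing $Z_j = X_j + iY_j$ with $X_j, Y_j$ real, the hypotheses pass to each of the real and imaginary parts: $|X_j|, |Y_j| \leq |Z_j| < B$, $\E[X_j] = \E[Y_j] = 0$, and $\var(Z_j) = \var(X_j) + \var(Y_j)$, so the real and imaginary variance sums are each at most $\sigma^2$. Since $k$ is even, $(a^2+b^2)^{k/2} \leq 2^{k/2}(|a|^k + |b|^k)$, so
$$\E\bigl[\bigl|{\textstyle\sum}_j Z_j\bigr|^k\bigr] \leq 2^{k/2}\Bigl(\E\bigl[\bigl|{\textstyle\sum}_j X_j\bigr|^k\bigr] + \E\bigl[\bigl|{\textstyle\sum}_j Y_j\bigr|^k\bigr]\Bigr),$$
reducing the problem to the real case with the same parameters $\sigma$ and $B$ (and a harmless factor $2^{k/2}$).

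For independent real random variables $U_j$ with mean zero, $|U_j| \leq B$, and $\sum_j \var(U_j) \leq \sigma^2$, I will invoke the classical Bernstein inequality
$$\Pr\bigl[\bigl|{\textstyle\sum}_j U_j\bigr| \geq t\bigr] \leq 2\exp\!\bigl(-t^2/(2\sigma^2 + \tfrac{2}{3}Bt)\bigr) \leq 2\exp(-c\,t^2/\sigma^2) + 2\exp(-c\,t/B),$$
which holds for an absolute constant $c>0$ once the two regimes $t \leq \sigma^2/B$ and $t > \sigma^2/B$ are separated. I then combine this with the layer-cake identity $\E[W^k] = k\int_0^\infty t^{k-1}\Pr[W \geq t]\,dt$ applied to $W = \bigl|\sum_j U_j\bigr|$, and bound each of the two tail terms separately.

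The Gaussian-type piece contributes $k\int_0^\infty t^{k-1} e^{-ct^2/\sigma^2}\,dt$, which after the substitution $u = ct^2/\sigma^2$ becomes a multiple of $\sigma^k \Gamma(k/2) \leq O(1)^k (\sigma\sqrt{k})^k$. The sub-exponential piece contributes $k\int_0^\infty t^{k-1} e^{-ct/B}\,dt = O(1)^k B^k \Gamma(k+1) \leq O(Bk)^k$. Adding the two and tracing back through the complex-to-real reduction yields
$$\E\bigl[\bigl|{\textstyle\sum}_j Z_j\bigr|^k\bigr] \leq 2^{O(k)}\bigl((\sigma\sqrt{k})^k + (Bk)^k\bigr) \leq 2^{O(k)}(\sigma\sqrt{k} + Bk)^k.$$
No genuinely hard step arises; the main care needed is in the bookkeeping of absolute constants through the two integrals and in verifying that the complex-to-real reduction does not introduce an error worse than the claimed $2^{O(k)}$ prefactor. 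Everything else is a mechanical application of Bernstein plus Stirling.
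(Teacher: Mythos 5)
Your proof is correct, but it takes a genuinely different route from the paper. The paper uses the same complex-to-real reduction you do, but then bounds $\E[(\sum_i Z_i)^k]$ by brute-force multinomial expansion: it groups terms by the set $S$ of participating indices, observes that a term vanishes unless every index in $S$ appears at least twice (so $|S|\leq k/2$), bounds each surviving term by $B^{k-2|S|}\prod_{j\in S}\mathrm{Var}(Z_j)$, and sums using $\sum_{|S|=m}\prod_{j\in S}\mathrm{Var}(Z_j)\leq \sigma^{2m}/m!$. Your route---Bernstein's inequality split into the sub-Gaussian and sub-exponential regimes, followed by layer-cake integration and Stirling---is cleaner and each step checks out; the constants and the two $\Gamma$-integrals give exactly the claimed $2^{O(k)}((\sigma\sqrt{k})^k+(Bk)^k)$. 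The one substantive difference worth noting: the lemma's statement omits an independence hypothesis (it is false without one), and in the paper it is actually invoked, inside Lemma \ref{logLem}, on summands that are only $Ck$-wise independent rather than fully independent. The paper's expansion proof handles this for free, since it only ever evaluates joint moments of at most $k$ of the variables; your Bernstein argument, as written, needs full independence. This is easily patched---since $k$ is even, $\E[|\sum_i Z_i|^k]$ expands into monomials involving at most $k$ distinct indices, so under $k$-wise independence it coincides with the value for independent copies, to which your argument applies---but you should state that reduction explicitly, as otherwise your lemma would not directly cover the setting in which it is used. Apart from that, what your approach buys is a shorter, more modular proof resting on a classical inequality; what the paper's buys is that the limited-independence version comes out automatically.
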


We also use some elementary properties of the (complex-valued) log and exponential functions:
\begin{lem}\label{lm:logexpcomplex}
\begin{enumerate}
\item For $z \in \C$ with $|z| \leq 1/2$, $|\log(1+z)| \leq 2|z|$, where we take the principle branch of the logarithm.
\item For $w \in \C$ and $k > 0$,
$$\abs{\exp(w) - \sum_{j=0}^{k-1} w^k/k!}\leq O(1) \frac{|w|^k}{k!}\cdot \max(1,\exp(\Re(w))).$$
\item For a random variable $Z \in \C$ with $|Z|_\infty \leq 1/2$, $\E[Z]=0$, and  $W = \log(1+Z)$ the principle branch of the logarithm function (phase between $(-\pi,\pi)$), $Var(W) \leq 4 Var(Z)$.
\item For any complex-valued random variable $W \in \C$, $|\exp(\E[W])| \leq \E[|\exp(W)|]$.
\end{enumerate}
\end{lem}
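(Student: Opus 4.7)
All four parts are standard complex-analytic/probabilistic facts; I will handle each in turn, with the integral form of the Taylor remainder as the main technical tool for part (2).

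For part (1), I would use the power series $\log(1+z) = \sum_{j \geq 1} (-1)^{j+1} z^j / j$, which converges and equals the principal branch for $|z| < 1$ (and in particular for $|z| \leq 1/2$, where $1+z$ stays safely away from the negative real axis). The triangle inequality gives $|\log(1+z)| \leq |z| \sum_{j \geq 0} |z|^j/(j+1) \leq |z|/(1-|z|) \leq 2|z|$ when $|z| \leq 1/2$.

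For part (2), the cleanest route is the integral remainder. Applying Taylor's theorem along the line segment from $0$ to $w$ and substituting $t=sw$,
\[
\exp(w) - \sum_{j=0}^{k-1} \frac{w^j}{j!} \;=\; \frac{w^k}{(k-1)!} \int_0^1 (1-s)^{k-1} \exp(sw)\, ds.
\]
Since $|\exp(sw)| = \exp(s\,\Re w)$, taking absolute values gives
\[
\left|\exp(w) - \sum_{j=0}^{k-1} \frac{w^j}{j!}\right| \;\leq\; \frac{|w|^k}{(k-1)!}\int_0^1 (1-s)^{k-1} \exp(s\,\Re w)\, ds.
\]
If $\Re w \leq 0$ the integrand is at most $(1-s)^{k-1}$, integrating to $1/k$, so the bound is $|w|^k/k!$. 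If $\Re w > 0$, crudely bound $\exp(s\,\Re w) \leq \exp(\Re w)$ and again integrate $(1-s)^{k-1}$ to $1/k$, giving $|w|^k \exp(\Re w)/k!$. Combining the two cases yields the desired $|w|^k/k! \cdot \max(1,\exp(\Re w))$ bound (even with constant $1$, subsumed by the stated $O(1)$).

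For part (3), combine part (1) with the general inequality $\var(W) = \E[|W|^2] - |\E W|^2 \leq \E[|W|^2]$. Since $|Z| \leq 1/2$ almost surely, part (1) gives $|W| = |\log(1+Z)| \leq 2|Z|$ pointwise, and because $\E[Z]=0$ we have $\E[|Z|^2] = \var(Z)$. Hence $\var(W) \leq \E[|W|^2] \leq 4\,\E[|Z|^2] = 4\,\var(Z)$. Part (4) is just Jensen's inequality applied to the real convex function $\exp$:
\[
|\exp(\E W)| \;=\; \exp(\Re \E W) \;=\; \exp(\E[\Re W]) \;\leq\; \E[\exp(\Re W)] \;=\; \E[|\exp W|].
\]
The only step with any real content is the derivation of (2); the rest is routine manipulation. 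I expect no obstacle beyond carefully justifying the integral form of the remainder for complex $w$ (which follows by standard one-variable complex calculus along the segment $[0,w]$).
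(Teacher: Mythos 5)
Your proposal is correct and follows essentially the same route as the paper: parts (1) and (2) via the Taylor expansions of $\log(1+z)$ and $\exp$ (the paper simply cites these expansions, while you spell out the integral-remainder bound, which indeed yields the stated estimate with constant $1$), and parts (3) and (4) via the pointwise bound $|W|\leq 2|Z|$ together with $\var(W)\leq \E[|W|^2]$, and Jensen's inequality applied to $\Re(W)$, exactly as in the paper.
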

\begin{proof}
Claims (1), (2) follow from the Taylor series expansions for the complex-valued log and exponential functions.

For (3), note that $Var(W) \leq \E[|W|^2] \leq 4 \E[|Z|^2] = 4 Var(Z)$.

For (4), note that $|\exp(\E[W])| = |\exp(E[\Re(W)])|$ and similarly $|\exp(W)| = |\exp(\Re(W))|$. The statement now follows from Jensen's inequality applied to the random variable $\Re(W)$.
\end{proof}

We prove Lemma \ref{lm:maintech} or equivalently, Equation \eqref{mainEqn} by proving a sequence of increasingly stronger claims. We begin by proving that Equation \eqref{mainEqn} holds if $X_j$'s have small absolute deviation, i.e., lie in a disk of small radius about a fixed point.
\begin{lem}\label{logLem}
Let $X_i$ and $Y_i$ be as above. Furthermore, assume that $Y_i = \mu_i(1+Z_i)$ for complex numbers $\mu_i = \E[Y_i]$ and random variables $Z_i$ so that with probability $1$, $|Z_i|\leq B\leq 1/2$ for all $i$. Let $\tilde \sigma_i^2=\var(Z_i)$, and $\tilde\sigma^2 =\sum_{i=1}^n \tilde \sigma_i^2$. Then we have that
$$
\left|\E\left[X_1 \cdots X_n \right] - \E\left[Y_1 \cdots Y_n\right]  \right| = \exp(O(k)) \cdot (\tilde\sigma/k^{1/2}+B)^k.
$$
\end{lem}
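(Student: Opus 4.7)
The plan is to apply the change of variables $W_j = \log(1+Z_j)$ (principal branch) and Taylor-expand the resulting exponential, reducing the problem to a low-degree polynomial (which is fooled by $k$-wise independence) plus a tail term controlled by a $k$-th moment bound \emph{\`a la} Lemma~\ref{lm:mombound}. A key subtlety is that the Taylor remainder bound in Lemma~\ref{lm:logexpcomplex}(2) carries a factor $\max(1,\exp(\Re T))$, which a priori can be huge; the cleverness is to show that this factor gets absorbed into the prefactor coming from $\prod\mu_j$.

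First I would write
\[
\prod_j Y_j \;=\; \Bigl(\prod_j \mu_j\Bigr)\exp\Bigl(\sum_j W_j\Bigr), \qquad W_j=\log(1+Z_j),
\]
and split $W_j = \nu_j + \tilde W_j$ with $\nu_j = \E[W_j]$, so $\E[\tilde W_j]=0$. By Lemma~\ref{lm:logexpcomplex}(1),(3) we have $|\tilde W_j|\le 4B$ and $\var(\tilde W_j)\le 4\tilde\sigma_j^2$. Setting $T=\sum_j\tilde W_j$ (and $T'$ the analogous sum built from the $X_j$'s), the common prefactor $\prod\mu_j\exp(\sum\nu_j)$ factors out and the quantity to bound is
\[
\bigl|{\textstyle\prod\mu_j}\exp({\textstyle\sum\nu_j})\bigr|\cdot\bigl|\E[\exp(T)] - \E[\exp(T')]\bigr|.
\]
Let $P(t)=\sum_{\ell<k} t^\ell/\ell!$. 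Expanding $P(T)$ in the $\tilde W_j$'s yields a polynomial of total degree $<k$, so $Ck$-wise independence of the $Y_j$'s (hence of the $\tilde W_j$'s) gives $\E[P(T)]=\E[P(T')]$. It therefore suffices to bound $\E[\exp(T)-P(T)]$ and its $T'$ analogue, which by Lemma~\ref{lm:logexpcomplex}(2) are at most $O(\E[|T|^k\max(1,\exp(\Re T))])/k!$.

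The key cancellation is: by Lemma~\ref{lm:logexpcomplex}(4) applied to $\log Y_j$, $|\prod\mu_j\exp(\sum\nu_j)|\le 1$; and $|\prod\mu_j\exp(\sum\nu_j)|\cdot\exp(\Re T) = |\prod\mu_j\exp(\sum W_j)| = |\prod Y_j|\le 1$. Consequently
\[
\bigl|{\textstyle\prod\mu_j}\exp({\textstyle\sum\nu_j})\bigr|\cdot\max(1,\exp(\Re T)) \;\le\; 1
\]
pointwise, so the whole prefactor times the Taylor remainder is controlled by $O(\E[|T|^k]/k!)$. Since $|T|^k = T^{k/2}\bar T^{k/2}$ expands as a sum of degree-$k$ monomials in the $\tilde W_j$'s, Lemma~\ref{lm:mombound} applies under $k$-wise independence and gives $\E[|T|^k]\le 2^{O(k)}(2\tilde\sigma\sqrt k+4Bk)^k$; combined with $k!\ge (k/e)^k$ this collapses to $\exp(O(k))(\tilde\sigma/\sqrt k+B)^k$. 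The identical bound for $T'$ holds by independence of the $X_j$'s, completing the proof.

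The main obstacle is the pointwise inequality in the previous paragraph. Without it, the Taylor remainder is weighted by $\exp(\Re T)$ whose typical magnitude can blow up exponentially in $n$, which would destroy any hope of a bound depending only on $\tilde\sigma,B,k$. Recognising that this exponential is precisely $|\prod Y_j|/|\prod\mu_j\exp(\sum\nu_j)|$, which is bounded by $1/|\prod\mu_j\exp(\sum\nu_j)|$, is what allows the prefactor to cancel it. Once this cancellation is in hand, the remainder of the argument is routine combinatorics of Taylor remainders and the moment inequality already proved.
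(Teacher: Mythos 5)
Your proposal is correct and follows essentially the same route as the paper's proof: the same change of variables $W_j=\log(1+Z_j)$, centering, truncated Taylor expansion fooled by $k$-wise independence, the pointwise cancellation $|\prod_j\mu_j\exp(\E[W_j])|\cdot\max(1,\exp(\Re W))\le 1$ against the remainder's exponential factor, and the moment bound of Lemma~\ref{lm:mombound}. Your explicit remark that the $k$-th moment only involves degree-$k$ monomials (so limited independence suffices for the moment bound) is a point the paper leaves implicit, but it is the same argument.
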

\begin{proof}
Let $W_j=\log(1+Z_j)$, taking the principle branch of the logarithm function and let $W_j' = W_j - \E[W_j]$. Then, by Lemma \ref{lm:logexpcomplex} (1), (3), $|W_j| \leq 2 |Z_j| \leq 2B$, so that $|W_i|' \leq 4B$ and $\var(W_j') = O(\tilde\sigma_j^2)$. Finally, let $W = \sum_{j=1}^n W_j'$.

Now, by Lemma \ref{lm:logexpcomplex} (3)
\begin{align*}
\prod_{i=1}^n Y_i &= \prod_{i=1}^n \left(\mu_i\exp(\E[W_i])\right) \exp(W)\\
&= \prod_{i=1}^n \left(\mu_i\exp(\E[W_i])\right) \left(\sum_{\ell=0}^{k-1} \frac{W^\ell}{\ell!} + O(1) \cdot \left(\frac{|W|^k}{k!}\right)\cdot \max(1,\exp(\Re(W)))\right).
\end{align*}

Note that the expectation of the $\ell^{th}$ powers of $W$ are fooled by the $k$-wise independence of the $Y$'s for $\ell < k$. Therefore the difference in the expectations between the product of $Y$'s and the product of $X$'s is at most
\begin{equation}\label{taylorErrorEqn}
O(1) \cdot \prod_{i=1}^n \left(\mu_i\exp(\E[W_i])\right)\E\left[\left(\frac{|W|^k}{k!}\right) \cdot \max(1,\exp(\Re(W)))\right]
\end{equation}
Now, by Lemma \ref{lm:logexpcomplex} (4),
\begin{multline*}
\left|\prod_{i=1}^n \mu_i\exp(\E[W_i])\right| = \abs{\prod_{i=1}^n \mu_i \cdot \exp\bkets{\E\sbkets{\sum_i W_i}}} \leq \\
\abs{\prod_{i=1}^n \mu_i \cdot \E\sbkets{\exp\bkets{\sum_i W_i}}} = \E\left[ \left|\prod_{i=1}^n \mu_i \exp(W_i)\right|\right]  =  \E\left[ \left|\prod_{i=1}^nY_i\right|\right] \leq 1.
\end{multline*}
Further,
$$\left|\prod_{i=1}^n \mu_i\exp(\E[W_i])\right| \cdot \exp(\Re(W)) =  \left|\prod_{i=1}^n \mu_i\exp(\E[W_i]) \cdot \exp(W) \right| =   \left|\prod_{i=1}^nY_i\right| \leq 1.$$
Therefore, by Lemma \ref{lm:mombound}, the expression in \eqref{taylorErrorEqn} is at most
$$
O(1) \E\left[\frac{|W|^k}{k!}\right] \leq 2^{O(k)} \cdot \left(\frac{\tilde \sigma \sqrt{k} + B k}{k} \right)^k = 2^{O(k)} \cdot (\tilde\sigma/k^{1/2}+B)^k.
$$
\end{proof}

Next, we relax the conditions to handle the case where we only require the means of the $X_j$'s be far from zero.
\begin{lem}\label{largeLem}
Let $X_i$ and $Y_i$ be as in Equation \eqref{mainEqn}. Let $\mu_i=\E[X_i]$. If $|\mu_i| \geq (\sigma/\sqrt{k})^{1/3}$ for all $i$, then Equation \eqref{mainEqn} holds.
\end{lem}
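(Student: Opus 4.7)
The plan is to reduce Lemma~\ref{largeLem} to Lemma~\ref{logLem} via a truncation argument. The hypothesis $|\mu_i|\geq\beta:=(\sigma/\sqrt{k})^{1/3}$ bounds the relative deviations $Z_i := Y_i/\mu_i-1$ by $|Z_i|\leq 1+1/|\mu_i|\leq 2/\beta$, so the $Z_i$'s are bounded, just not by $1/2$ as Lemma~\ref{logLem} strictly demands. The exponent $1/3$ in $\beta$ is forced by the eventual trade-off between truncation error and the Lemma~\ref{logLem} bound.

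First, introduce a threshold $B\leq 1/2$ and define the centered truncations $\hat Z_i := Z_i\mathds{1}[|Z_i|\leq B]-\E[Z_i\mathds{1}[|Z_i|\leq B]]$, which have mean zero, satisfy $|\hat Z_i|\leq 2B$, and obey $\var(\hat Z_i)\leq\var(Z_i)=:\tilde\sigma_i^2\leq\sigma_i^2/\beta^2$. Set $\hat Y_i:=\mu_i(1+\hat Z_i)$ and analogously $\hat X_i$; since truncation is a deterministic function of each $Y_i$ (resp. $X_i$), the $\hat Y_i$ remain $Ck$-wise independent and the $\hat X_i$ remain fully independent, with matching marginals and common mean $\mu_i$.

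Next, apply Lemma~\ref{logLem} to the $\hat X_i,\hat Y_i$ with deviation bound $B$ (after rescaling $B$ by a constant so that the hypothesis $B\leq 1/2$ holds); writing $\hat{\tilde\sigma}^{\,2}:=\sum_i\var(\hat Z_i)\leq\sigma^2/\beta^2$, one obtains
\[
\bigl|\E[\hat X_1\cdots\hat X_n]-\E[\hat Y_1\cdots\hat Y_n]\bigr|\leq\exp(O(k))\bigl(\hat{\tilde\sigma}/\sqrt{k}+B\bigr)^k=\exp(O(k))\bigl(\sigma/(\beta\sqrt{k})+B\bigr)^k.
\]
Separately, bound the two truncation errors $|\E[\prod Y_i]-\E[\prod\hat Y_i]|$ and $|\E[\prod X_i]-\E[\prod\hat X_i]|$: by a telescoping-product argument using $|Y_i|,|\hat Y_i|\leq 1$, each is at most $\sum_j\E|Y_j-\hat Y_j|\leq 2\sum_j|\mu_j|\,\E[|Z_j|\mathds{1}[|Z_j|>B]]$, which in turn is controlled through the sharp moment bound $\E[|Z_i|^p]\leq(2/\beta)^{p-2}\tilde\sigma_i^2$ (a consequence of $|Z_i|\leq 2/\beta$) and Markov's inequality, with $p$ chosen to optimize.

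Finally, plug in $\beta=(\sigma/\sqrt{k})^{1/3}$, pick $B$ of order $\sigma/\sqrt k$ and $p=\Theta(k)$: both the Lemma~\ref{logLem} contribution and the truncation contribution collapse to $\exp(O(k))(\sigma/\sqrt{k})^k$, establishing \eqref{mainEqn}. The main obstacle in this plan is the delicate balance in Step~3: the truncation error blows up like $(1/\beta B)^{\Theta(k)}$ when $B$ shrinks, while Lemma~\ref{logLem} fails when $B$ grows; the cube-root choice of $\beta$ in terms of $\sigma/\sqrt k$ is precisely what makes these forces equilibrate, and verifying this requires using the full strength of the moment bound afforded by the hypothesis $|\mu_i|\geq\beta$ rather than just Chebyshev.
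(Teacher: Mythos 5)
Your reduction to Lemma \ref{logLem} via truncation is the right first instinct (and is how the paper starts), but the way you control the truncation error cannot work. You bound $\bigl|\E[\prod_i Y_i]-\E[\prod_i \hat Y_i]\bigr|$ by telescoping, i.e.\ by $\sum_j \E\bigl|Y_j-\hat Y_j\bigr| \leq 2\sum_j |\mu_j|\,\E\bigl[|Z_j|\mathds{1}[|Z_j|>B]\bigr]$, and then hope that moments plus Markov with $p=\Theta(k)$ make this $\exp(O(k))(\sigma/\sqrt{k})^k$. It cannot: since $B\leq 1/2 < 2/\beta$, the bound $\E[|Z_j|^p]/B^{p-1}\leq (2/\beta)^{p-2}\tilde\sigma_j^2/B^{p-1}$ \emph{worsens} as $p$ grows (each increment multiplies it by $2/(\beta B)>1$), so the best you can extract is the $p=2$ case, giving a total of order $\sum_j \sigma_j^2/(|\mu_j|B)\leq \sigma^2/(\beta B)$. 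With $\beta=(\sigma/\sqrt{k})^{1/3}$ and $B=\Theta(\sigma/\sqrt{k})$ this is $\approx \sigma^{2/3}k^{2/3}$ --- not even $o(1)$ in general, and in any case polynomially rather than exponentially small in $k$. The claimed ``equilibration'' also fails on the other side: after normalizing by $|\mu_i|\geq\beta$ the total variance entering Lemma \ref{logLem} is $\hat{\tilde\sigma}^2\leq\sigma^2/\beta^2$, so $\hat{\tilde\sigma}/\sqrt{k}=(\sigma/\sqrt{k})^{2/3}$ dominates your $B$, and applying the lemma with exponent $k$ only yields $(\sigma/\sqrt{k})^{2k/3}$; to recover the $k$-th power one must apply Lemma \ref{logLem} with moment parameter $\Theta(k)$ larger than $k$ (the paper uses $3k$, which is exactly why the hypothesis grants $Ck$-wise independence). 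That part is a fixable slip, but the telescoped truncation bound is a genuine dead end: it sums $n$ individually only-polynomially-small terms, and no choice of $B,p$ makes it $(\sigma/\sqrt{k})^{\Omega(k)}$.

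The missing idea is that the truncation error must be organized by \emph{how many} coordinates deviate, not coordinate by coordinate. The paper truncates by replacing a deviant $Y_i$ (with $|Y_i-\mu_i|>(\sigma/\sqrt{k})^{2/3}$) by the constant $\mu_i$ (which also keeps $|\tilde Y_i|\leq 1$, whereas your recentered $\hat Y_i$ can exceed modulus $1$, so even your telescoping inequality needs care), and then expands
$\prod_i Y_i=\sum_{S\subseteq[n]}\prod_{i\in S}(Y_i-\tilde Y_i)\prod_{i\notin S}\tilde Y_i$.
Terms with $|S|<m$, $m=3k/2$, are each fooled by the available independence: condition on the coordinates in $S$ (they are genuinely independent of the rest up to level $Ck$) and apply Lemma \ref{logLem} to the remaining $\tilde Y_i$; the terms with $|S|\geq m$ are controlled by $2^m\binom{N}{m}$, where $N$ is the number of deviating coordinates, and $\E\binom{N}{m}\leq\bigl(\sum_i\pr(Y_i\neq\tilde Y_i)\bigr)^m/m!\leq O\bigl((\sigma^2/m)(\sigma^2/k)^{-2/3}\bigr)^m$ --- the exponential-in-$k$ smallness comes precisely from the unlikelihood of $\Theta(k)$ simultaneous deviations, which is the mechanism your proposal never invokes. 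Without this inclusion--exclusion (or an equivalent device), the cube-root threshold $|\mu_i|\geq(\sigma/\sqrt{k})^{1/3}$ does not balance anything, and the proof of Lemma \ref{largeLem} does not go through.
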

\begin{proof}
We assume throughout that $\sigma/\sqrt{k}$ is less than a sufficiently small constant; otherwise, there is nothing to prove. Further, note that there can be at most $k$ different indices $j \in [n]$ where $\sigma_j \geq \sigma/\sqrt{k}$. As even after conditioning on the values of the corresponding $Y$'s, the remaining $Y_j$'s are $(C-1)k$-independent, it suffices to prove the lemma when $\sigma_j\leq \sigma/\sqrt{k}$ for all $j$.

To apply Lemma \ref{logLem}, we consider a truncation of our random variables: define
$$
\tilde Y_i = \begin{cases} Y_i &\textrm{if } |Y_i-\mu_i|\leq (\sigma/\sqrt{k})^{2/3} \\ \mu_i &\textrm{else} \end{cases}
$$
We claim that the variables $\tilde Y_i$ satisfy the conditions of Lemma \ref{logLem}. Let $\tilde \mu_i = \E[\tilde Y_i]$. Note that by Chebyshev bound, $\pr(\tilde Y_i\neq Y_i)\leq \sigma_i^2(\sigma/\sqrt{k})^{-4/3}\leq(\sigma/\sqrt{k})^{2/3}$. Therefore, $|\mu_i-\tilde \mu_i|\leq (\sigma/\sqrt{k})^{2/3}$, so that $|\tilde \mu_i| \geq (1/2)|\mu_i|$. Furthermore, letting $\tilde Y_i = \tilde \mu_i (1+Z_i)$, we have that
\begin{equation}\label{eq:largemu1}
\E[Z_i]=0,\;\; \|Z_i\|_\infty \leq 2 (\sigma/\sqrt{k})^{1/3},\;\;\var(Z_i)\leq  4\sigma_i^2 (\sigma/\sqrt{k})^{-2/3},\;\; \sum_i \var(Z_i) \leq 4 \sigma_i^2(\sigma/\sqrt{k})^{-2/3}.
\end{equation}

Finally, note that
$$\prod_{i=1}^n Y_i = \prod_{i=1}^n (Y_i - \tilde Y_i + \tilde Y_i) = \sum_{S\subseteq [n]}\prod_{i\in S}(Y_i-\tilde Y_i)\prod_{i\not\in S}\tilde Y_i.
$$
We truncate the above expansion to only include terms corresponding to sets $S$ with $|S| < m$ for $m = O(k)$ to be chosen later.
Let
$$P_m(Y_1,\ldots,Y_n) = \sum_{S\subseteq [n], |S| < m} \prod_{i \in S} (Y_i - \tilde Y_i) \prod_{i \notin S} (\tilde Y_i),$$
and let $N$ equal the number of $i$ so that $Y_i \neq \tilde Y_i$. We claim that
$$\abs{\prod_{j=1}^n Y_j - P_m(Y_1,\ldots,Y_n)} \leq 2^m \binom{N}{m}.$$
The above clearly holds when $N < m$, since in this case for any $S$ of size at least $m$ we have $\prod_{i\in S}(Y_i-\tilde Y_i)=0$. On the other hand for $N\geq m$ we note that there are at most $\sum_{\ell=0}^{m-1} \binom{N}{\ell} \leq 2^m\binom{N}{m}$ subsets $S$ for which this product is non-zero. Hence, $|P_m(Y_1,\ldots,Y_n)| < 2^m \binom{N}{m}$.

We now argue that $Ck$-wise independence fools the individual terms of $P_m$ when $m = O(k)$. This is because, the $Y_j$ for $j \in S$ are independent and conditioned on their values, the remaining $\tilde Y_j$ for $j \notin S$ are still $C'k$-wise independent for some sufficiently large constant $C'$. Therefore, applying Lemma \ref{logLem} with parameters as given by Equation \eqref{eq:largemu1}, $Ck$-wise independence fools $P_m(Y_1,\ldots,Y_n)$ up to error
$$
\sum_{S\subseteq [n], |S|< m}\prod_{i\in S}\left|\E[Y_i-\tilde Y_i]\right| \cdot 2^{O(k)} \bkets{\frac{\tilde \sigma^2}{\sqrt{k}}\bkets{\frac{\sigma}{\sqrt{k}}}^{-2/3}+\bkets{\frac{\sigma}{\sqrt{k}}}^{1/3}}^{3k},
$$
where
$$\bkets{\frac{\tilde \sigma^2}{\sqrt{k}}\bkets{\frac{\sigma}{\sqrt{k}}}^{-2/3}+\bkets{\frac{\sigma}{\sqrt{k}}}^{1/3}}^{3k} = O(\sigma/\sqrt{k})^k.$$
Therefore, $P_m$ is fooled to error
$$
\sum_{\ell=0}^{m-1}\E\left[\binom{N}{\ell} \right]2^{O(k)} \cdot (\sigma/\sqrt{k})^k.
$$

Note that the expectation above is the same as what it would be if the $Y_i$'s were fully independent, in which case it is at most
\begin{multline*}
\E[2^N] =\prod_{i=1}^n (1+\pr(Y_i\neq \tilde Y_i)) \leq \exp\bkets{\sum_{i=1}^n \pr(Y_i \neq \tilde Y_i)} = \\
\exp\left(O\left(\sum_{i=1}^n \sigma_i^2(\sigma/\sqrt{k})^{-4/3}\right)\right)  =\exp(O(\sigma^{2/3}k^{2/3}))=\exp(O(k)).
\end{multline*}
Therefore, $Ck$-wise independence fools $P_m$ to error $2^{O(k)} \cdot (\sigma/\sqrt{k})^k$.

On the other hand, the expectation of $\binom{N}{m}$ is
\begin{align*}
\sum_{S\subseteq [n],|S|=m} \prod_{i\in S} \pr(Y_i\neq \tilde Y_i) & \leq \frac{\left(\sum_{i=1}^n  \pr(Y_i\neq \tilde Y_i)\right)^m}{m!}\\
& \leq \frac{\left(\sum_{i=1}^n  \sigma_i^2(\sigma/\sqrt{k})^{-4/3}\right)^m}{m!}\\
& \leq O\left((\sigma^2/m) (\sigma^2/k)^{-2/3} \right)^m.
\end{align*}
Taking $m=3k/2$ yields a final error of $\exp(O(k)) \cdot (\sigma/\sqrt{k})^k$. This completes our proof.
\end{proof}

Finally, we can extend our proof to cover the general case.
\begin{proof}[Proof of Lemma \ref{lm:maintech}]
Note that it suffices to prove that Equation \eqref{mainEqn} holds. As before, it suffices to assume that $\sigma/\sqrt{k} \ll 1$ and that $\sigma_i \leq \sigma/\sqrt{k}$ for all $i$.

Let $m$ be the number of $i$ so that $|\E[Y_i]|\leq (\sigma/\sqrt{k})^{1/3}.$ Assume that the $Y$'s with \emph{small} expectation are $Y_1,\ldots,Y_m$. We break into cases based upon the size of $m$.

On the one hand if $m\leq 6k$, we note that for $C$ sufficiently large, the values of $Y_1,\ldots,Y_m$ are independent of each other, and even after conditioning on them, the remaining $Y_i$'s are still $C'k$-wise independent. Thus, applying Lemma \ref{largeLem} to the expectation of the product of the remaining $Y_i$ we find that the difference between the expectation of the product of $X$'s and product of $Y$'s is as desired.

For $m\geq 6k$ we note that
$$
\left|\E\left[\prod_{i=1}^n X_i \right] \right| = \prod_{i=1}^n\left|\E[Y_i] \right| \leq (\sigma/\sqrt{k})^{m/3}.
$$
Therefore, it suffices to show that
$$
\left|\E\left[\prod_{i=1}^n Y_i \right] \right| = O(\sigma/\sqrt{k})^k.
$$
Notice that so long as at least $3k$ of $Y_1,\ldots,Y_m$ have absolute value less than $2(\sigma/\sqrt{k})^{1/3}$, then
$$
\left|\prod_{i=1}^n Y_i \right| =O(\sigma/\sqrt{k})^k.
$$
Therefore, it suffices to show that this occurs except with probability at most $O(\sigma/\sqrt{k})^k$. Let $N$ be the number of $1\leq i \leq m$ so that $|Y_i|\geq 2(\sigma/\sqrt{k})^{1/3}.$ Note that
$$
\E[N] = \sum_{i=1}^m \pr(|Y_i|\geq 2(\sigma/\sqrt{k})^{1/3}) \leq \sum_{i=1}^m \sigma_i^2(\sigma/\sqrt{k})^{-2/3} \leq \sigma^2 (\sigma^2/k)^{-1/3}.
$$
On the other hand, we have that
\begin{align*}
\pr(N\geq 3k) & \leq \E\left[ \binom{N}{3k} \right]\\
& = \sum_{S\subseteq [m],|S|=3k} \prod_{i\in S} \pr(|Y_i|\geq 2(\sigma/\sqrt{k})^{1/3})\\
& \leq \frac{\left(\sum_{i=1}^m \pr(|Y_i|\geq 2(\sigma/\sqrt{k})^{1/3}) \right)^{3k}}{(3k)!}\\
& = \frac{\E[N]^{3k}}{(3k)!}\\
& \leq O((\sigma^2/k)^{2/3})^{3k}\\
& \leq O(\sigma/\sqrt{k})^k.
\end{align*}
This completes the proof.
\end{proof}

%\input{Fourier}
% !TEX root = main.tex
\section{A Generator for high-variance \fshapes}
\label{sec:largenorm}

In this section, we construct a generator that fools Fourier shapes with high variance.
\begin{thm}\label{th:lnorm}
There exists a constant $C > 0$, such that  for all $\delta > 0$,
there exists an explicit generator $\calG_\ell:\zo^{r_\ell} \to [m]^n$
with seed-length $r_\ell = O(\log(mn/\delta)\log \log(1/\delta))$ such
that for all \fshapes\, $f:[m]^n \to \C_1$ with  $\tvar(f) \geq C
\log^5(1/\delta)$, we have
\[\abs{\E_{z \sim \zo^{r_\ell}}[f(\calG_\ell(z))] - \E_{X \in_u
    [m]^n}[f(X)]} < \delta.\]
\end{thm}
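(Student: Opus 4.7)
The plan is to build $\calG_\ell$ via the two-level Naor--Naor-style template outlined in Section~\ref{sec:overview}. First I would construct a \emph{base} generator $\calG_0:\zo^{r_0}\to[m]^n$ with $r_0=O(\log(mn))$ that fools every $(m,n)$-\fshape\ to some absolute constant error $\gamma_0<1-e^{-1/2}$, so that $\alpha:=\gamma_0+e^{-1/2}<1$. The crucial observation driving the whole proof is that the inequality $|\E_U[f]|\leq\exp(-\tvar(f)/2)$ from~\eqref{eq:intro1} already forces $|\E_U[f]|\leq e^{-1/2}$ whenever $\tvar(f)\geq 1$, so $\calG_0$ need only ensure $|\E_{\calG_0}[f]|\leq \alpha$ rather than approximate $\E_U[f]$ sharply. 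For $\calG_0$ itself I would note that a bit-truncated \fshape\ (safe by Lemma~\ref{lem:bounded}) is computed by a read-once branching program of width $\poly(mn/\delta)$ and $n$ layers, and apply the Nisan--Zuckerman generator from Theorem~\ref{th:nz} with $S=O(\log(mn/\delta))$ and $T=n\leq S^C$; this yields seed $O(\log(mn))$ with sub-constant error whenever $mn$ exceeds a universal constant (small $mn$ is handled by brute force).

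The outer amplification stage is the heart of the construction. I would fix $B=\polylog(1/\delta)$ buckets, pick a hash $h:[n]\to[B]$ from a family that is simultaneously $\delta$-biased and suitably $k$-wise independent via Fact~\ref{HashFamilyFact}, and within each bucket use $\calG_0$ to generate the coordinates indexed by $h^{-1}(b)$ from a bucket seed $s_b\in\zo^{r_0}$. Crucially, the $B$ bucket seeds are not drawn independently: conditional on $h$, the whole evaluation $\prod_{b=1}^B f_b(\calG_0(s_b)_{|h^{-1}(b)})$, where $f_b:=\prod_{j\in h^{-1}(b)}f_j$, is an $(O(\log(mn/\delta)),\,r_0,\,B)$-ROBP, and I would derandomize the sequence $(s_1,\ldots,s_B)$ using the INW generator of Theorem~\ref{th:inwprg}. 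Since $B=\polylog(1/\delta)$, the INW seed is $O(r_0+\log(mn/\delta)\log B+\log(B/\delta)\log B)=O(\log(mn/\delta)\log\log(1/\delta))$, and a careful choice of parameters for $h$ keeps the hash seed within the same bound, so the total seed length meets the target.

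The analysis bounds $|\E_{\calG_\ell}[f]|$ and $|\E_U[f]|$ separately, each by $\delta/2$. The uniform side is immediate: $|\E_U[f]|\leq \exp(-\tvar(f)/2)\leq \exp(-C\log^5(1/\delta)/2)\leq \delta/2$ for $C$ large. For the pseudorandom side, I would apply Lemma~\ref{lm:hashing} to the variance vector $v=(\sigma^2(f_j))_j$; the hypotheses $\|v\|_1=\tvar(f)\geq C\log^5(1/\delta)$ and $\|v\|_\infty\leq 1$ (so $\|v\|_2^2\leq \tvar(f)$) imply that with probability $\geq 1-\delta/4$ over $h\in\hh$, at least $\log(1/\delta)$ buckets $b$ satisfy $\tvar(f_b)\geq 1$. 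For any such ``heavy-bucket-rich'' $h$, were the bucket seeds fully independent the pseudorandom expectation would factor as $\prod_{b=1}^B \E_{s_b}[f_b(\calG_0(s_b))]$ with each heavy-bucket factor bounded in modulus by $\alpha<1$; the total product would therefore have modulus at most $\alpha^{\log(1/\delta)}=\delta^{\log(1/\alpha)}\leq \delta/4$. The INW generator approximates this product within an additional $\delta/4$ error, and a union bound over the ``bad'' $h$'s absorbs the remaining $\delta/4$.

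The main obstacle I anticipate is parameter juggling for the load-balancing hash. The exponent $5$ in the hypothesis $\tvar(f)\geq C\log^5(1/\delta)$ is precisely what buys enough slack in Lemma~\ref{lm:hashing} to guarantee $\Omega(\log(1/\delta))$ heavy buckets with probability $1-\poly(\delta)$ while simultaneously letting the hash-family seed fit inside the $O(\log(mn/\delta)\log\log(1/\delta))$ budget --- for instance, by combining a $\delta$-biased hash with $k$-wise independence for $k$ small enough that the $k\log(nm)$ contribution stays tame. A secondary subtlety is verifying that the ``running product'' branching program for a bit-truncated \fshape\ genuinely has width $\poly(mn/\delta)$, which is exactly what Lemma~\ref{lem:bounded} is designed to deliver.
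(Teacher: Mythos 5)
Your outer amplification stage is essentially the paper's: a spreading hash into $\polylog(1/\delta)$ buckets, a constant-error generator inside each bucket, INW across buckets, and the bound $\abs{\E_U[f]}\leq \exp(-\tvar(f)/2)$ on the uniform side. But the construction you propose for the base generator $\calG_0$ has a genuine gap, and it is exactly the part where the paper has to work hardest. You ask for a seed-$O(\log(mn))$ generator that fools \emph{every} $(m,n)$-\fshape\ to constant error, and you propose to get it by viewing a bit-truncated \fshape\ as an $n$-layer ROBP of width $\poly(mn/\delta)$ and invoking Theorem \ref{th:nz}. That invocation is invalid: the Nisan--Zuckerman generator with seed $O(S)$ only fools $(S,S,S^C)$-programs, i.e.\ programs of length $S^C=\polylog(mn/\delta)$, whereas your program has $T=n$ layers; the condition ``$T=n\leq S^C$'' fails whenever $n$ is super-polylogarithmic in $m/\delta$ (already for $m=2$, constant $\delta$, large $n$). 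Moreover, the object you are asking for --- constant-error fooling of \emph{all} \fshapes\ with seed $O(\log mn)$ --- is essentially the constant-error case of the paper's main theorem and cannot be assumed; even the final result only gives $O(\log(mn)(\log\log mn)^2)$ for that task, and its proof goes through Theorem \ref{th:lnorm}, so assuming it here would be circular. The paper instead only proves the much weaker guarantee that $\abs{\E[f(\calG_1(z))]}\leq c<1$ for shapes with $\tvar(f)\geq 1$ (Lemma \ref{lm:lnormconst}), and this is where the real ideas enter: a pairwise-independent Valiant--Vazirani permutation splits the coordinates into $O(\log n)$ geometrically-sized buckets so that some bucket has $\tvar\in[1/6,4/3]$ with constant probability (Lemma \ref{lem:pw-sampling}); $O(1)$-wise independence inside buckets then bounds that bucket's factor away from $1$ via the main technical lemma (Lemmas \ref{lm:maintech}, \ref{lm:vconstnorm}); and only then is Theorem \ref{th:nz} used --- legitimately, since the resulting ROBP has merely $O(\log n)$ layers --- to recycle seeds across buckets. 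Your proposal contains no substitute for this subsampling-plus-bounded-independence mechanism.

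Two smaller points in the amplification stage. First, Lemma \ref{lm:hashing} applied bucketwise does not directly give ``$\Omega(\log(1/\delta))$ buckets with $\tvar(f_b)\geq 1$'': in the critical regime $\tvar(f)=\Theta(\log^5(1/\delta))$ with $B=\Theta(\log^5(1/\delta))$ buckets, the per-bucket mean is $\Theta(1)$ while $\nmt{v}$ can be as large as $\tvar(f)^{1/2}$, so the tail bound $O(p)^{p/2}\nmt{v}^p/t^p$ with $t=\Theta(1)\cdot\tvar(f)/B$ is vacuous. You need the paper's route (Lemmas \ref{hashmomentsLem} and \ref{SpreadingHashLem}): bound the probability that any single bucket captures a $1/\ell$ fraction of the total variance using moments of $\sum_j\nm{v_{|h^{-1}(j)}}_2^4$, then conclude by pigeonhole that $\ell/2$ buckets are heavy; this also forces a $\delta'$-biased (rather than fully $k$-wise independent) hash family to keep the hash seed within budget, since $p\log(nm)$ with $p=\Theta(\log(1/\delta)/\log\log(1/\delta))$ can exceed $O(\log(mn/\delta)\log\log(1/\delta))$. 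Second, $\alpha^{\log(1/\delta)}=\delta^{\log(1/\alpha)}$ is not $\leq\delta/4$ when $\alpha$ is close to $1$; you need $\Omega_\alpha(\log(1/\delta))$ heavy buckets, i.e.\ $C\log(1/\delta)$ for a constant $C$ depending on the constant-error guarantee, as the paper arranges. These are fixable parameter issues; the missing base generator is the substantive gap.
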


We start with the simple but crucial observation that Fourier shapes
with large variance have small expectation.

\begin{lem}
For any \fshape\, $f: [m]^n \rgta \C_1$, we have
\begin{align}
\label{eq:var-bound}
\abs{\E_{X \in_u [m]^n}[f(X)] }\leq \exp(-\tvar(f)/2).
\end{align}
\end{lem}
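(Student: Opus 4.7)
The plan is to exploit independence and then a simple factor-by-factor estimate. Since $f$ is a Fourier shape, $f(x)=\prod_{j=1}^n f_j(x_j)$, and since the coordinates of $X$ are independent under the uniform distribution on $[m]^n$, I can factor the expectation as
\[
\bigl|\E_{X}[f(X)]\bigr| \;=\; \prod_{j=1}^n \bigl|\E[f_j(X_j)]\bigr|.
\]
So it suffices to bound each factor by $\exp(-\sigma^2(f_j(X_j))/2)$ and multiply.

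For the per-coordinate bound I would use the identity recalled in the excerpt just before the lemma: for any complex-valued $Z$, $\sigma^2(Z) + |\E[Z]|^2 = \E[|Z|^2]$. Since $f_j$ takes values in $\C_1$, $\E[|f_j(X_j)|^2] \le 1$, so
\[
|\E[f_j(X_j)]|^2 \;\le\; 1 - \sigma^2(f_j(X_j)).
\]
Taking square roots and applying the elementary inequality $\sqrt{1-t}\le e^{-t/2}$ (valid for $t\in[0,1]$) gives
\[
|\E[f_j(X_j)]| \;\le\; \exp\!\bigl(-\sigma^2(f_j(X_j))/2\bigr).
\]

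Finally, multiplying over $j$ turns the product into the exponential of the sum of variances, which is exactly $\tvar(f)$ by definition:
\[
\bigl|\E_X[f(X)]\bigr| \;\le\; \prod_j \exp\!\bigl(-\sigma^2(f_j(X_j))/2\bigr) \;=\; \exp\!\bigl(-\tvar(f)/2\bigr).
\]

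There is no real obstacle here; the only thing to double-check is that $\sigma^2(f_j(X_j)) \in [0,1]$ so that the bound $\sqrt{1-t}\le e^{-t/2}$ applies, which follows immediately from $f_j(X_j)\in\C_1$. This proof is essentially the computation already displayed as inequality~(\ref{eq:intro1}) in the overview section, now recorded formally as a lemma.
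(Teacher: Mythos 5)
Your proof is correct and is essentially identical to the paper's: both factor the expectation using independence, bound each $|\E[f_j(X_j)]|$ by $(1-\sigma^2(f_j(X_j)))^{1/2}$ via $\E[|f_j(X_j)|^2]\leq 1$, and conclude with $\sqrt{1-t}\leq e^{-t/2}$. Nothing further is needed.
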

\begin{proof}
Let $f(x) = \prod_j f_j(x_j)$. Since $f_j(x) \in \C_1$, we have
$|f_j(x)| \leq 1$. Let  $\mu_j = \E_{X_j \in [m]}[f_j(X_j)]$. For $X \in_u [m]^n$,
\begin{align*}
\sigma_j^2  = \E[|f_j(X_j) - \mu_j|^2] = \E[|f_j(X_j)|^2] - |\mu_j|^2 \leq 1 - |\mu_j|^2.
\end{align*}
Hence
\begin{align*}
\abs{\E_X[f(X)]} = \prod_{j=1}^n \abs{\mu_j} &\leq \prod_{j=1}^n( 1 - \sigma^2_j)^{1/2}\\
&\leq \exp(- \sum_{j=1}^n \sigma^2_j/2) \leq \exp(-\tvar(f)/2).
\end{align*}
\end{proof}

We build the generator in two steps. We first build a
generator with seed-length $O(\log n)$ which achieves constant error
for all $f$ with $\tvar(f) \geq 1$. In the
second step, we reduce the error down to $\delta$. This
construction is inspired by a construction of Naor and
Naor \cite{NaorN93} of small-bias spaces.% the difference being that we need generators for space bounded algorithms for various amplification steps, as opposed to expander random walks.

\subsection{A generator with constant error}

Our goal in this subsection is get a generator with constant error for
\fshapes\ where $\tvar(f) = \Omega(1)$. We start by showing that when $\tvar(f) = \Theta(1)$ (instead of
just $\Omega(1)$),  $O(1)$-wise independence is enough to fool $f$.

\begin{lem}\label{lm:vconstnorm}
For all constants $0 < c_1 < c_2 $, there exist $p \in \Z_+$ and
$0 < c' < 1$ such that the following holds. For any $(m,n)$-\fshape, $f$ with $\tvar(f) \in
[c_1, c_2]$, and $Z \sim [m]^n$ $2p$-wise independent,
\[ \abs{\E_{Z}[f(Z)]} < c'.\]
\end{lem}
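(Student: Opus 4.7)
The plan is to combine the already-established fact that high-variance Fourier shapes have small expectation under the uniform distribution (inequality \eqref{eq:var-bound}) with the main technical lemma (Lemma \ref{lm:maintech}) to show that limited independence reproduces that expectation closely enough. Write $f(x) = \prod_j f_j(x_j)$. Under the uniform measure over $[m]^n$, inequality \eqref{eq:var-bound} gives
\[
\abs{\E_{X \in_u [m]^n}[f(X)]} \leq \exp(-\tvar(f)/2) \leq \exp(-c_1/2) =: c_0 < 1,
\]
which is the only place the lower bound $\tvar(f) \geq c_1$ will be used.

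Next I would apply Lemma \ref{lm:maintech} to the random variables $Y_j := f_j(Z_j)$. Since each $Y_j$ is a function of $Z_j$ alone, $2p$-wise independence of $Z$ implies $2p$-wise independence of the $Y_j$. Also $Y_j \in \C_1$ and $\sum_j \sigma^2(Y_j) = \tvar(f) \leq c_2$, so the lemma yields
\[
\abs{\E_Z[f(Z)] - \E_X[f(X)]} \leq \exp(O(p)) \cdot \bkets{c_2/p}^{\Omega(p)}.
\]
Since $c_2$ is a fixed constant, for $p$ a sufficiently large constant (depending on $c_1$ and $c_2$) the right-hand side is at most $(1-c_0)/2$.

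Combining the two bounds by the triangle inequality,
\[
\abs{\E_Z[f(Z)]} \leq c_0 + (1-c_0)/2 = (1+c_0)/2,
\]
so the lemma holds with $c' = (1+c_0)/2 < 1$ and $p$ chosen as above. The main (modest) thing to check is that the constant hidden in the $\Omega(\cdot)$ of Lemma \ref{lm:maintech} does not depend on $c_2$, so that for every fixed $c_2$ one can indeed drive $\exp(O(p))(c_2/p)^{\Omega(p)}$ below an arbitrarily small constant by taking $p$ large; this is immediate from the form of the bound, as the ratio $c_2/p$ eventually dominates the exponential prefactor. No other subtlety arises — the upper bound $c_2$ is used only to keep the error term controlled, and the lower bound $c_1$ only to ensure $c_0 < 1$.
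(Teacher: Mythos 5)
Your proposal is correct and follows essentially the same route as the paper's proof: apply Lemma \ref{lm:maintech} to $Y_j = f_j(Z_j)$ to bound $\abs{\E[f(Z)] - \E[f(X)]}$ by $\exp(O(p))(c_2/\sqrt{p})^{\Omega(p)}$, bound $\abs{\E[f(X)]}$ by $\exp(-c_1/2)$ via Equation \eqref{eq:var-bound}, and choose $p$ large enough (as a constant depending on $c_2$) so the triangle inequality gives a constant $c' < 1$. The only cosmetic difference is that you make the final constant explicit as $c' = (1+\exp(-c_1/2))/2$, whereas the paper leaves it unspecified.
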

\begin{proof}
Let $f = \prod_j f_j$, $X \in_u [m]^n$. Now, by Lemma \ref{lm:maintech} applied to $Y_j = f_j(Z_j)$, we have,
$$\abs{\E[f(Z)] - \E[f(X)]} \leq \exp(O(p)) (\tvar(f)/\sqrt{p})^{\Omega(p)} = \exp(O(p))(c_2/\sqrt{p})^{\Omega(p)}. $$
Note that by taking $p$ to be a sufficiently large constant compared to $c_2$, we can make the last bound arbitrary small.

On the other hand, by Equation \eqref{eq:var-bound},
\[ \abs{\E[f(X)]} \leq \exp(-\tvar(f)/2) \leq \exp(-c_1/2).\]
Therefore,
\[ \abs{\E[f(Z)]} \leq \exp(-c_1/2) + \exp(O(p)) (c_2/\sqrt{p})^{\Omega(p)} < c'\]
for $p$ sufficiently large constant and some constant $0 < c' < 1$.
\ignore{
The claim now
We first bound  away from $1$ when $X \in_u \dpm^n$ using Equation \eqref{eq:var-bound}.
\begin{equation}
\label{eq:lnorm1}
\abs{\E_{X}f(X)} = \abs{\prod_{i=1}\mu_i} \leq \exp(-c_1/2).
\end{equation}

We now use Lemma \ref{lm:maintech} to conclude that
\[ \abs{\E_Yf(Y) - \prod_{i=1}\mu_i} \leq \exp(O(k)) (c_2/\sqrt{k})^k.\]
By taking $k$ to be sufficiently large constant compared to $c_2$, we
can make the RHS arbitrarily small. Hence we have
\[ \abs{\E_Yf(Y)} \leq  \abs{\prod_{i=1}\mu_i} + c_3 < c'.\]}
\end{proof}

We reduce the general case of $\tvar(f) \in [1,n]$ to the case above
where $\tvar(f) = \Theta(1)$ by using the Valiant-Vazirani technique
of sub-sampling. For $B \subseteq [n]$ let $\tvar(f_B) = \sum_{i \in
  B}\sigma_i^2$. If we  sample a random subset $B \subseteq [n]$ with $|B| \approx
n/\tvar(f)$ in a pairwise independent manner,  we will get
$\tvar(f_B) = \Theta(1)$ with $\Omega(1)$ probability.
Since we do not know $\tvar(f)$, we sample $\log(n)$ subsets whose
cardinalities are geometrically increasing; one of them
is likely to satisfy  the desired bound.

We set up some notation that will be used in the remainder of this section.
\begin{itemize}
\item Assume $n$ is a power of $2$, and set $T = \log_2(n) -1$.
Let $\Pi \subseteq \mathbb{S}_n$ be a family of pairwise independent permutations so
that $\pi \in_u \Pi$ can be sampled efficiently with $O(\log n)$
random bits. For $0 \leq j \leq T$, let $B_j = \{\pi(i): i
\in \{2^j,\ldots,2^{j+1}-1\}\}$ be the $2^j$ co-ordinates that land in
the $j^{th}$ bucket.
\item For $v \in \R^n$, let $v^j = v_{B_j}$ denote the projection of $v$
onto coordinates in bucket $j$. Similarly, for $x \in [m]^n$, let $x^j$ denote the
projection of $x$ to the co-ordinates in $B_j$.
\item Fix an $(m,n)$-\fshape\ $f:[m]^n \to \C_1$ with $f(x) =\prod_i
f_i(x_i)$. Define $f^j:[m]^{B_j} \rgta \C_1$ as $f^j(x^j) = \prod_{i \in B_j}f_i(x_i)$.
\end{itemize}

\begin{lem}
\label{lem:pw-sampling}
Let $v \in \R^n$ with $\nmt{v}^2 \in [1,n]$, $\|v\|_\infty\leq 1$ and
$t \in [\log_2 n]$ be such that $n/2^{t+1} \leq \nmt{v}^2 \leq
n/2^t$. Then,
\[\Pr_{\pi \in_u \Pi}\left[\nmt{v^t}^2 \in[1/6,4/3]\right] \geq 7/16.\]
\end{lem}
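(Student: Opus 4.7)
The plan is to apply the second moment method to the random variable
\[ Z := \|v^t\|_2^2 = \sum_{a=1}^n Y_a\, v_a^2, \qquad Y_a := \mathds{1}[a \in B_t]. \]
Both the mean and the variance of $Z$ are computable using only the pairwise independence of $\Pi$, which is what the lemma assumes.

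For the mean, pairwise independence gives $\Pr[a \in B_t] = 2^t/n$ for every $a$, so $\mu := \E[Z] = (2^t/n)\|v\|_2^2$, and the hypothesis $n/2^{t+1} \leq \|v\|_2^2 \leq n/2^t$ pins this to $\mu \in [1/2,1]$. For the variance, I would use that $\E[Y_a Y_b] = 2^t(2^t-1)/(n(n-1))$ for $a \neq b$, which gives $\cov(Y_a,Y_b) \leq 0$ (the standard negative-correlation property of sampling without replacement). Dropping the negative off-diagonal contributions,
\[ \var(Z) \leq \sum_a \var(Y_a)\, v_a^4 \leq \frac{2^t}{n}\sum_a v_a^4 \leq \frac{2^t}{n}\,\|v\|_\infty^2\,\|v\|_2^2 \leq \mu, \]
where the last step uses $\|v\|_\infty \leq 1$; in particular $\var(Z) \leq 1$.

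To convert these moment estimates into the claimed probability, I would apply one-sided Chebyshev (Cantelli) bounds separately to the two sides of the target window:
\[ \Pr[Z \geq 4/3] \leq \frac{\var(Z)}{\var(Z) + (4/3-\mu)^2}, \qquad \Pr[Z \leq 1/6] \leq \frac{\var(Z)}{\var(Z) + (\mu-1/6)^2}. \]
Since $\mu \in [1/2,1]$, each of $4/3 - \mu$ and $\mu - 1/6$ is at least $1/3$, and a short case analysis on where $\mu$ lies in $[1/2,1]$ (splitting, say, at $\mu = 3/4$) together with the sharp inequality $\var(Z) \leq \mu$ shows that the sum of the two Cantelli bounds is at most $9/16$, giving $\Pr[Z \in [1/6,4/3]] \geq 7/16$.

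The main obstacle is that $\var(Z)$ is of the same order as $\mu$, so a naive two-sided Chebyshev bound (or a cruder use of $\var(Z) \leq 1$ alone) is much too weak. The argument has to exploit both the one-sided Cantelli form and the fact that the target window $[1/6,4/3]$ is placed asymmetrically around every admissible value of $\mu$. The constants are essentially tight for this approach, so the calculation has little slack and the precise shape of the window matters.
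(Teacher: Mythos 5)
You set up exactly the same second-moment frame as the paper: $Z=\|v^t\|_2^2=\sum_a v_a^2 Y_a$, the mean $\mu=(2^t/n)\|v\|_2^2\in[1/2,1]$, and a variance bound via pairwise independence (your negative-correlation shortcut is fine and gives $\mathrm{Var}(Z)\le \frac{2^t}{n}\|v\|_4^4\le \mu\,\|v\|_\infty^2\le\mu$, which is indeed all that the stated hypothesis $\|v\|_\infty\le 1$ yields). The gap is the last step: the claim that the two Cantelli bounds sum to at most $9/16$ whenever $\mathrm{Var}(Z)\le\mu$ and $\mu\in[1/2,1]$ is false, and no case analysis can make it true. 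The nearest edge of the window $[1/6,4/3]$ is at distance exactly $1/3$ from $\mu$ at the endpoints ($\mu=1$ versus $4/3$, or $\mu=1/2$ versus $1/6$), so with $\sigma^2=\mu\ge 1/2$ the single Cantelli term for that edge is already at least $\frac{1/2}{1/2+(1/3)^2}=\frac{9}{11}>\frac{9}{16}$, and at $\mu=\sigma^2=1$ the upper-tail term alone is $9/10$; in fact for every $\mu\in[1/2,1]$ the two terms with $\sigma^2=\mu$ sum to more than $1$. Nor is this an artifact of Cantelli: from pairwise independence and $\|v\|_\infty\le 1$ alone the conclusion at these constants is not derivable at all --- take $\Pi$ to be the family of all permutations (which is pairwise independent) and $v\in\{0,1\}^n$ with exactly $2^t=\sqrt n$ ones; then $Z$ is hypergeometric with mean $1$ and $\Pr[Z\in[1/6,4/3]]=\Pr[Z=1]\to e^{-1}\approx 0.37<7/16$.

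The paper's proof runs the same computation but does not stop at $\mathrm{Var}(V)\le\mu$: it keeps the bound $\mathrm{Var}(V)\le\frac{2^t}{n}\|v\|_4^4\le\frac{2^t\|v\|_2^2}{n}\|v\|_\infty^2$ and takes it to be at most $1/16$, after which plain two-sided Chebyshev at radius $1/3$ gives failure probability $(1/16)/(1/9)=9/16$, and the window is just $[\E V-1/3,\E V+1/3]\subseteq[1/6,4/3]$ since $\E V\in[1/2,1]$. (Note that the step ``$\le 1/16$'' really uses $\|v\|_\infty$ bounded by a constant like $1/4$ rather than $1$ --- consistent with the counterexample above --- so the heart of the lemma is a variance bound that is a small constant, not a cleverer tail inequality.) To repair your write-up you must strengthen the variance estimate, i.e.\ exploit $\frac{2^t}{n}\|v\|_4^4$ together with a genuinely small bound on $\|v\|_\infty$; switching from Chebyshev to Cantelli and leaning on the asymmetry of $[1/6,4/3]$ cannot close the gap.
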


The proof of this lemma is standard and is deferred to Appendix
\ref{app:largenorm}.

This naturally suggests using an $O(1)$-wise independent distribution within
each bucket. But using independent strings across the $\log(n)$ buckets would
require a seed of length $O(\log(mn) \cdot (\log n))$. We analyze our generator
assuming independence across distinct buckets, but then
recycle the seeds using $\prg$s for space bounded computation to keep the
seed-length down to $O(\log(mn))$ (rather than $O(\log^2(n))$).

We now prove the main claim of this subsection.

\begin{lem}\label{lm:lnormconst}
There exists an explicit generator $\calG_1:\zo^r \to [m]^n$ with $r =
O(\log (mn))$ such that for all \fshapes\, $f:[m]^n \rgta \C_1$ with $\tvar(f) \geq 1$, we have
\[ \abs{\E_{z \sim \zo^r}[f(\calG_1(z))]} \leq c.\]
for some constant $0 < c < 1$.
\end{lem}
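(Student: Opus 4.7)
The plan is to combine the pairwise sub-sampling of Lemma~\ref{lem:pw-sampling} with the bounded-independence guarantee of Lemma~\ref{lm:vconstnorm} within each bucket, and then recycle the per-bucket seeds via a pseudorandom generator for small-space machines. Concretely, $\calG_1$ will consist of three ingredients: (i) a pairwise-independent permutation $\pi\in\Pi$ sampled from $O(\log n)$ bits, which defines the buckets $B_0,\ldots,B_T$ as in the setup; (ii) for each bucket $B_j$, a $2p$-wise independent string $X^j\in[m]^{B_j}$ generated from a seed $s_j\in\zo^{r_0}$ of length $r_0=O(\log(mn))$, where $p$ is the constant produced by Lemma~\ref{lm:vconstnorm} applied with $c_1=1/6$, $c_2=4/3$; (iii) instead of drawing the $T+1=O(\log n)$ seeds $s_0,\ldots,s_T$ independently (which would cost $\Omega(\log n\cdot\log(mn))$ random bits), we draw them jointly from the Nisan--Zuckerman generator of Theorem~\ref{th:nz}, so that the total seed length remains $O(\log(mn))$.

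For the analysis, first pretend that $s_0,\ldots,s_T$ are truly uniform and mutually independent. Fix a Fourier shape $f=\prod_i f_i$ with $\tvar(f)\geq 1$ and let $v\in\R^n$ have $v_i=\sigma(f_i)$. Since each $f_i$ takes values in $\C_1$ we have $\|v\|_\infty\leq 1$ and $\nmt{v}^2=\tvar(f)\in[1,n]$; choose $t$ so that $\nmt{v}^2\in[n/2^{t+1},n/2^t]$. By Lemma~\ref{lem:pw-sampling}, the event $\calE=\{\pi:\tvar(f^t)=\nmt{v^t}^2\in[1/6,4/3]\}$ has probability at least $7/16$. Whenever $\pi\in\calE$, Lemma~\ref{lm:vconstnorm} applied to the sub-shape $f^t$ under the $2p$-wise independent string $X^t$ gives $|\E_{s_t}[f^t(X^t(s_t))]|\leq c'$ for some universal $c'<1$; because $|\E_{s_j}[f^j]|\leq 1$ trivially and the seeds are momentarily independent, $|\E_{s_0,\ldots,s_T}[f(\calG_1)\mid\pi]|=\prod_j|\E_{s_j}[f^j]|\leq c'$ on $\calE$. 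Averaging over $\pi$ yields the idealized bound $|\E[f(\calG_1)]|\leq (7/16)c'+(9/16)=:c_0<1$.

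To remove the independence assumption, I would realize the map $(s_0,\ldots,s_T)\mapsto\prod_{j=0}^T f^j(X^j(s_j))$ (for each fixed $\pi$) as a read-once branching program: the machine reads the seeds in order and, after processing $s_0,\ldots,s_{j-1}$, stores the running partial product in $\C_1$. By Lemma~\ref{lem:bounded} we may restrict to Fourier shapes with $\log(f_i)$ of bit-precision $O(\log(mn))$, so the state is representable in $S=O(\log(mn))$ bits, while each layer reads $D=r_0=O(\log(mn))$ input bits. Since the number of layers is $T+1=O(\log n)\leq S^C$ for any $C\geq 1$, Theorem~\ref{th:nz} provides an NZ generator with seed length $O(S)=O(\log(mn))$ that fools this branching program to error $\epsilon_{NZ}=2^{-\log^{1-c}S}=o(1)$. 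Hence for every fixed $\pi$ the pseudorandom expectation differs from the fully-independent one by at most $\epsilon_{NZ}$, giving $|\E_z[f(\calG_1(z))]|\leq c_0+\epsilon_{NZ}$, which is bounded by some $c<1$ for all sufficiently large $mn$ (small $mn$ is handled by brute force). Adding the $O(\log n)$ bits for $\pi$ to the $O(\log(mn))$ NZ seed gives total seed length $O(\log(mn))$, as required.

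The main obstacle is setting things up so that the Nisan--Zuckerman generator, rather than the weaker INW generator, can be applied: INW would introduce an extra $\log\log n$ factor, whereas NZ demands $D=S$, which forces us to budget the bit precision carefully via Lemma~\ref{lem:bounded} and to verify that the product across buckets is indeed computable by a read-once branching program whose state size matches its per-layer input length. Once this bookkeeping is in place, Lemmas~\ref{lem:pw-sampling} and \ref{lm:vconstnorm} combine essentially mechanically.
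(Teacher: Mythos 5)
Your proposal matches the paper's proof essentially step for step: pairwise-independent bucketing via Lemma~\ref{lem:pw-sampling}, constant-wise independence inside each bucket analyzed through Lemma~\ref{lm:vconstnorm}, the bound $(7/16)c'+9/16<1$ assuming independent per-bucket seeds, and then recycling those seeds with the Nisan--Zuckerman generator of Theorem~\ref{th:nz} after using Lemma~\ref{lem:bounded} to make the running partial product representable by an ROBP of logarithmic width. The only (harmless) difference is bookkeeping of the ROBP parameters, where your choice $S=D=O(\log(mn))$ is if anything slightly more careful about meeting the hypotheses of Theorem~\ref{th:nz} than the paper's statement of $S=O(\log n)$, $D=O(\log(mn))$.
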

\begin{proof}
Let $\pi \in_u \Pi$ and let $Z^j \sim [m]^{2^j}$ be an independent $p$-wise independent string for a parameter $p = O(1)$ to be chosen later. Define
\[ \calG_1'(\pi,Z^0,\ldots,Z^T) = Y, \;\;\; \text{ where} \ \ Y_{B_j}
= Z^j \  \ \text{for}  \ j \in \{0,\ldots, T\}.\]
In other words, the generator applies the string $Z^j$ to the
coordinates in bucket $B_j$.

Observe that $f(Y) = \prod_{j=0}^{\log(n) -1} f^j(Z^j)$. Since the
$Z^j$'s are independent of each other
\[
\abs{\E[f(Y)]} = \abs{\prod_{j=0}^{\log(n) -1} \E[f^j(Z^j)]} \leq \abs{\E[f^t(Z^t)]},
\]
for any $t \leq T$. Applying Lemma \ref{lm:vconstnorm} to $v= (\sigma_1(f_1),\ldots,\sigma_n(f_j))$,
we get that for some $t \leq T$, $\tvar(f^t) = \nmt{v^t}^2 \in [1/6,4/3]$ with probability at least $7/16$. Conditioned on this
event, Lemma \ref{lm:vconstnorm} implies that for $p$ a sufficiently large constant, there exists a constant $c' < 1$ so that
$\abs{\E[f^t(Z^t)]} < c'$. Therefore, overall we get
\[ \abs{\E[f(Y)]} \leq \abs{\E[f^t(Z^t)]} \leq \frac{9}{16} + \frac{7c'}{16} = c'' < 1.\]

We next improve the seed-length of $\calG_1'$ using the $\prg$ for
ROBPs of Theorem \ref{th:nz}. To this end, note that by Lemma \ref{lem:bounded} we can
assume that every $\log(f_i(x_i))$, and hence every $\log(f^j(x^j))$, has bit
precision at most $O(\log n)$ bits (since our goal is to get error
$\delta =O(1)$). Further, each $Z^j$ can be generated
efficiently with $O(\log(mn))$ random bits.

Thus, for a fixed permutation $\pi$, the computation of $f(\calG'(\pi,Z^1,\ldots,Z^T))$ can
be done by a $(S,D,T)$-ROBP where $S,T$ are $O(\log n)$ and $D =
O(\log (mn))$: for $j \in \{1,\ldots,T\}$, the ROBP computes
$f^j(Z^j)$ and multiplies it to the product computed so far, which can
be done using $O(\log n)$ bits of space. Let $\calG^{NZ}:\zo^r \to
\left(\zo^D\right)^T$ be the generator in Theorem \ref{th:nz} fooling
$(S,D,T)$-ROBPs as above with error $\delta < (1 - c'')/2$. $\calG^{NZ}$ has seedlength $O(\log(mn))$. Let
\[\calG_1(\pi,z) = \calG_1'(\pi,\calG^{NZ}(z)).\]

It follows  that $\abs{\E[f(\calG_1(\pi,z))]}< c$ for some constant $c
< 1$. Finally, the seed-length of $\calG_1$ is
$O(\log (mn))$ as $\pi$ can be sampled with $O(\log n)$ random bits
and the seed-length of $\calG^{NZ}$ is $O(\log (m n))$. The lemma is now
proved.
\end{proof}

\subsection{Reducing the error}

We now amplify the error to prove Theorem \ref{th:lnorm}. The starting point for the construction is the
observation that for $X \in_u [m]^n$, $\abs{\E[f(X)]} \leq \exp(-\tvar(f/2)) \leq \delta$ once $\tvar(f) \gg
\log(1/\delta)$. Therefore, it suffices to design a generator so that
$\E[f] \ll \delta$, when $\tvar(f)$ is sufficiently large.

Our generator will partition $[n]$ into $m = O((\log(1/\delta))^5)$
buckets $B_1,\ldots,B_m$, using a family of hash functions with the
following \emph{spreading} property:
\begin{definition}
 A family of hash functions $\hh = \{h:[n] \to [m]\}$ is said to be
 $(B,\ell,\delta)$-spreading if for all $v \in [0,1]^n$ with
 $\nmt{v}^2 \geq B$,
 \[ \Pr_{h \in_u \hh}[|\{j \in [m]: \nmt{v_{h^{-1}(j)}}^2 \geq B/2m\}|
   \geq \ell] \geq 1 - \delta.\]
 \end{definition}

Using the notation from the last subsection, we write $f(x) =
\prod_{j=1}^mf^j(x^j)$ where $f^j(x^j) =\prod_{i \in B_j}f_i(x_i)$. If $\tvar(f)$ is
sufficiently large, then the spreading property guarantees that for at least $\Omega(\log(1/\delta))$ of the
buckets $B_j$, $\tvar(f^j) \geq 1$. If we now generate $X \in [m]^n$ by  setting $X_{B_j}$ to be
an independent instantiation of the generator $\calG_1$ from Lemma
\ref{lm:vconstnorm}, then  we get $\E[f(X)] \ll \delta$. As in
the proof of Lemma \ref{lm:lnormconst}, we keep the seed-length down
to $\tilde{O}(\log(n/\delta))$ by recycling the seeds for the buckets
using a $\prg$ for small-space machines.

We start by showing that the desired hash functions can be generated
from a small-bias family of hash functions. We show that it satisfies the conditions
of the lemma by standard moment bounds. The proof is in Appendix
\ref{app:largenorm}

\begin{lem}\label{SpreadingHashLem}
For all constants $C_1$, there exist constants $C_2,C_3$ such that
following holds. For all $\delta \geq 0$, there exists an explicit
hash family $\hh = \{h:[n] \to [T]\}$, where $T = C_2
\log^5(1/\delta))$ which is $(C_3 \log^5(1/\delta), C_1
\log(1/\delta), \delta)$-spreading and $h \in_u \hh$ can be sampled
efficiently with $O(\log(n/\delta))$ bits.
\end{lem}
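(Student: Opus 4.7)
The plan is to realize $\hh$ as a $\delta'$-biased hash family from $[n]$ to $[T]$ with $T = C_2 \log^5(1/\delta)$ buckets and bias $\delta' = \delta^{O(1)}$ polynomially small in $\delta$; by Fact~\ref{HashFamilyFact} this needs only $O(\log(n/\delta')) = O(\log(n/\delta))$ bits to sample. I will verify the spreading property by combining a Cauchy--Schwarz pigeonhole with the $p$-th moment bound of Lemma~\ref{hashmomentsLem}, choosing the exponent $p$ carefully.

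Fix $v \in [0,1]^n$ with $A := \|v\|_2^2 \geq B := C_3 \log^5(1/\delta)$, and for a hash $h$ write $w_j = \|v_{h^{-1}(j)}\|_2^2$, so that the quantity $h(v)$ of \eqref{eq:hv} equals $\sum_j w_j^2$ and $\sum_j w_j = A$. Suppose the bad event $\mathcal{E}$ holds, namely fewer than $\ell := C_1 \log(1/\delta)$ of the $w_j$'s exceed $B/(2T)$. Then the contribution from the ``light'' buckets is at most $T\cdot B/(2T) = B/2 \leq A/2$, so the at-most-$\ell$ ``heavy'' buckets collectively carry weight at least $A/2$; Cauchy--Schwarz applied to these heavy buckets then gives $h(v) \geq A^2/(4\ell)$. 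Hence it suffices to show $\Pr[h(v) \geq A^2/(4\ell)] \leq \delta$.

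For this I will apply Markov's inequality to $\E[h(v)^p]$. Using $v \in [0,1]^n$ to get $\|v\|_4^4 \leq \|v\|_\infty^2 \|v\|_2^2 \leq A$ (hence $\|v\|_4^{4p} \leq A^p$), Lemma~\ref{hashmomentsLem} produces three terms in the moment bound, yielding after Markov
\[
\Pr[\mathcal{E}] \leq O(p)^{2p}\bkets{\frac{4\ell}{T}}^p + O(p)^{2p}\bkets{\frac{4\ell}{A}}^p + (4T\ell)^p \delta',
\]
and I need each piece to be at most $\delta/3$.

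The main obstacle, and the reason for the choice of parameters, is balancing $p$. The natural choice $p = \log(1/\delta)$ makes the first two terms easy (since $\delta^{1/p} = \Theta(1)$), but it blows $(4T\ell)^p$ up to $2^{O(\log(1/\delta)\log\log(1/\delta))}$, forcing $\delta'$ superpolynomially small in $\delta$ and destroying the seed length. The fix is the slightly smaller choice $p = C \log(1/\delta)/\log\log(1/\delta)$: now $\delta^{1/p}\approx 1/\log(1/\delta)$ is only polylogarithmically small, but this is comfortably absorbed by the generous $T, B = \Theta(\log^5(1/\delta))$, which dominate $p^2\ell \log(1/\delta) \approx \log^4(1/\delta)$, while the exponent $p\log(4T\ell) = O(\log(1/\delta))$ in the third term lets $\delta' = \delta^{O(1)}$ suffice. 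This precisely explains the $\log^5$ in the statement and keeps $\log(1/\delta') = O(\log(1/\delta))$, delivering the claimed seed length $O(\log(n/\delta))$.
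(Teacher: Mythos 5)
Your proposal is correct and follows essentially the same route as the paper: a $\delta'$-biased family with $\delta'=\delta^{O(1)}$, the moment bound of Lemma~\ref{hashmomentsLem}, and Markov's inequality with $p=\Theta(\log(1/\delta)/\log\log(1/\delta))$, which is exactly how the paper balances the three error terms against $T,B=\Theta(\log^5(1/\delta))$ and the bias. The only (harmless) difference is the combinatorial step: you show directly via Cauchy--Schwarz that the bad event forces $h(v)\geq \|v\|_2^4/(4\ell)$, whereas the paper bounds the probability that any single bucket carries mass at least $\|v\|_2^2/\ell$ (forcing $h(v)\geq\|v\|_2^4/\ell^2$) and then uses a deterministic pigeonhole to extract $\ell/2$ heavy buckets.
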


We are now ready to prove Theorem \ref{th:lnorm}.
\begin{proof}[Proof of Theorem \ref{th:lnorm}]
Let $\ell = C \log(1/\delta)$ for some constant to be chosen later and
let $\hh = \{h:[n] \to [T]\}$ be a $(B,\ell,\delta)$-spreading family
as in Lemma \ref{SpreadingHashLem} above for $B =
\Theta(\log^5(1/\delta))$ and $T = \Theta(\log^5(1/\delta))$. Let
$\calG_1:\zo^{r_1} \to [m]^n$ be the generator in Lemma
\ref{lm:vconstnorm}. Define a new generator $\calG_\ell':\hh \times
(\zo^{r'})^T \to [m]^n$ as:
\[ \calG_\ell'(h,z^1,\ldots,z^T) = X,\;\;\;\text{ where $X_{h^{-1}(j)} =
  \calG_1(z^j)$ for $j \in [T]$}.\]

Let $f: [m]^n \rgta \C_1$ with $\tvar(f) \geq \max(2T,B)$. For $h \in
\hh$, let $I = \{j: \tvar{f^j} \geq 1\}$. For any fixed $h \in \hh$,
as the $z^j$'s are independent of each other,
\[ \abs{\E[f(X)]} = \prod_{j=1}^m \abs{\E[f^j(\calG_1(z^j))]} \leq \prod_{j \in I} \abs{\E[f^j(\calG_1(z^j))]} \leq c^{|I|},\]
where $c < 1$ is the constant from Lemma \ref{lm:vconstnorm}. By the
spreading property of $\hh$, with probability at least $1-\delta$,
$|I| \geq C \log(1/\delta)$. Therefore, for $C$ sufficiently large,
\[\abs{\E[f(X)]} \leq \delta + c^{C \log(1/\delta)} < 2 \delta.\]

As in Lemma \ref{lm:lnormconst}, we recycle the seeds for the various
buckets using the PRGs for ROBPs. By Lemma \ref{lem:bounded}, we may assume that  $f^j$ has bit
precision at most $O(\log (n/\delta))$ bits.  Further note that
\[ f(\calG_\ell'(h,z^1,\ldots,z^m)) = \prod_{j=1}^mf^j(\calG_1(z^j)). \]
For a fixed hash function $h \in \hh$, this
can be computed by a $(S,D,T)$-ROBP where $S = O(\log(n/\delta))$ and
$D = O(\log (mn))$, corresponding to the various possible seeds for
$\calG_1$. Let $\calG^{INW}:\zo^r \to
\left(\zo^D\right)^T$ be a generator fooling $(S,D,T)$-ROBPs as in
Theorem \ref{th:inwprg} with error $\delta$ and define
\[ \calG_\ell(h,z) = \calG_\ell'(h, \calG^{INW}(z)).\]
The seed-length is dominated by the seed-length of $\calG^{INW}$, which is
\[ O(\log(mn/\delta)\log T) = O(\log(mn/\delta)\log\log(1/\delta)).\]
It follows that $\abs{\E[f(\calG_\ell(h,z))]} < 3 \delta$, whereas for a truly random $Y \in_u [m]^n$,
\[ \abs{\E[f(Y)]} \leq \exp(-\tvar(f)/2) < \delta.\]
The theorem now follows.
\end{proof}

%%% Local Variables:
%%% mode: latex
%%% TeX-master: "main"
%%% End:

% !TEX root = main.tex
\section{Alphabet reduction for \fshapes}

In this section, we describe our alphabet-reduction procedure, which
reduces the general problem of constructing an $\epsilon$-PRG for
$(m,n)$-\fshapes\ where $m$ could be much larger than $n$, to that of
constructing an $\epsilon/\log(m)$-PRG for $(n^4,n)$-\fshapes. This
reduction is composed of $O(\log \log m)$ steps where in each step we
reduce fooling $(m,n)$-\fshapes\ to fooling $(\sqrt{m},n)$-\fshapes. Each of these
steps in turn will cost $O(\log(m/\epsilon))$ random bits, so that the
overall cost is $O(\log(m/\epsilon) \cdot (\log \log m))$. Concretely, we show the following:

\begin{thm}\label{th:areduxmain}
Let $n, \delta > 0$ and suppose that for some $r' = r'(n,\delta')$,  for all $m' \leq n^4$ there exists an explicit
generator $\calG_{m'}:\zo^{r_1} \to [m']^n$ which $\delta'$-fools
$(m',n)$-\fshapes. For all $m$, there exists an explicit
generator $\calG_m:\zo^r \to [m]^n$ which $(\delta' + \delta)$-fools
$(m,n)$-\fshapes\, with seed-length $r = r' + O(\log(m/\delta)\log\log(m))$.
\end{thm}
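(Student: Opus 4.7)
The plan is to construct $\calG_m$ by iterating a single-step alphabet reduction. Each step reduces fooling $(m,n)$-\fshapes\ to fooling $(\sqrt m, n)$-\fshapes\ at the cost of $O(\log(m/\delta_0))$ additional random bits and additional error $O(\delta_0)$. Setting $\delta_0 = \delta/\Theta(\log\log m)$ and iterating $T = O(\log\log m)$ times drives the alphabet from $m$ down to at most $n^4$, at which point the assumed base generator $\calG_{m'}$ takes over. The total additional seed is $T \cdot O(\log(m/\delta_0)) = O(\log(m/\delta)\log\log m)$ and the total added error is $T\delta_0 = \delta$, matching the theorem.

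For one step, set $D := \sqrt m$, sample a table $X = (X[i,j])_{(i,j)\in[D]\times[n]}\in[m]^{Dn}$ from a pseudorandom distribution $\mu_X$, sample $Y\in[D]^n$ via the recursively constructed $\calG_{\sqrt m}$, and output $Z_j := X[Y_j,j]$. Conditioned on $X$, the map $Y \mapsto f(Z) = \prod_j f_j(X[Y_j,j])$ is a $(D,n)$-\fshape\ in $Y$ with $j$-th factor $g^X_j(y) := f_j(X[y,j])$, and is fooled by $\calG_{\sqrt m}$ inductively. Averaging over a truly uniform $Y$ yields the conditional expectation $\prod_j F_j(X[\cdot,j])$ with $F_j(x_1,\dots,x_D) := D^{-1}\sum_i f_j(x_i)$, so the remaining task is to design $\mu_X$ so that $\E_{X\sim\mu_X}[\prod_j F_j]$ is within $O(\delta_0)$ of $\prod_j \E_{x\in_u[m]}[f_j(x)] = \E_{x\in_u[m]^n}[f(x)]$.

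The core calculation uses the identity $\prod_j F_j = D^{-n}\sum_{\vec i \in [D]^n}\prod_j f_j(X[i_j,j])$, which exhibits the blown-up product as the average of $(m,n)$-\fshapes\ in $n$ distinct underlying variables $\{X[i_j, j]\}_{j\in[n]}$. Taking $\mu_X$ to be a $k$-wise independent distribution on $[m]^{Dn}$ (realizable with $O(k \log(mDn))$ bits by Fact \ref{HashFamilyFact}), the random variables $f_j(X[i_j,j]) \in \C_1$ are $k$-wise independent, so Lemma \ref{lm:maintech} bounds the per-$\vec i$ error by $\exp(O(k))(\tvar(f)/k)^{\Omega(k)}$, which the triangle inequality preserves under the average over $\vec i$. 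To stay within the $O(\log(m/\delta_0))$-bit budget, I split on $\tvar(f)$: for $\tvar(f) \geq C\log^5(1/\delta_0)$, I combine the construction with $\calG_\ell$ from Theorem \ref{th:lnorm}, so that both the true expectation $\prod_j \mu_j$ (by \eqref{eq:var-bound}) and the pseudorandom expectation become $O(\delta_0)$, hence such $f$ is automatically fooled; for $\tvar(f) = O(\log^5(1/\delta_0))$, choosing $k = O(\log(1/\delta_0))$ with a sufficient buffer for $\tvar$ drives the Lemma \ref{lm:maintech} error below $\delta_0$ within the budget.

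The main obstacle is this variance-regime accounting: the triangle-inequality bound obtained from the $D^n$-term expansion does not directly exploit the $1/D$ variance-reduction factor present in each $F_j$, so offloading high-variance \fshapes\ onto $\calG_\ell$ is essential to stay within the per-step seed budget. Composing the two sub-generators (for $X$ via $\mu_X$ and for $Y$ via $\calG_{\sqrt m}$) across $O(\log \log m)$ iterations, while balancing the error budgets and without blowing up the seed length beyond $O(\log(m/\delta) \log\log m)$, also requires careful bookkeeping.
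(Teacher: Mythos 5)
Your construction is exactly the paper's (table $X$ plus a recursive generator for the row indices, output $Z_j = X[Y_j,j]$), and the reduction of the problem to fooling the bias function $\prod_j F_j$ with $F_j = D^{-1}\sum_{\ell} f_j(X[\ell,j])$ is also the right target. The gap is in how you fool the table. By expanding $\prod_j F_j = D^{-n}\sum_{\vec i}\prod_j f_j(X[i_j,j])$ and applying Lemma \ref{lm:maintech} term by term, you deliberately discard the $1/D$ variance reduction, so each term only enjoys the bound $\exp(O(k))\bigl(\tvar(f)/k\bigr)^{\Omega(k)}$. For this to be at most $\delta_0$ you need $k \gtrsim \max\{\log(1/\delta_0),\,\tvar(f)\}$, and a $k$-wise independent distribution over $[m]^{Dn}$ then costs $O(k\log(mDn)) = \Omega(\log(1/\delta_0)\cdot\log m)$ seed in a single step --- the \emph{product} of $\log m$ and $\log(1/\delta_0)$, not their sum, which already exceeds the theorem's total budget $r'+O(\log(m/\delta)\log\log m)$ (in the intended application $\delta = 1/\poly(mn)$, so this is $\Theta(\log^2 m)$, exactly the barrier the paper is trying to beat). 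The case split does not rescue this: with $k = O(\log(1/\delta_0))$, Lemma \ref{lm:maintech} only gives error $\le \delta_0$ when $\tvar(f) = O(\log(1/\delta_0))$ (for $\tvar(f)\approx\log^5(1/\delta_0)$ the base $\tvar(f)/\sqrt{k}\gg 1$ and the bound diverges), while $\calG_\ell$ of Theorem \ref{th:lnorm} only covers $\tvar(f)\ge C\log^5(1/\delta_0)$; the intermediate range is covered by neither. Moreover, XOR-ing in $\calG_\ell$ inside each of the $\Theta(\log\log m)$ steps adds $O(\log(mn/\delta_0)\log\log(1/\delta_0))$ bits per step, again overshooting the stated seed length. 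Note also that Theorem \ref{th:areduxmain} must fool \emph{all} $(m,n)$-\fshapes\ with no variance restriction, since it is later applied as a black box in Lemma \ref{lem:dimredux}.

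The missing idea is to exploit the averaging over the $D$ rows \emph{before} invoking Lemma \ref{lm:maintech}. In the paper, each column of $X$ is pairwise independent over $[m]^D$ (seed $2\log m$ per column) and the column seeds are $k$-wise independent across columns. Then for uniform $Y$ the conditional expectation is $\prod_j A_j$ with $A_j = D^{-1}\sum_{\ell} f_j(X[\ell,j]) \in \C_1$, and pairwise independence within a column gives $\var(A_j) = \sigma^2(f_j)/D \le 1/D$, so the \emph{total} variance of $A_1,\dots,A_n$ is at most $n/D \le 2m^{-1/4}$ (using $m>n^4$, $D\ge\sqrt m/2$) --- polynomially small in $m$ regardless of $\tvar(f)$. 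Lemma \ref{lm:maintech} applied to the $k$-wise independent $A_j$'s with $k = C\log(1/\delta)/\log m$ then yields error $m^{-\Omega(k)}\le\delta$, and the table costs only $O(k\log m)=O(\log(1/\delta))$ bits, so each step stays within $O(\log(m/\delta))$ bits with no case analysis on $\tvar(f)$ and no appeal to $\calG_\ell$. Your outer iteration scheme (error budget $\delta/\Theta(\log\log m)$ per step, $O(\log\log m)$ steps) is fine once the single step is repaired this way.
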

\begin{proof}
We prove the claim by showing that for $m > n^4$, we can reduce
$(\delta+\delta')$-fooling $(m,n)$-\fshapes\, to that of
$\delta'$-fooling $(\sqrt{m},n)$-\fshapes\, with $O(\log(m/\delta))$
additional random bits. The theorem follows by  applying
the claim $\log\log(m)$ until the alphabet size drops below $n^4$ when
we can use $\calG_{m'}$. This costs a total of
$r' + O(\log(m/\delta)\log\log(m))$ random bits, and gives error
$\delta' + \log\log(m)\delta$. The claim follows by replacing $\delta$
with $\delta/\log\log(m)$.

Thus, suppose that $m > n^4$ and for $D = \floor{\sqrt{m}}$, we have a
generator $\calG_D:\zo^{r_D} \to [D]^n$ which $\delta'$-fools
$(D,n)$-\fshapes. The generator $\calG_m$ works as follows:
\begin{enumerate}
\item Generate a matrix $X \in [m]^{D \times n}$ where
\begin{itemize}
\item Each column of $X$ is from a pairwise independent distribution over $[m]^D$.
\item The different columns are $k$-wise independent for $k = C\log(1/\delta)/\log (m)$ for some sufficiently large constant $C$.
\end{itemize}
\item Generate $Y = (Y_1,\ldots,Y_n) = \calG_D(z) \in [D]^n$ for $z \in_u \zo^{r_D}$.
\item $\calG_m$ outputs $Z = (Z_1,\ldots,Z_n) \in [m]^n$ where $Z_j = X[Y_j, j]$ for $j \in [n]$.
\end{enumerate}
Each column of $X$ can be generated  using a seed of length $2\log
m$. By using seeds for various columns that are $k$-wise independent,
generating $X$ requires seedlength $O(k\log m) = O(\log(1/\delta))$
(as $m > n^2$), while  the number of bits needed to generate $Z$ is
$r_D + O(\log(1/\delta))$.

Fix an $(m,n)$-\fshape\, $f:[m]^n \to \C_1$, $f(z) = \prod_j f_j(z_j)$. For
$x \in [m]^{D \times n}$, define a $(D,n)$-\fshape\, $f^x:[D]^n \to
\C_1$ by:
\[f^x(y_1,\ldots,y_n) = \prod_{j=1}^n f_j(x[y_j,j]).\]
Note that $f(Z) = f^X(Y)$.

Let $X', Y'$ be random variables
distributed uniformly over $[m]^{D \times n}$ and $[D]^n$
respectively. Let $Z'_j= X'[Y'_j,j]$ for $j \in [n]$, so that $Z'$
is uniform over $[m]^n$ and $f(Z') = f^{X'}(Y')$. Our goal is to show that $f(Z')$ and $f(Z)$
are close in expectation. We do this by replacing $X'$ and $Y'$ by $X$
and $Y$ respectively.

That we can replace $Y'$ with $Y$ follows from the pseudorandomness of $\calG_D$.
For any fixed $x \in [m]^n$, as $\calG_D$ fools $(D,n)$-\fshapes,
\begin{equation}\label{eq:aredux1}
\abs{\E_{Y = \calG_D(z)}[f^x(Y)] - \E_{Y' \in_u [D]^n}[f^x(Y')]} \leq \delta'.
\end{equation}
We now show that for truly random $Y'$, one can replace $X$ by
$X'$. Note that
\begin{equation}\label{eq:aredux2}
\E_{Y' \in_u [D]^n}[f^x(Y')] = \prod_{j=1}^n \bkets{\frac{1}{D} \cdot \bkets{\sum_{\ell=1}^D f_j(x[\ell, j])}} \equiv B_f(x).
\end{equation}
where we define the {\em bias-function} $B_f:[m]^{D \times n} \to
\C_1$ as above. We claim that $X$ fools $B_f$:
\begin{align}
\label{eq:aredux3}
\abs{\E[B_f(X)] - \E[B_f(X')]} \leq \delta.
\end{align}
For $j \in [n]$, let
\begin{align*}
A_j  = \frac{1}{D} \bkets{\sum_{\ell=1}^D f_j(X[\ell,j])},\
A'_j  = \frac{1}{D} \bkets{\sum_{\ell=1}^D f_j(X'[\ell,j])}
\end{align*}
so that
\begin{align*}
B_f(X) = \prod_{j=1}^nA_j, \ B_f(X') = \prod_{j=1}^nA_j'.
\end{align*}
Since $f_j(X[\ell,j]) \in \C_1$ for $\ell \in [D]$, it follows that
$A_j,A_j' \in \C_1$. Since the $f_j(X[\ell,j])$s are pairwise
independent variables,
\[ \E[A_j] = \E[A_j'], \ \var[A_j] = \var[A_j'].\]
Note that
\begin{align}
\label{eq:exp_b}
\E[B_f(X')] = \E[\prod_{i=1}^nA'_j] =
\prod_{j=1}^n\E[A_j'] = \prod_{j=1}^n\E[A_j], \ \E[B_f(X)]  = \E[A_1 \cdots A_n].
\end{align}
The random variables $A_1,\ldots,A_n$ are $k$-wise
independent. Further, we have
\begin{align*}
\var(A_j) = \frac{1}{D^2} \sum_{\ell=1}^D \var(f_j(X[\ell, j])) =
\frac{\sigma^2(f_j)}{D} \leq \frac{1}{D}.
\end{align*}
Therefore, by Lemma \ref{lm:maintech},
\begin{align}
\label{eq:prod_a}
\abs{\E[A_1 \cdots A_n] - \prod_{j=1}^n \E[A_j]}  \leq \left(\frac{n}{D}\right)^{\Omega(k)}
\leq m^{-\Omega(k)} \leq \delta
\end{align}
where the second to last inequality follows becase $n \leq m^{1/4}$
and $D \geq \sqrt{m}/2$, and the last holds for $k
=C\log(1/\delta)/\log(m)$ for a sufficiently big constant $C$.
Equation \ref{eq:aredux3} now follows from Equations \eqref{eq:prod_a}
and \eqref{eq:exp_b}.

Finally,
\begin{align*}
\abs{\E[f(Z)] - \E[f(Z')]} &= \abs{\E[f^X(Y)] - \E[f^{X'}(Y')]}& \\
&=  \abs{\E[f^X(Y')] - \E[f^{X'}(Y')]} + \delta'\ & \text{Equation \eqref{eq:aredux1}}\\
&= \abs{\E[B_f(X)] - \E[B_f(X')]} + \delta'\ & \text{Equation \eqref{eq:aredux2}}\\
& \leq \delta + \delta'. \ & \text{Equation \eqref{eq:aredux3}}
\end{align*}
Hence the theorem is proved.
\end{proof}

% !TEX root = main.tex
\section{Dimension reduction for low-variance \fshapes}
\label{sec:dimredux}

We next describe our \emph{dimension reduction} step for low-variance
\fshapes. We start with an $(m,n)$-\fshape\ where $m \leq n^4$ and $\tvar(f) \leq \log(n/\delta)^c$. We
show how one can reduce the dimension to $t = \sqrt{n}$, at a price of a
blowup in the alphabet size $m'$ which now becomes $(n/\delta)^c$ for some (large) constant $c$.

\begin{thm}\label{th:dimreduxlowvar}
Let $\delta > 0$, $n > 0$ and $t = \lceil \sqrt{n} \rceil$. There is a
constant $c$ and $m' \leq (n/\delta)^c$ such that the following holds:
if there exists an explicit $\prg$ $\calG': \zo^{r'} \to [m']^t$ with seed-length $r' = r'(n,\delta')$  which $\delta'$-fools
  $(m',t)$-\fshapes, then there exists an
  explicit generator $\calG:\zo^r \to [m]^n$ with seed-length $r = r' +
  O(\log(n/\delta))$ which $(\delta + \delta')$-fools $(m,n)$-\fshapes\, $f$ with
  $m \leq n^4$ and $\tvar(f) \leq n^{1/9}$.
\end{thm}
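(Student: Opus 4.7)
The plan is to implement the dimension-reduction sketch from Section~\ref{sec:overview}. Write the target as $f(x)=\prod_i f_i(x_i)$ with $\sigma_i^2=\sigma^2(f_i)$, so that $\sum_i\sigma_i^2=\tvar(f)\le n^{1/9}$. Set $t=\lceil\sqrt n\rceil$ and fix $k,k'=\Theta(\log(n/\delta)/\log n)$ for sufficiently large constants. The construction has three stages: (i) sample a $k$-wise independent hash $h:[n]\to[t]$ using $O(k\log n)=O(\log(n/\delta))$ bits; (ii) for each bucket $j\in[t]$, generate an independent $k'$-wise independent string $Z^j\in[m]^n$ from a seed $s_j\in\zo^{r_0}$ of length $r_0=O(k'\log(mn))=O(\log(n/\delta))$ (using $m\le n^4$), and set $X_i=Z^{h(i)}_i$; (iii) replace the $t$ independent bucket seeds by the output of the hypothesized generator $\calG'$.

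For any fixed hash $h$, writing $f^{h,j}(y)=\prod_{i\in h^{-1}(j)}f_i(y_i)$, the randomness factors as $f(X)=\prod_{j=1}^t f^{h,j}(Z^j)$. Since the $Z^j$'s are mutually independent and every $f^{h,j}\in\C_1$, a telescoping hybrid argument (swap $Z^j$ for a truly uniform $U^j$ one bucket at a time, using that each factor lies in $\C_1$) bounds the error by $\sum_{j=1}^t |\E[f^{h,j}(Z^j)]-\E[f^{h,j}(U^j)]|$. Lemma~\ref{lm:maintech} applied to each bucket yields
$$\bigl|\E[f^{h,j}(Z^j)]-\E[f^{h,j}(U^j)]\bigr|\le\exp(O(k'))\bigl(\tvar(f^{h,j})/k'\bigr)^{\Omega(k')},$$
and for a suitable constant $\gamma>0$ and a large enough constant hidden in $k'$, the condition $\tvar(f^{h,j})\le n^{-\gamma}$ makes each summand at most $\delta/(4t)$, so the total per-hash error is at most $\delta/4$.

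The crux is therefore showing that $h$ is \emph{load-balanced}, i.e., $\max_j \tvar(f^{h,j})\le n^{-\gamma}$ except with probability $\delta/2$; this is the \textbf{main obstacle} and is where the hypothesis $\tvar(f)\le n^{1/9}$ is consumed. The expected bucket load is $\tvar(f)/t\le n^{-7/18}$, but $v=(\sigma_i^2)_i$ can have a few entries as large as $1$, so the bucket containing an exceptionally heavy coordinate could swamp the bound. The remedy is to threshold: split $[n]=H\cup L$ with $H=\{i:\sigma_i^2\ge n^{-\alpha}\}$ for a constant $\alpha\in(1/9,5/36)$. Then $|H|\le n^{1/9+\alpha}$, and pairwise independence of $h$ sends the heavy coordinates to distinct buckets except with probability $\binom{|H|}{2}/t\le n^{2(1/9+\alpha)-1/2}\ll\delta/4$ (sub-polynomial $\delta$ is handled by higher-moment bounds on collisions). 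For the light part, $\|v_L\|_2^2\le n^{-\alpha}\tvar(f)=n^{1/9-\alpha}$, so Lemma~\ref{lm:hashing} with $\Theta(\log(n/\delta)/\log n)$-wise independent hashing bounds each bucket's light contribution by $n^{-\gamma}$ except with probability $\delta/(4t)$, after which a union bound over buckets closes the step. The fiddly part is tuning the constants $(\alpha,\gamma,k,k')$ so that heavy-isolation, light load-balancing, and Lemma~\ref{lm:maintech} per bucket simultaneously fit within the $\delta$ budget while keeping all seed-lengths at $O(\log(n/\delta))$.

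Finally, once $h$ is fixed, $f(X)$ viewed as a function of $(s_1,\ldots,s_t)\in(\zo^{r_0})^t$ equals $\prod_{j=1}^t g^h_j(s_j)$ where $g^h_j(s)=f^{h,j}(Z^j(s))\in\C_1$: this is precisely a $(m',t)$-\fshape\ for $m'=2^{r_0}\le(n/\delta)^c$. I may therefore replace the $t$ independent bucket seeds by the output of the hypothesized $\delta'$-fooling generator $\calG':\zo^{r'}\to[m']^t$, incurring only an additional $\delta'$ error. The final generator has seed-length $O(\log(n/\delta))+r'$ and total error at most $\delta+\delta'$, matching the theorem.
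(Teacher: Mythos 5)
Your overall architecture (hash into $t=\lceil\sqrt n\rceil$ buckets with $O(\log(n/\delta)/\log n)$-wise independence, use limited independence inside each bucket, telescope over buckets, then recycle the $t$ bucket seeds by viewing $f(G_1(h,\cdot))$ as an $(m',t)$-\fshape\ with $m'=2^{r_0}\le(n/\delta)^c$ and feeding it to $\calG'$) is exactly the paper's construction. The gap is in your treatment of heavy coordinates. First, the condition you declare to be the crux, $\max_j \tvar(f^{h,j})\le n^{-\gamma}$, is simply unachievable: a single coordinate with $\sigma_i^2=\Omega(1)$ (allowed, since only $\sum_i\sigma_i^2\le n^{1/9}$ is assumed) forces some bucket to have variance $\Omega(1)$, and Lemma \ref{lm:maintech} applied to that whole bucket then gives nothing. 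Second, your remedy --- insisting that all coordinates of $H=\{i:\sigma_i^2\ge n^{-\alpha}\}$ land in distinct buckets --- cannot fit in the error budget for the regime the theorem is actually needed in (polynomially small $\delta$): with $|H|$ as large as $n^{1/9+\alpha}$ and only $t=\Theta(\sqrt n)$ buckets, even a \emph{truly random} hash has a collision among $H$ with probability $\Theta(|H|^2/t)=n^{-\Theta(1)}$, a fixed small polynomial that does not shrink with $\delta$; no higher-moment bound on a limited-independence family can rescue an event that already fails for the ideal hash. And even if isolation held, you never say how a bucket containing one heavy coordinate is fooled, since its variance can be $\Omega(1)$.

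The paper's proof avoids both problems by weakening the goodness condition and adding a conditioning step. A hash is $(\alpha,\beta)$-good if each bucket contains at most $k/2$ indices of $L=\{i:\sigma_i^2\ge\alpha\}$ (not zero) and the light indices in each bucket have total variance at most $\beta$; the probability that some bucket gets more than $k/2$ heavy indices is bounded by $\binom{|L|}{k/2}t^{-k/2}=n^{-\Omega(k)}$, which \emph{can} be driven below $\delta$ by choosing $k=C\log(n/\delta)/\log n$ (Lemma \ref{lm:goodh}). Then, inside a bucket (Lemma \ref{lm:foolgoodh}), the at most $k/2$ heavy coordinates are distributed \emph{exactly} uniformly under $k$-wise independence, so one couples them to the truly uniform string, conditions on their values, and applies Lemma \ref{lm:maintech} only to the remaining light coordinates, which are still $k/2$-wise independent and have total variance at most $\beta$. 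This tolerate-and-condition step is the missing idea in your write-up; with it, your light-part load-balancing via Lemma \ref{lm:hashing} and the rest of your argument go through as in the paper.
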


We first set up some notation. Assume that we have fixed a hash function $h: [n] \to
[t]$. For $x \in [m]^n$ and $j \in [t]$, let $x^j$ denote the
projection of $x$ onto co-ordinates in $h^{-1}(j)$. For an
$(m,n)$-\fshape\,$f:[m]^n \to \C_1$ with $f = \prod_{i=1}^n f_i$, let
\begin{align*}
f^j(x^j) & = \prod_{i:h(i) =j}f_i(x_i)\\
\text{so that }\ f(x) & = \prod_{j=1}^t f^j(x^j).
\end{align*}

We start by constructing an easy to analyze generator $G_1$ which hashes co-ordinates into
buckets using $k$-wise independence and then uses independent $k$-wise
independent strings within  a bucket. Let
\begin{align}
\label{eq:def-c}
k = C \frac{\log(n/\delta)}{\log(n)}
\end{align}
where $C$ will is a sufficiently large constant.
Let $\hh:\{[n] \to t\}$ be a $k$-wise independent
family of hash functions. Let $G_0:\zo^{r_0} \to [m]^n$ be a $k$-wise independent generator over
$[m]^n$. Define a new generator $G_1:\hh \times (\zo^{r_0})^t \to [m]^n$ as:
\begin{equation}
\label{eq:dimreduxg1}
G_1(h,z_1,\ldots,z_t) = Z, \text{where } Z^j = G_0(z_j)
\ \forall \ j \in [t].
\end{equation}

We argue that $G_1$ fools $(m,n)$-\fshapes\, with small total variance as in the theorem.
Our analysis proceeds as follows:
\begin{itemize}
\item With high probability over $h\in_u \hh$, each of the $f^j$'s has
low variance except for a few heavy co-ordinates (roughly $\tvar(f)/t$
after dropping $k/2$ heavy coordinates).
\item Within each bin we have $k$-wise independence, whereas the
  distributions across bins are independent. So even conditioned
  on the heavy co-ordinates in a bin, the remaining distribution in
  the bin is $k/2$-wise independent. Hence each $f^j$ is fooled by Lemma \ref{lm:maintech}.
\end{itemize}
However, the seed-length of $G_1$ is prohibitively large: since we use
independent seeds across the various buckets, the resulting seed-length
is $O(\sqrt{n}\log(n/\delta))$. The crucial observation is that
we can {\em recycle} the seeds for various buckets using a generator that fools $(m',t)$-\fshapes\ with
$m' = 2^{r_0} = \poly(n/\delta)$ and $t =O(\sqrt{n})$. Given such a generator
$\calG':\zo^{r'} \to [m']^t$ which $\delta$-fools $(m',t)$-\fshapes,
our final generator for small-variance \fshapes\ is $\calG_s:\hh \times
\zo^{r'} \to [m]^n$ is defined as
\begin{align}
\calG(h,w) = G_1(h, \calG'(w)).% \text{where} \ z \in_u \zo^{r'}.
\end{align}
It is worth mentioning that even though the original \fshape\ $f:[m]^n
\rgta \C_1$ has low total variance, the generator $\calG'$ needs to fool all
$(m',t)$-\fshapes, not just those with low variance.

\subsection{Analysis of the dimension-reduction step}

For $\alpha > 0$, to be chosen later, let $L = \{j
\in [n]: \sigma^2(f_j) \geq \alpha\}$ denote the $\alpha$-large indices and $S
= [n]\setminus L$ denote the small indices. We call a hash function $h \in
\hh$ \emph{$(\alpha,\beta)$-good} if the following two conditions hold
for every bin $h^{-1}(j)$ where $j \in [t]$:
\begin{enumerate}
\item The bin does not have too many large indices: $|h^{-1}(j) \cap L| \leq k/2$.
\item The small indices in the bin have small total variance:
\[ \sum_{\ell \notin L: h(\ell) = j} \sigma^2(f_\ell) \leq \beta.\]
\end{enumerate}

Using standard moment bounds for $k$-wise independent hash functions
one can show that $h \in_u \hh$ is $(\alpha,\beta)$-good
with probability at least $1-n^{-\Omega(k)}$ for $\alpha =
n^{-\Omega(1)}$ and $\beta = n^{-\Omega(1)}$.  We defer the proof of
the following Lemma to Appendix \ref{app:dimredux}.

\begin{lem}\label{lm:goodh}
Let $\tvar(f) \leq n^{1/9}$ and let $\hh = \{h:[n] \to [t]\}$ be a $k$-wise independent family of hash functions for $t=\Theta(\sqrt{n})$.
Then $h \in \hh$ is
$(n^{-1/3},n^{-1/36})$-good with probability $1-O(k)^{k/2} n^{-\Omega(k)}$.
\end{lem}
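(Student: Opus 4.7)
The plan is to analyze the two failure modes separately, after first bounding the cardinality of the set $L$ of large indices. Since each $\sigma^2(f_j) \in [0,1]$ and $\sum_j \sigma^2(f_j) = \tvar(f) \leq n^{1/9}$, Markov's inequality gives
\[
|L| \;\leq\; \tvar(f)/\alpha \;\leq\; n^{1/9}/n^{-1/3} \;=\; n^{4/9},
\]
so that the expected number of elements of $L$ landing in any fixed bin is $|L|/t \leq n^{-1/18}$.

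For condition (1), I would apply a direct $k/2$-subset counting argument rather than invoke Lemma \ref{lm:hashing}. For a fixed bin $j \in [t]$, the event $|h^{-1}(j)\cap L|\geq k/2$ is contained in the union, over $k/2$-subsets $T\subseteq L$, of the event $h(T) = \{j\}$. Using $k/2$-wise independence (each such event has probability $t^{-k/2}$), a union bound gives
\[
\pr\sbkets{|h^{-1}(j)\cap L|\geq k/2} \;\leq\; \binom{|L|}{k/2}\, t^{-k/2} \;\leq\; \bkets{\frac{2e|L|}{kt}}^{k/2} \;\leq\; \bkets{\frac{O(1)}{k}}^{k/2} n^{-k/36}.
\]
Union bounding over the $t=\Theta(\sqrt{n})$ bins costs a factor of $\sqrt{n}$, giving total failure probability $O(1)^{k/2}\, n^{1/2 - k/36} = O(k)^{k/2} n^{-\Omega(k)}$ for $k$ larger than an absolute constant.

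For condition (2), I would define the vector $u \in \R^n_+$ by $u_\ell = \sigma^2(f_\ell)\cdot \mathds{1}[\ell \in S]$. Then $\|u\|_1 \leq \tvar(f) \leq n^{1/9}$ and $\|u\|_\infty \leq \alpha = n^{-1/3}$, so $\|u\|_2^2 \leq \|u\|_1\|u\|_\infty \leq n^{-2/9}$, i.e.\ $\|u\|_2 \leq n^{-1/9}$. The sum of small-index variances in bin $j$ equals $\|u_{|h^{-1}(j)}\|_1$ with mean $\|u\|_1/t \leq n^{-7/18} \ll \beta/2 = n^{-1/36}/2$. Applying Lemma \ref{lm:hashing} with $p=k$ and deviation $s = n^{-1/36}/2$,
\[
\pr\sbkets{\|u_{|h^{-1}(j)}\|_1 \geq n^{-1/36}} \;\leq\; \frac{O(k)^{k/2}\,\|u\|_2^k}{(n^{-1/36}/2)^k} \;\leq\; O(k)^{k/2}\, n^{-k/9 + k/36} \;=\; O(k)^{k/2}\, n^{-k/12}.
\]
Union bounding over $\sqrt{n}$ bins yields failure probability $O(k)^{k/2} n^{1/2 - k/12} = O(k)^{k/2} n^{-\Omega(k)}$.

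Combining the two bounds with another union bound gives the claimed $1 - O(k)^{k/2} n^{-\Omega(k)}$. There is no real obstacle here; the whole proof is exponent bookkeeping. The only mildly subtle design choice is to handle the large indices by direct subset-counting (which extracts savings from the small \emph{number} $|L|\leq n^{4/9}$) and the small indices by the $\ell_2$-concentration of Lemma \ref{lm:hashing} (which extracts savings from the small $\ell_\infty$ norm $\alpha = n^{-1/3}$). The exponents $\alpha = n^{-1/3}$ and $\beta = n^{-1/36}$ are precisely balanced so that both arguments yield $n^{-\Omega(k)}$ savings per bin, leaving room for the $\sqrt{n}$-loss from the union bound over bins.
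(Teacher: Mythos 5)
Your proof is correct and follows essentially the same route as the paper's: a subset-counting bound over $k/2$-subsets of $L$ for condition (1) and an application of Lemma \ref{lm:hashing} to the vector of small-index variances for condition (2), using $\|u\|_2^2\leq\|u\|_1\|u\|_\infty\leq \tvar(f)\,\alpha$ exactly as the paper does. If anything you are slightly more careful, since you make the union bound over the $t=\Theta(\sqrt n)$ bins explicit (the paper leaves it implicit) and your bound $|L|\leq n^{4/9}$ corrects the paper's stated $n^{2/9}$, with no effect on the final $O(k)^{k/2}n^{-\Omega(k)}$ conclusion.
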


We next argue that if $h \in \hh$ is $(\alpha,\beta)$-good then, $k$-wise independence is sufficient to fool $f^j$ for each $j \in [t]$.
\begin{lem}\label{lm:foolgoodh}
Let $h \in \hh$ be $(\alpha,\beta)$-good, and let $j \in [t]$. For $Z' \sim [m]^{n}$ $k$-wise independent, and $Z'' \in_u [m]^{n}$,
\[\abs{\E[f^j(Z')] - \E[f^j(Z'')]} \leq \exp(O(k)) \cdot \beta^{\Omega(k)}.\]
\end{lem}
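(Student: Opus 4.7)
The plan is to split the bin $h^{-1}(j)$ into its ``large'' and ``small'' parts, $L_j = h^{-1}(j)\cap L$ and $S_j = h^{-1}(j)\setminus L$, and write
$$f^j(x) \;=\; A(x)\cdot B(x), \quad A(x) = \prod_{i\in L_j} f_i(x_i),\ \ B(x) = \prod_{i\in S_j} f_i(x_i).$$
Since $h$ is $(\alpha,\beta)$-good, we have $|L_j|\leq k/2$ and $\sum_{i\in S_j}\sigma^2(f_i)\leq\beta$. The factor $A$ involves only $\leq k/2$ coordinates, so it is fooled exactly by $k$-wise independence; the factor $B$ has small total variance, so it is a candidate for Lemma~\ref{lm:maintech}. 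The issue is that $A$ and $B$ share the same random input $Z'$, so we must condition to separate them.

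First I would condition on the values of $(Z'_i)_{i\in L_j}$. For any assignment $\tau\in [m]^{L_j}$, since $Z'$ is $k$-wise independent over $[m]^n$ and $|L_j|\leq k/2$, the standard fact about $k$-wise independence (fixing the values of $s\leq k/2$ coordinates preserves the joint marginal distribution of any $k-s\geq k/2$ other coordinates) gives that the conditional distribution of $(Z'_i)_{i\in S_j}$ is $k/2$-wise independent with the original (uniform) marginals. Applying Lemma~\ref{lm:maintech} to the random variables $\{f_i(Z'_i)\}_{i\in S_j}\subset\C_1$, whose variances sum to at most $\beta$, yields
$$\Bigl|\E\bigl[B(Z')\,\big|\,(Z'_i)_{i\in L_j}=\tau\bigr] - \prod_{i\in S_j}\E[f_i(Z'_i)]\Bigr| \;\leq\; \exp(O(k))\cdot (\beta/k)^{\Omega(k)},$$
uniformly in $\tau$.

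Next I would integrate out the conditioning. Using $|A(Z')|\leq 1$ and the tower property,
$$\Bigl|\E[f^j(Z')] - \E[A(Z')]\cdot \prod_{i\in S_j}\E[f_i(Z'_i)]\Bigr| \;\leq\; \exp(O(k))\cdot (\beta/k)^{\Omega(k)}.$$
Since $|L_j|\leq k/2\leq k$, the expectation $\E[A(Z')]=\prod_{i\in L_j}\E[f_i(Z'_i)]$ by $k$-wise independence, and each marginal of $Z'$ is uniform and hence agrees with the corresponding marginal of $Z''$. Therefore
$$\E[A(Z')]\cdot\prod_{i\in S_j}\E[f_i(Z'_i)] \;=\; \prod_{i\in h^{-1}(j)}\E[f_i(Z''_i)] \;=\; \E[f^j(Z'')].$$
Combining and using $\beta\leq 1$ and $k\geq 1$ to absorb the $1/k$ into the constant in the exponent gives the claimed bound $\exp(O(k))\cdot \beta^{\Omega(k)}$.

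The only delicate step is the conditioning argument, which I expect to be the main thing to justify carefully: the reader must be convinced that conditioning a $k$-wise independent family on $\leq k/2$ of its coordinates leaves the remaining coordinates $k/2$-wise independent with their original marginals (so Lemma~\ref{lm:maintech} applies). Everything else is just arithmetic manipulation and the observation that $|A|\leq 1$ allows the conditional error bound to be transferred to an unconditional one.
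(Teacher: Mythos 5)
Your proposal is correct and follows essentially the same route as the paper: condition on the at most $k/2$ heavy coordinates in the bin, use that the remaining coordinates are then $k/2$-wise independent with unchanged marginals so Lemma~\ref{lm:maintech} applies to the low-variance part, and use $|f_i|\leq 1$ to transfer the (uniform) conditional bound to the unconditional one. The paper phrases the conditioning as a coupling of $Z'$ and $Z''$ on the heavy coordinates, but that is only a cosmetic difference.
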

\begin{proof}
Fix $j \in [t]$. By relabelling coordinates, let us assume that
$h^{-1}(j) = \{1,\ldots,n_j\}$ and $L \cap h^{-1}(j) =
\{1,\ldots,r\}$, where $r \leq k/2$.
As $Z'$ is $k$-wise independent,
$(Z'_1,\ldots,Z'_r)$ is uniformly distributed over $[m]^r$. We couple
$Z'$ and $Z''$ by taking $Z'_i = Z''_i$ for $i \leq r$. Even after
conditioning on these values, $Z'_{r+1},\ldots,Z'_{n_j}$ are $k/2$-wise
independent.

Let $Y_\ell = f_\ell(Z'_\ell)$ for $\ell \in \{r+1,\ldots,n_j\}$. As $h$ is $(\alpha,\beta)$-good,
\[ \sum_{\ell=r+1}^{n/t} \sigma^2(Y_\ell) \leq \beta.\]
Therefore, by Lemma \ref{lm:maintech},
\begin{align}
\label{eq:k-wise}
\abs{\E\sbkets{\prod_{\ell =r+1}^{n_j}Y_\ell} - \prod_{\ell=r+1}^{n_j}
  \E[Y_\ell]} \leq \exp(O(k)) \cdot \beta^{\Omega(k)}.
\end{align}
But since $Z'_\ell =Z'_\ell$ for $\ell \leq r$, we have
\begin{align*}
\E[f^j(Z')] & = \prod_{\ell=1}^r\E[f^\ell(Z'_\ell)]\E[\prod_{\ell=r+1}^nY_\ell],\\
\E[f^j(Z'')] & =  \prod_{\ell=1}^r\E[f^\ell(Z'_\ell)]\prod_{\ell=r+1}^{n_j}\E[Y_\ell],\\
\abs{\E[f^j(Z')] - \E[f^j(Z'')] } &  =
\abs{\prod_{\ell=1}^r\E[f^\ell(Z'_\ell)]\E[\prod_{\ell=r+1}^nY_\ell] -
\prod_{\ell=1}^r\E[f^\ell(Z'_\ell)]\prod_{\ell=r+1}^{n_j}\E[Y_\ell]}&\\
& \leq  \abs{\E[\prod_{\ell=r+1}^{n_j}Y_\ell] -
  \prod_{\ell=r+1}^{n_j}\E[Y_\ell]} & \text{Since } |f^\ell(Z'_\ell)|  \leq 1\\
& \leq \exp(O(k)) \cdot \beta^{\Omega(k)}. & \text{Equation } \eqref{eq:k-wise}
\end{align*}
\end{proof}

We use these lemmas to prove Theorem \ref{th:dimreduxlowvar}.

\begin{proof}[Proof of Theorem \ref{th:dimreduxlowvar}]

%We can assume that $n \geq c \log(1/\delta)^c$; otherwise, the desired generator is implied by Theorem \ref{th:inwprg}. 
Let $f:[m]^n \rgta \C_1$ be a \fshape\ with $\tvar(f) \leq n^{1/9}$.
Let $G_1$ be the generator in Equation \eqref{eq:dimreduxg1} with
parameters as above. We condition on $h \in_u \hh$ being $(n^{-1/3}, n^{-1/36})$-good;
by Lemma \ref{lm:goodh} this only adds an additional $O(k)^{k/2}
n^{-\Omega(k)}$ to the error. We fix such a good hash function $h$.

Recall that $G_1(h,z_1,\ldots,z_t) =Z$ where $Z^j = G_0(z_j)$ for $j \in [t]$. Since the $z_j$s are independent,
so are the $Z^j$'s. Hence,
\begin{align*}
\E[f(G_1(h,z^1,\ldots,z^t))] &= \prod_{j=1}^t\E_h \sbkets{f^j(Z^j)}.
\end{align*}

By Lemma \ref{lm:foolgoodh}, for $(n^{-1/3}, n^{-1/36})$-good $h$, if $Y \in_u [m]^n$, then
\begin{align*}
& \abs{\prod_{j=1}^t\E \sbkets{f^j(Z^j)} - \prod_{j=1}^t\E
  \sbkets{f^j(Y^j)}} \\
& \leq \sum_{r=0}^{t-1} \abs{\prod_{j=1}^r\E
  \sbkets{f^j(Y^j)}\prod_{j=r+1}^t\E \sbkets{f^j(Z^j)} - \prod_{j=1}^{r+1}\E
  \sbkets{f^j(Y^j)}\prod_{j=r+2}^t\E \sbkets{f^j(Z^j)}}\\
& \leq \sum_{r=0}^{t-1}\abs{\E \sbkets{f^{r+1}(Z^{r+1})} - \sbkets{f^{r+1}(Y^{r+1})}}\\
& \leq \exp(O(k)) \cdot O(tn^{-k/36}).
\end{align*}
Combining the above equations we get that for $Y \in_u [m]^n$,
\begin{align}
\label{eq:g1-works}
\abs{\E[f(G_1(h,z_1,\ldots,z_t))] - \E[f(Y)]} \leq  O(k)^{k/2}
n^{-\Omega(k)} + \exp(O(k)) \cdot O(tn^{-k/36}) \leq \delta
\end{align}
where the last inequality holds by taking $C$ in Equation
\eqref{eq:def-c} to be a sufficiently large constant.

We next derandomize the choice of the $z^j$'s by using a PRG for
appropriate \fshapes. Let $r_0$ be the seed-length of the generator
$G_0$ obtained by setting $k = C \log(n/\delta)/(\log n)$ as above,
and let $c$ be such that $r_0 \leq c\log(n/\delta)$. Let
\[ m' = 2^{r_0} \leq \left(\frac{n}{\delta}\right)^c \]
and identify $[m']$ with $\zo^{r_0}$.
Given a hash function $h \in \hh$, let us define
$\bar{f}^j:[m'] \rgta \C_1$ for $j \in [t]$ and $\bar{f}: [m']^n \rgta \C_1$ as
\[\bar{f}^j(z_j) = f^j(G_0(z_j)), \ \bar{f}(z) = \prod_{i=1}^t\bar{f}^j(z_j) \]
respectively. Observe that $\bar{f}$ is a \fshape, and
\[ f(G_1(h,z_1,\ldots,z_t)) = \prod_{j=1}^tf^j(G_0(z_j)) = \bar{f}(z).\]

By assumption, we have an explicit generator
$\calG':\zo^{r'} \to [m']^t$ which $\delta'$-fools $(m',t)$-\fshapes.
We claim that $\calG:\hh \times \zo^{r'} \to [m]^n$ defined as
\[ \calG_s(h,w) = G_1(h, \calG'(w))\]
$(\delta' + \delta)$ fools small-variance $(m,n)$-\fshapes.

Since $\calG'$ fools $(m',t)$-\fshapes,
\[ \abs{\E[f(\calG(h,w))] - \E[f(G_1(h,z_1,\ldots,z_t))]} \leq \delta'.\]
By Equation \eqref{eq:g1-works}, whenever $\tvar(f) \leq \log(n/\delta)^C$,
\[\abs{\E[f(G_1(h,z_1,\ldots,z_t))] - \E[f(Z')]} \leq \delta.\]
Combining these equations,
\[\abs{\E[f(\calG(h,w))] - \E[f(Z')]} \leq \delta' + \delta.\]
The seed-length required for $\calG_s$ is  $O(\log(n/\delta))$ for $h$ and $r'$ for $w$.
\end{proof}

% !TEX root = main.tex
\section{Putting things together}

We put the pieces together and prove our main
theorem, Theorem \ref{thm:main}.  We show the following lemma which
allows simultaneous reduction in both the alphabet and the dimension,
going from fooling $(m,n)$-\fshapes\, to  fooling
$(n^2,\lceil\sqrt{n}\rceil)$-\fshapes.

\begin{lem}\label{lem:dimredux}
Let $\delta > 0$, $n > \log^C(1/\delta)$ for some sufficiently large constant $C$, and $t = \lceil \sqrt{n} \rceil$.
If there exists an explicit $\prg$ $\calG'': \zo^{r''} \to [m'']^t$ with
seed-length $r'' = r''(n,\delta)$  which $\delta$-fools
$(m'',t)$-\fshapes\, for all $m'' \leq n^2$, then there exists an explicit generator
$\calG:\zo^r \to [m]^n$ with seed-length  $r = r'' + O(\log(mn/\delta)
\log\log(mn))$ which $4\delta$-fools $(m,n)$-\fshapes.\footnote{Comparing this to Theorem \ref{th:dimreduxlowvar}, the main difference
is that we do not assume that $\tvar(f)$ is small. Further, the
generator $\calG''$ for small dimensions requires $m'' \leq n^2$, and
our goal is to fool \fshapes\ in $n$ dimensions with arbitrary
alphabet size $m$.}
\end{lem}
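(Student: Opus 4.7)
The plan is to assemble the three tools developed above---Theorem \ref{th:lnorm} for high-variance shapes, Theorem \ref{th:areduxmain} for alphabet reduction, and Theorem \ref{th:dimreduxlowvar} for dimension reduction of low-variance shapes---using the hypothesized $\calG''$ as the innermost block. The order is: alphabet-reduce $\calG''$ at dimension $t$, dimension-reduce to dimension $n$, combine with a high-variance generator to handle both regimes, and finally alphabet-reduce at dimension $n$ up to the target alphabet $m$.

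First I would apply Theorem \ref{th:areduxmain} at dimension $t$ with $\calG''$ as the base. The given hypothesis that $\calG''$ $\delta$-fools $(m'',t)$-\fshapes\ for every $m'' \le n^2 = t^4$ matches the hypothesis of Theorem \ref{th:areduxmain} at dimension $t$ exactly, producing a generator $\widetilde\calG$ that $2\delta$-fools $(\tilde m,t)$-\fshapes\ for arbitrary $\tilde m$. Taking $\tilde m = (n/\delta)^c$ for the constant $c$ appearing in Theorem \ref{th:dimreduxlowvar} and then feeding $\widetilde\calG$ into that theorem yields, for each $m_0 \le n^4$, an explicit generator $\calG_s^{m_0}\colon \zo^{r_s} \to [m_0]^n$ that $3\delta$-fools every $(m_0,n)$-\fshape\ $f$ with $\tvar(f) \le n^{1/9}$, with $r_s = r'' + O(\log(mn/\delta)\log\log(mn))$ (the two compositions each contribute terms that fit within this bound).

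In parallel, for each $m_0 \le n^4$, Theorem \ref{th:lnorm} supplies $\calG_\ell^{m_0}\colon \zo^{r_\ell} \to [m_0]^n$ that $\delta$-fools $(m_0,n)$-\fshapes\ $f$ with $\tvar(f) \ge C\log^5(1/\delta)$. I combine the two with independent seeds by componentwise addition modulo $m_0$ into a single generator $\calG_{m_0}$. The key observation is that $(m_0,n)$-\fshapes\ are closed under coordinate shifts and each per-coordinate variance $\sigma^2(f_j)$ (hence $\tvar$) is shift-invariant, so for any fixed $a \in [m_0]^n$ the map $y \mapsto f(a+y \bmod m_0)$ is again a \fshape\ in $\C_1$ with the same total variance as $f$. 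Conditioning on the seed of the ``wrong'' generator and then averaging bounds the error of $\calG_{m_0}$ on any shape $f$ by the error of whichever of $\calG_s^{m_0}, \calG_\ell^{m_0}$ handles $f$; the hypothesis $n > \log^C(1/\delta)$ for $C$ sufficiently large makes $C\log^5(1/\delta) \le n^{1/9}$, so the low- and high-variance regimes cover every $(m_0,n)$-\fshape\ and $\calG_{m_0}$ $3\delta$-fools them all.

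The uniform family $\{\calG_{m_0}\}_{m_0 \le n^4}$ now satisfies the hypothesis of Theorem \ref{th:areduxmain} at dimension $n$ with error $\delta' = 3\delta$; one further application of that theorem outputs the desired $\calG\colon \zo^r \to [m]^n$ which $4\delta$-fools $(m,n)$-\fshapes, with seed-length $r = r'' + O(\log(mn/\delta)\log\log(mn))$ as required. The main non-routine step is the addition-mod-$m_0$ combination: its correctness rests on shift-invariance of both the low-variance and high-variance subclasses of \fshapes\ and on the elementary fact that componentwise addition of an independent uniform $[m_0]$-valued shift preserves the uniform distribution. The rest is careful but routine bookkeeping of errors and seed lengths through the four compositions, together with the threshold computation $C\log^5(1/\delta) \le n^{1/9}$ that makes the variance dichotomy exhaustive.
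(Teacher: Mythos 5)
Your proposal is correct and follows essentially the same route as the paper: alphabet-reduce $\calG''$ at dimension $t$ via Theorem \ref{th:areduxmain}, feed the result into Theorem \ref{th:dimreduxlowvar} to handle low-variance shapes at dimension $n$, combine with the high-variance generator of Theorem \ref{th:lnorm} by coordinatewise addition mod $m_0$ using shift-invariance of \fshapes\ and of $\tvar$, and finish with one more application of Theorem \ref{th:areduxmain} at dimension $n$. The error and seed-length bookkeeping matches the paper's as well.
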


\begin{proof}
Let $r''$ be the seed-length required for $\calG''$ to have error
$\delta$.   Let  $m' \leq (n/\delta)^c$ be as in the statement of \ref{th:dimreduxlowvar}.
Applying Theorem \ref{th:areduxmain} to $\calG''$, we get a generator
$\calG'$ with seedlength $r'' + O(\log(n/\delta)\log\log(n/\delta))$
that $\delta' = 2\delta$-fools $(m',\sqrt{n})$ \fshapes. Invoking Theorem
\ref{th:dimreduxlowvar} with $\calG'$, we get an explicit
generator $\calG_s:\zo^{r_s} \to [m]^n$ which $3\delta$ fools $(m,n)$-\fshapes\,
$f:[m]^n \to \C_1$ with $\tvar(f) \leq n^{1/9}$ and $m \leq
n^4$, with seed-length
\[ r_s = r'' + O(\log(n/\delta)\log\log(n/\delta)).\]

For $m \leq n^4$, let $\calG_\ell:\zo^{r_\ell} \to [m]^n$ be a
generator for large \fshapes\ as in Theorem
\ref{th:lnorm}, which $\delta$-fools $(m,n)$-\fshapes\, $f:[m]^n \to
\C_1$ with $\tvar(f) \geq C \log^5(1/\delta)$. Since $m \leq n^4$, this generator requires seed-length
\[ r_\ell = O(\log(n/\delta)\log\log(1/\delta)).\]
Define the generator
\[ \calG_{\ell \oplus s}(w_1,w_2) = \calG_\ell(w_\ell) \oplus \calG_s(w_s)\]
where the seeds $w_\ell \in \zo^{r_\ell}$ and $w_s \in \zo^{r_s}$ are
chosen independently and $\oplus$ is interpreted as the sum mod
$m$. Note that the total seed-length is
\[ r_\ell + r_s =  r'' +  O(\log(n/\delta)\log\log(n/\delta)). \]
We now analyze $\calG_{\ell \oplus s}$.
Let $Y = \calG_\ell(w_1)$ and $Z = \calG_\ell(w_2)$ and let $X \in_u [m]^n$. Fix an
$(m,n)$-\fshape\ $f:[m]^n \to \C_1$. We consider two cases based on $\tvar(f)$:

\paragraph{Case 1: $\tvar(f) \geq C \log(1/\delta)^5$.}
For any $z \in [m]^n$, define a new \fshape\,$f_z(y) = f(y \oplus
z)$. Then, for any fixed $z$, $Y$ $\delta$-fools $f_z$ as $\tvar(f_z)
= \tvar(f) \geq C \log(1/\delta)^5$. Therefore,
\[ \abs{\E[f(Y\oplus Z)] - \E[f(X)]} \leq \E_Z \abs{\E[f_Z(Y)] - \E[f(X)]} \leq \delta.\]

\paragraph{Case 2: $\tvar(f) \leq n^{1/9}$.} Consider a
fixing $y$ of $Y$ and define $f_y(Z) = f(y \oplus Z)$. Then, for any
fixed $y$, $Z$ $3\delta $-fools $f_y$ as $\tvar(f_y) \leq n^{1/9}$. Therefore,
\[ \abs{\E[f(Y\oplus Z)] - \E[f(X)]} \leq \E_Y \abs{\E[f_Y(Z)] - \E[f(X)]} \leq 3\delta.\]

In either case, we have
\[ \abs{\E[f(Y \oplus Z)] - \E[f(X)]} \leq 3\delta.\]

Finally, for arbitrary $m$, by applying Theorem \ref{th:areduxmain} to $\calG_{\ell
  \oplus s}$, we get a generator $\calG:\zo^r \rgta [m]^n$  that
$4\delta$ fools $(m,n)$-\fshapes\  with seed-length
\[ O(\log (m/\delta)\log \log(m)) + r_\ell +r_s =  r'' +  O(\log(nm/\delta)\log\log(nm/\delta)). \]
\end{proof}

We prove Theorem \ref{thm:main} by repeated applications of this lemma.

\begin{proof}[Proof of Theorem \ref{thm:main}]
Assume that the final error desired is $\delta'$. Let $\delta =
\delta'/4\log\log(n)$.  Applying Lemma \ref{lem:dimredux}, by
using $O(\log(mn/\delta')\log \log (mn/\delta'))$ random bits
we reduce fooling $(m,n)$-\fshapes \ to fooling
$(m',\lceil\sqrt{n}\rceil)$-\fshapes\, for $m' \leq n^2$.

We now apply the lemma $O(\log \log n)$ times to reduce to the
case of fooling $(\log^C(1/\delta),\log^C(1/\delta))$-\fshapes.
This can be done by noting that by Lemma \ref{lem:bounded} it suffices to fool \fshapes\, with $\log(f_i)$ having $O(\log(1/\delta))$ bits of precision. Such \fshapes\, can be computed by width-$O(\log(1/\delta))$ ROBPs, and thus using the generator from Theorem \ref{th:inwprg}, we can fool this case with seed length $O(\log(1/\delta)\log\log(1/\delta))$ bits.
Since each step requires $O(\log(n/\delta)\log\log(n/\delta)$
random bits, the overall seedlength is bounded by
\[ O(\log(mn/\delta)\log \log (mn/\delta) + O(\log(n/\delta)(\log\log(n/\delta))^2).\]
\end{proof}

% !TEX root = main.tex
\section{Applications of $\prg$s for Fourier shapes}
\label{sec:applications}

In this Section, we show how Theorem \ref{thm:main} implies near
optimal $\prg$s for halfspaces, modular tests and combinatorial
shapes. We first prove two technical lemmas relating closeness between
Fourier transforms of integer valued random variables to closeness
under other metrics. We define the Fourier distance, statistical distance and Kolmogorov
distance between two integer-valued random variables respectively as
\begin{align}
d_{FT}(Z_1,Z_2) & = \max_{\alpha \in [0,1]} \left|\E[\exp(2\pi i
  \alpha Z_1)] - \E[\exp(2\pi i \alpha Z_2)]\right|, \label{eq:f-dist} \\
d_{TV}(Z_1,Z_2) & = \frac{1}{2}\sum_{j \in \Z}|\pr(Z_1= j) - \pr(Z_2 = j)|,\label{eq:tv-dist}\\
d_K(Z_1,Z_2) & = \max_{k \in \mathbb{Z}}(|\pr(Z_1\leq k) -
 \pr(Z_2\leq k)|) \label{eq:cdf-dist}
\end{align}

The first standard claim relates closeness in statistical distance
and Fourier distance for bounded integer valued random variables.

\begin{lem}\label{lem:fouriertotv}
Let $Z_1, Z_2$ be two integer-valued random variables supported on $[0,N]$. Then,
\[ d_{TV}(Z_1,Z_2) \leq O(\sqrt{N})\cdot d_{FT}(Z_1,Z_2) .\]
\end{lem}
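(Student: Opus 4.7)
The plan is to combine Parseval's identity with the Cauchy--Schwarz inequality. Let $p(j) = \pr[Z_1 = j] - \pr[Z_2 = j]$, viewed as a function on $\Z$ supported on $\{0,1,\ldots,N\}$, so that $d_{TV}(Z_1,Z_2) = \tfrac{1}{2}\sum_{j=0}^N |p(j)|$. The key observation is that the discrete-time Fourier transform of $p$ is exactly the difference of the Fourier transforms of $Z_1$ and $Z_2$:
\[ \hat{p}(\alpha) = \sum_{j=0}^N p(j)\exp(2\pi i \alpha j) = \widehat{Z_1}(\alpha) - \widehat{Z_2}(\alpha), \]
and therefore $|\hat{p}(\alpha)| \leq d_{FT}(Z_1,Z_2)$ for every $\alpha \in [0,1]$.

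The main step would be to apply Parseval's identity on the group $\Z$ (dual to $[0,1]$):
\[ \sum_{j=0}^N |p(j)|^2 \;=\; \int_0^1 |\hat{p}(\alpha)|^2\, d\alpha \;\leq\; d_{FT}(Z_1,Z_2)^2. \]
This converts a pointwise bound on the Fourier side into an $\ell_2$ bound on the probability differences. The bounded-support assumption is essential here because it limits how many indices $j$ can contribute.

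Then by Cauchy--Schwarz applied to the $N+1$ nonzero coordinates,
\[ \sum_{j=0}^N |p(j)| \;\leq\; \sqrt{N+1}\cdot\left(\sum_{j=0}^N |p(j)|^2\right)^{1/2} \;\leq\; \sqrt{N+1}\cdot d_{FT}(Z_1,Z_2), \]
which gives $d_{TV}(Z_1,Z_2) \leq \tfrac{1}{2}\sqrt{N+1}\cdot d_{FT}(Z_1,Z_2) = O(\sqrt{N})\cdot d_{FT}(Z_1,Z_2)$ as required.

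There is no real obstacle here; the only subtlety is the correct normalization/convention for the Fourier pair on $\Z$ versus $[0,1]$, so that Parseval's identity reads $\int_0^1 |\hat p(\alpha)|^2 d\alpha = \sum_j |p(j)|^2$ without any $2\pi$ factor. The convention used in the preliminaries (namely $\hat{Z}(\alpha) = \E[\exp(2\pi i \alpha Z)]$, and the inversion $\pr[Z=j] = \int_0^1 \exp(2\pi i j \alpha)\hat{Z}(\alpha)d\alpha$) is exactly the one for which Parseval holds with no extra constants, so the argument goes through directly.
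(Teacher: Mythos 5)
Your proof is correct and is essentially the paper's own argument: both bound the $\ell_2$ norm of the difference of the probability mass functions by $d_{FT}$ via Plancherel, then pass to the $\ell_1$ norm (hence total variation) using Cauchy--Schwarz over the $O(N)$-point support. Your write-up is in fact slightly more careful about the support size and the factor $\tfrac{1}{2}$ than the paper's terse version, but the route is the same.
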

\begin{proof}
Note that the distribution $Z_1-Z_2$ is supported on at most $4N+1$ points. Therefore,
\[
\dtv(Z_1,Z_2) = \|Z_1-Z_2\|_1 \leq \sqrt{4N+1}\|Z_1-Z_2\|_2.
\]
On the other hand, the Plancherel identity implies that
\[
\|Z_1-Z_2\|_2 \leq d_{FT}(Z_1,Z_2).
\]
This completes the proof.
\end{proof}

The second claim relates closeness in Kolmogorov distance to
closeness in Fourier distance. The key is that unlike in Lemma
\ref{lem:fouriertotv}, the dependence on $N$ is logarithmic. This difference is crucial to
fooling halfspaces with polynomially small error (since there $N$ can
exponential in the dimension  $n$).

%The starting point of our analysis and construction is to note that showing closeness in Kolmogorov for discrete random variables of relatively small support is equivalent to showing that the \emph{Fourier transforms} of the random variables are close. This will allow us to use various analytic tools. Concretely, we shall use the following elementary fact about the discrete Fourier transform.

\begin{lem}\label{lem:introfouriertocdf}
Let $Z_1, Z_2$ be two integer-valued random variables supported on $[-N,N]$. Then,
\[ d_K(Z_1,Z_2) \leq O(\log(N) \cdot d_{FT}(Z_1,Z_2)) .\]
\end{lem}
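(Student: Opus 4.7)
The plan is to interpret the Kolmogorov difference at threshold $k$ as a Fourier integral of $\widehat{Z_1}-\widehat{Z_2}$ against the truncated kernel
\[
\Phi_k(\alpha) \;=\; \sum_{j=-N}^{k} \exp(-2\pi i \alpha j),
\]
and then show that $\|\Phi_k\|_{L^1[0,1]} = O(\log N)$. Since we may assume $k\in\{-N,\dots,N\}$ (outside this range both CDFs agree trivially), Fourier inversion on integer-valued random variables (recalled in the preliminaries) gives $\pr[Z_i=j] = \int_0^1 e^{-2\pi i\alpha j}\widehat{Z_i}(\alpha)\,d\alpha$. Summing this from $j=-N$ to $j=k$ and swapping the (finite) sum with the integral yields the identity
\[
\pr[Z_1\leq k]-\pr[Z_2\leq k] \;=\; \int_0^1 \bigl(\widehat{Z_1}(\alpha)-\widehat{Z_2}(\alpha)\bigr)\,\Phi_k(\alpha)\,d\alpha.
\]
Bounding the first factor pointwise by $d_{FT}(Z_1,Z_2)$ reduces the lemma to a uniform $L^1$ bound on $\Phi_k$.

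For that $L^1$ bound I would use the two competing estimates that come from the geometric sum formula: the trivial $|\Phi_k(\alpha)|\leq 2N+1$ (counting terms) and $|\Phi_k(\alpha)|\leq C/|\sin(\pi\alpha)|$ (from the closed form of a geometric sum with ratio $e^{-2\pi i\alpha}$). Splitting $[0,\tfrac12]$ at $\alpha_0 \asymp 1/N$, the inner piece $[0,\alpha_0]$ contributes $O(1)$ via the flat bound, while the outer piece $[\alpha_0,\tfrac12]$ contributes $\int_{1/N}^{1/2}\!O(1/\alpha)\,d\alpha = O(\log N)$ via the sinusoidal bound. Periodicity handles the symmetric singularity at $\alpha=1$ in the same way, yielding $\|\Phi_k\|_{L^1[0,1]} = O(\log N)$, and taking the maximum over $k$ closes the proof.

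There is no serious obstacle here; the only points that require attention are that $\Phi_k$ has singularities at \emph{both} endpoints $\alpha=0$ and $\alpha=1$ (so one must handle the wrap-around), and that the Fubini swap is legitimate, which it is since all sums are finite. Conceptually, this is precisely why one beats Lemma \ref{lem:fouriertotv}: the Plancherel route bounds the indicator of $\{-N,\dots,k\}$ in $\ell^2$ by $\sqrt{N}$, but integrating its decaying Fourier mass $\min(N,1/\alpha)$ in $L^1$ gives only $\log N$. This logarithmic dependence on $N$ is exactly what makes fooling cdfs with polynomially small error tractable when $N$ is exponential in the dimension, as in the halfspace applications.
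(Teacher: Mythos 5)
Your proposal is correct and follows essentially the same route as the paper's proof: both express $\pr(Z_i\leq k)$ as $\int_0^1 s(k,N,\alpha)\,\widehat{Z_i}(\alpha)\,d\alpha$ with $s(k,N,\alpha)=\sum_{j=-N}^k \exp(-2\pi i j\alpha)$, bound the kernel by $\min\bigl(O(N),O(1/[\alpha])\bigr)$ via the geometric sum formula, and integrate the resulting $\min(N,1/[\alpha])$ bound (splitting near both endpoints) to obtain the $O(\log N)$ factor.
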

\begin{proof}
By definition we have that
$$
d_K(Z_1,Z_2) = \max_{-N\leq k \leq N}(|\pr(Z_1\leq k) - \pr(Z_2\leq k)|).
$$
We note that
\begin{align*}
\pr(Z_i\leq k) & = \sum_{j=-N}^k \pr(Z_i=j)\\
& = \sum_{j=-N}^k \int_0^1 \exp(-2\pi i j \alpha) \E[\exp(2\pi i \alpha Z_i)] d\alpha\\
& = \int_0^1 s(k,N,\alpha)\E[\exp(2\pi i \alpha Z_i)] d\alpha
\end{align*}
where
\[ s(k,N,\alpha) = \sum_{j=-N}^k  \exp(-2\pi i j \alpha) .\]
It is clear that $|s(k,N,\alpha)| \leq 2N$. Further,
\begin{align*} |s(k,N,\alpha)| & = \left|\frac{\exp(-2\pi i k\alpha)
  (\exp(2\pi i (N+k+1)\alpha)-1)}{\exp(2\pi i \alpha)-1}\right|  \leq \frac{1}{|\exp(2\pi i \alpha)
    -1|} \leq O(\frac{1}{[\alpha]})
\end{align*}
where $[\alpha]$ is the distance between $\alpha$ and the nearest integer.
Therefore, we have
\begin{align*}
|\pr(Z_1\leq k) - \pr(Z_2\leq k)|  & \leq \int_0^1 |s(k,N,\alpha)|\left|\E[\exp(2\pi i \alpha Z_1)]-\E[\exp(2\pi i \alpha Z_1)] \right|d\alpha \\
& \leq \int_0^1 O\left(\min\left(N,\frac{1}{[\alpha]}\right)\right)d_{FT}(Z_1,Z_2) d\alpha\\
& = O(d_{FT}(Z_1,Z_2)) \left(\int_0^{1/N} Nd\alpha + \int_{1/N}^{1/2}\frac{d\alpha}{\alpha} + \int_{1/2}^{1-1/N} \frac{d\alpha}{1-\alpha} + \int_{1-1/N}^1 N d\alpha\right)\\
& = O(d_{FT}(Z_1,Z_2) \log(N)).
\end{align*}
\end{proof}

\subsection{Corollaries of the main result}

We combine Lemma \ref{lem:introfouriertocdf} with Theorem
\ref{thm:main} to derive Corollary \ref{cor:halfspaces}, which gives
$\prg$s for halfspaces with polynomially small error from $\prg$s for
$(2,n)$-\fshapes. 

\begin{proof}[Proof of Corollary \ref{cor:halfspaces}]
Let $\calG:\zo^r \to \dpm^n$ be a $\prg$ which $\delta$-fools
$(2,n)$-\fshapes\,(here we identify $[2]$ with $\dpm$ arbitrarily). We
claim that $\calG$ also fools all halfspaces with error at most $\eps
= O(n \log(n)\delta)$.

Let $h:\dpm^n \to \dpm$ be a halfspace given by $h(x) =
\sgn(\iprod{w}{x} - \theta)$. It is well known that we can assume the
weights and the threshold $\theta$ to be integers bounded in the range
$[-N,N]$ for $N = 2^{O(n \log n)}$ (cf.~\cite{LewisC67}). Let $X \in_u \dpm^n$ and $Y =
\calG(y)$ for $y \in_u \zo^r$ and $Z_1 = \iprod{w}{X}$, $Z_2 =
\iprod{w}{Y}$. Note that $Z_1, Z_2$ are bounded in the range $[-n
  \cdot N, n \cdot N]$.

We first claim that
\[ d_{FT}(Z_1,Z_2) \leq \delta.\]
For $\alpha \in [0,1]$, we define $f_\alpha:\dpm^n \to \C_1$ as
\begin{align}
\label{eq:lin-fshape}
f_\alpha(x) = \exp(2 \pi i \alpha \iprod{w}{x}) = \prod_{j=1}^n
\exp(2\pi i \alpha w_j x_j)
\end{align}
then $f_\alpha$ is a $(2,n)$-\fshape. Hence,
\[ |\E[f_\alpha(X)] - \E[f_\alpha(Y)]| \leq \delta.\]
That $d_{FT}(Z_1,Z_2) \leq \delta$ now follows from the definition of Fourier distance,
and the fact that $\E[f_\alpha(X)]$ and $\E[f_\alpha(Y)]$ are the Fourier
transforms of $X$ and $Y$ at $\alpha$ respectively.

Therefore, by Lemma \ref{lem:introfouriertocdf} applied to $Z_1$, $Z_2$, $d_K(Z_1, Z_2) \leq O(n \log n) \delta$. Finally, note that
$$\abs{\E[h(X)] - \E[h(Y)]} \leq d_K(\iprod{w}{X}, \iprod{w}{Y}) = d_K(Z_1,Z_2) \leq O(n\log n) \delta.$$

The corollary now follows by picking a generator as in Theorem
\ref{thm:main} for $m=2$ with error $\delta = \epsilon/(C n \log n)$
for sufficiently big $C$.
\end{proof}

To prove Corollary \ref{cor:halfspaces2}, we need the following lemma about generalized
halfspaces.

\begin{lem}
\label{lem:gen-halfspace}
In Definition \ref{def:gh}, we may assume that each $g_i(j)$ is an
integer of absolute value $(mn)^{O(mn)}$.
\end{lem}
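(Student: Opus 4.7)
The plan is to reduce generalized halfspaces over $[m]^n$ to ordinary halfspaces over the Boolean cube $\zo^{mn}$ via a one-hot encoding, and then apply the classical weight-rounding theorem for halfspaces.

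First I would define the one-hot embedding $\phi : [m]^n \to \zo^{mn}$ by
\[ \phi(x)_{j,v} = \mathds{1}\{x_j = v\}, \qquad j \in [n],\ v \in [m]. \]
Under this encoding, $\sum_{j=1}^n g_j(x_j) = \sum_{j,v} g_j(v)\,\phi(x)_{j,v} = \dotp{w}{\phi(x)}$, where $w \in \R^{mn}$ is the vector with entries $w_{j,v} = g_j(v)$. Hence the generalized halfspace $g$ of Definition \ref{def:gh} can be written as
\[ g(x) = \sign\bkets{\dotp{w}{\phi(x)} - \theta} = H(\phi(x)), \]
where $H(y) = \sign(\dotp{w}{y} - \theta)$ is an ordinary (real-weighted) halfspace on $\zo^{mn}$.

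Next I would invoke the classical result of Muroga (and also Lewis--Coates, as cited in the proof of Corollary \ref{cor:halfspaces}) stating that any halfspace on $\zo^N$ with real weights and threshold has an equivalent representation, computing the same Boolean function on all of $\zo^N$, with integer weights $\tilde w_1,\ldots,\tilde w_N$ and integer threshold $\tilde\theta$ whose absolute values are bounded by $2^{O(N\log N)}$. Applying this to $H$ with $N = mn$, we obtain integers $\tilde w_{j,v}$ and $\tilde \theta$ of magnitude $(mn)^{O(mn)}$ such that $\sign(\dotp{\tilde w}{y} - \tilde\theta) = H(y)$ for all $y \in \zo^{mn}$. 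In particular this holds on the image $\phi([m]^n)$, so setting $\tilde g_j(v) = \tilde w_{j,v}$ gives $g(x) = \sign\bkets{\sum_{j=1}^n \tilde g_j(x_j) - \tilde\theta}$ with each $\tilde g_j(v)$ an integer of the desired magnitude.

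There is essentially no obstacle here beyond cleanly setting up the encoding; the only thing to observe is that although $\phi([m]^n)$ is a strict subset of $\zo^{mn}$, this does not matter because we start from a halfspace on $[m]^n$, extend it to \emph{any} real-weighted halfspace on $\zo^{mn}$ (using the same weights $w$), apply the weight-rounding theorem there, and then restrict back to $\phi([m]^n)$; the rounded weights continue to agree with $H$ pointwise on all of $\zo^{mn}$ and hence in particular on the image of $\phi$.
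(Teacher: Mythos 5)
Your proposal is correct and follows essentially the same route as the paper: a one-hot embedding of $[m]^n$ into $\zo^{mn}$ to turn the generalized halfspace into an ordinary halfspace on the Boolean cube, followed by the classical integer-weight representation theorem with the bound $2^{O(mn\log(mn))}=(mn)^{O(mn)}$, and restriction back to the image of the embedding. The only difference is presentational: you make explicit the (easy) observation that agreement on all of $\zo^{mn}$ in particular gives agreement on the embedded points, which the paper leaves implicit.
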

\begin{proof}
Let $g:[m]^n \to \zo$ be a generalized halfspace where the $g_i$s are
arbitrary. Embed $[m]^n$ into $\zo^{mn}$ by sending each $x_i \in [m]$ to
$(y_{i,1},\ldots,y_{i,m})$ where $y_{i,j} = 1$ if $x_i = j$ and
$y_{i,j} =0$ otherwise. Note that 
\[
\sum_{i=1}^n g_i(x_i) = \sum_{i =1}^n\sum_{j = 1}^m g_i(j)y_{i,j}
\]
However, the halfspace
\[ \sum_{i =1}^n\sum_{j = 1}^m g_i(j)y_{i,j} \geq \theta \]
over the domain $\zo^{mn}$ has a representation where the weights
$g'_i(j)$ and $\theta'$ are integers of size at most
$(mn)^{O(mn)}$. Hence we can replace each $g_i(j)$ in the defintion of
$g$ with $g_i'(j)$ without changing its value at any point in $[m]^n$.
\end{proof}

We now prove Corollary \ref{cor:halfspaces2} giving $\prg$s for
generalized halfspaces over $[m]^n$.
\begin{proof}[Proof of Corollary \ref{cor:halfspaces2}]
Letting $X\in_u [m]^n$ and letting $X'$ be obtained from a PRG for
$(m,n)$-\fshape s with error at most $\epsilon$, we let $Z_1=\sum_i
g_i(X_i)$ and $Z_2=\sum_i g_i(X_i')$. By Lemma
\ref{lem:introfouriertocdf} that $d_K(Z_1,Z_2)\leq O(\epsilon nm
\log(nm))$. Picking $\epsilon$ sufficiently small gives our generator
for generalized halfspaces. 
\end{proof}

Next we use Corollary \ref{cor:halfspaces2}
to get $\prg$s fooling halfspaces under general product distributions. From the definition of
generalized halfspaces, it follows that if $\calD$ is a discrete product distribution on
$\R^n$ where each co-ordinate can be sampled using $\log(m)$ bits,
then fooling halfspaces under $\calD$ reduces to fooling generalized
halfspaces over $[m]^n$ for some suitable choice of $g_i$. In fact
\cite{GopalanOWZ10} showed that fooling such distributions is in fact
sufficient to sufficient to fool continuous product distributions with
bounded moments. The following is a restatement of \cite[Lemma
  6.1]{GopalanOWZ10}. 

\begin{lem}
\label{lem:gowz}
Let $X$ be a product distribution on $\R^n$ such that for all $i \in [n]$,
\[ \E[X_i] =0, \E[X_i^2] =1, \E[X_i^4] \leq C. \]
Then there exists a discrete product distribution $Y$ such that for
every halfspace $h$,
\[ \abs{\E[h(X)] - \E[h(Y)]} \leq \eps.\]
Further, each $Y_i$ can be sampled using $\log(n,1/\eps,C)$ random bits.
\end{lem}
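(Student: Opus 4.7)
The plan is to construct $Y$ by discretizing each coordinate of $X$ independently via a two-step process: truncation to a bounded interval, followed by rounding to a fine grid. The analysis then reduces to an anti-concentration bound for linear forms in $X$.

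First I would define the truncation. Set $T = (4nC/\eps)^{1/4}$ and let $\tilde X_i = X_i \cdot \mathds{1}[|X_i|\le T]$. By Markov applied to $X_i^4$, $\pr[X_i \ne \tilde X_i]\le C/T^4 \le \eps/(4n)$, so by a union bound $X = \tilde X$ except with probability at most $\eps/4$. A standard calculation shows the moments change by at most $O(\eps/n)$, so $\tilde X_i$ still has variance $1\pm o(1)$ and $\E[\tilde X_i^4] \le C$. Next, quantize: pick $\eta = \eps/(Cn)$ (for a suitable universal constant) and let $Y_i = \eta\lfloor \tilde X_i/\eta\rfloor$. Then $Y_i$ is supported on at most $2T/\eta+1 = \poly(n,1/\eps,C)$ points, so can be sampled with $O(\log(nC/\eps))$ bits, as required.

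The comparison step is as follows. Fix a halfspace $h(x)=\sgn(\iprod{w}{x}-\theta)$ and normalize $\nmt{w}=1$. Then
\[
\abs{\iprod{w}{Y}-\iprod{w}{\tilde X}}\le \eta\nmo{w}\le \eta\sqrt{n},
\]
so $h(Y)\ne h(\tilde X)$ forces $|\iprod{w}{\tilde X}-\theta|\le \eta\sqrt{n}$. Combined with the truncation error,
\[
\abs{\E[h(X)]-\E[h(Y)]}\le \tfrac{\eps}{4}+\pr\bigl[|\iprod{w}{\tilde X}-\theta|\le \eta\sqrt{n}\bigr].
\]
The task reduces to bounding the last probability by $O(\eps)$.

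The main obstacle is this anti-concentration estimate for linear forms in $\tilde X$ under only a bounded fourth-moment assumption. The standard route is a regularity-based case split. Call $h$ \emph{$\tau$-regular} if $\max_i |w_i|\le \tau$. For regular halfspaces, a Berry--Ess\'een-type bound for sums with bounded fourth moments shows that $\iprod{w}{\tilde X}$ is within Kolmogorov distance $O(C\tau)$ of a standard Gaussian, and since the Gaussian satisfies $\pr[|Z-\theta|\le \delta]\le \delta/\sqrt{2\pi}$, the anti-concentration error is $O(\eta\sqrt{n}+C\tau)$. For non-regular halfspaces, condition on the set $H=\{i:|w_i|>\tau\}$ (which has $|H|\le 1/\tau^2$ since $\nmt{w}=1$): after conditioning on $(\tilde X_i)_{i\in H}$, the residual halfspace on the light coordinates is either regular (handled above) or the shifted threshold lies outside the effective support of $\sum_{i\notin H}w_i\tilde X_i$, in which case $h$ is almost surely constant. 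Choosing $\tau = \eps/C$ and absorbing error terms, the anti-concentration probability is $O(\eps)$ as desired. This is precisely the argument of \cite{GopalanOWZ10}; one may alternatively invoke a black-box anti-concentration theorem for polynomials of bounded-moment independent variables (e.g.~an invariance principle) to obtain the same conclusion.
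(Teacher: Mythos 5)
The paper itself does not prove this lemma: it is quoted as a restatement of Lemma 6.1 of \cite{GopalanOWZ10}, so the only question is whether your argument stands on its own. It does not, and the failure is in the core of the approach rather than in a detail. You round values ($Y_i=\eta\lfloor\tilde X_i/\eta\rfloor$) and charge the event $h(Y)\neq h(\tilde X)$ to the anti-concentration bound $\pr[|\iprod{w}{\tilde X}-\theta|\le\eta\sqrt{n}]=O(\eps)$. Under only a fourth-moment hypothesis this bound is false, because the coordinates may be purely atomic. Take $X_1$ uniform on $\{\pm 1\}$ (all hypotheses hold with $C=1$), $w=e_1$, $\theta=1$: the probability above is at least $1/2$, and your construction fails outright, since $Y_1=\eta\lfloor\pm 1/\eta\rfloor<1$ whenever $1/\eta\notin\Z$, giving $\E[h(X)]=1/2$ but $\E[h(Y)]=0$. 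The regularity case split cannot rescue this: the dichotomy you assert in the non-regular case (``either the residual is regular or the shifted threshold lies outside the effective support, so $h$ is almost surely constant'') is both unjustified --- it is essentially the critical-index analysis being assumed --- and beside the point, since in the example $h$ \emph{is} constant after conditioning on the heavy coordinate, yet the quantity you must bound is still $1/2$. Berry--Ess\'een-type anti-concentration genuinely needs regularity of $w$ together with some spread in the coordinate distributions; bounded moments alone never supply it. (A smaller, fixable point: finite support of polynomial size does not by itself give samplability with $O(\log(nC/\eps))$ bits; one must also round the probabilities to multiples of $2^{-r}$.)

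The repair is to discretize in quantile space rather than value space, after which no anti-concentration is needed. Write $X_i=F_i^{-1}(U_i)$ with $U_i$ uniform on $[0,1]$ and $F_i$ the CDF of $X_i$, and set $Y_i=F_i^{-1}(U_i')$ where $U_i'$ is $U_i$ rounded down to a grid of $K=2^r\geq 2n/\eps$ equally spaced points; then $Y_i$ is exactly samplable with $r=O(\log(n/\eps))$ bits and $d_K(X_i,Y_i)\le 1/K\le \eps/(2n)$ for every $i$, atoms or not. Now use a hybrid argument over coordinates: replacing $X_i$ by $Y_i$, and conditioning on the other $n-1$ coordinates (whose joint law is the same in the two neighboring hybrids and independent of coordinate $i$), the halfspace becomes the one-dimensional threshold $\mathds{1}[w_i x_i\ge \theta-\sum_{j\ne i}w_j x_j]$, so each swap changes the expectation by at most $d_K(w_iX_i,w_iY_i)=d_K(X_i,Y_i)$, and summing over $i$ gives total error $\eps$. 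This is the kind of per-coordinate CDF-matching (bucketing into equal-probability cells) that underlies the cited lemma; note that in this route the fourth-moment condition plays no role in the anti-concentration sense you invoked, which is consistent with the fact that the statement must hold for distributions, like Rademacher, for which the linear form $\iprod{w}{X}$ can have large atoms.
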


Note that the first and second moment conditions on $X$ can be
obtained for any product distribution by an affine transformation. Hence
we get Corollary \ref{cor:general} from combining Lemma \ref{lem:gowz} with
Corollary \ref{cor:halfspaces2}. In particular, there exist generators that fool all halfspaces with
error $\eps$ under the Gaussian distribution with seed-length $r =
O(\log(n/\eps)(\log\log(n/\eps))^2)$. This nearly matches the recent
result of \cite{KothariM14} upto a $\log\log$ factor. Further, it is
known (see e.g \cite[Lemma 11.1]{GopalanOWZ10}) that $\prg$s for
halfspaces under the Gaussian distribution imply $\prg$s for
halfspaces over the sphere.

We next prove Corollary \ref{cor:ch-bound} which derandomizes
the Chernoff bound.

\begin{proof}[Proof of Corollary \ref{cor:ch-bound}]
First note that we can assume without loss of generality that each $X_i$ can be sampled with $r_x = O(\log(mn/\epsilon))$ bits (by ignoring elements which happen with smaller probability). In particular, let each $X_i$ have the same distribution as $h_i(Z)$ for $Z \in_u [m']$ where $m' = 2^{r_x}$ (here we identify $[m']$ with $\zo^{r_x}$) and some function $h_i:[m'] \to [m]$. Let $\calG:\zo^r \to [m']^n$ be a PRG which $(\epsilon/2)$-fools $(m',n)$-generalized halfspaces. Now, let $Y = (h_1(Z_1),h_2(Z_2),\ldots,h_n(Z_n))$, where $(Z_1,\ldots,Z_n) = \calG(w)$ for $w \in_u \zo^r$. 

Note that $Y$ can be sampled with $O(\log(mn/\epsilon) \cdot (\log\log^2(mn/\epsilon)))$ random bits. We claim that $Y$ satisfies the required guarantees. To see this, define the generalized halfspaces 
\[g_+(z) = \sgn\bkets{\sum_{i=1}^ng_i(h_i(z_i)) - \theta},\;\;\; g_-(z) = \sgn\bkets{\sum_{i=1}^n -g_i(h_i(z_i)) + \theta} ,\]
where 
\[ \theta = t + \sum_i\E[g_i(X_i)] = t +  \sum_{i=1}^n\E[g_i(Y_i)].\] 
From Corollary \ref{cor:halfspaces2} it follows that 
\[ |\E[g_+(X)] - \E[g_+(Y)]| \leq \eps/2, \;\;\; |\E[g_-(X)] - \E[g_-(Y)]| \leq \eps/2. \]
From the Chernoff-Hoeffding bound \cite{Hoeffding}, we have
\[ \E[g_+(X) + g_-(X)] = \Pr\sbkets{\abs{\sum_{i=1}^ng_i(X_i) - \sum_{i=1}^n\E[g_i(X_i)]}
    \geq t} \leq 2e^{-t^2/2n}. \]
Hence by the triangle inequality,
\[ \Pr\sbkets{\abs{\sum_{i=1}^ng_i(Y_i) - \sum_{i=1}^n\E[g_i(Y_i)]}
    \geq t}  = \E[g_+(Y) + g_-(Y)] \leq 2e^{-t^2/2n} + \eps. \]
\end{proof}

We next prove Corollary \ref{cor:modular} about fooling modular tests.
\begin{proof}[Proof of Corollary \ref{cor:modular}]
Let $\calG:\zo^r \to \zo^n$ be a $\prg$ which fools $(2,n)$-\fshapes\ with
error $\epsilon/\sqrt{Mn}$. We claim that $\calG$ fools modular tests
with error at most $\epsilon$.

Let $g(x) = \mathds{1}(\sum_i a_i x_i \mod M \in S)$ be a modular
test, let $X \in_u \zo^n$ and $Y = \calG(y)$ for $y \in_u
\zo^r$. In order to fools modular tests, it suffices that
\[ d_{TV}(\sum_i a_i X_i, \sum_i a_i Y_i) \leq \epsilon.\]
On the other hand, since both these random variables are bounded in
the range $\{0,Mn\}$, by Lemma \ref{lem:fouriertotv}
\[ d_{TV}\bkets{\sum_i a_i X_i, \sum_i a_i Y_i} \leq \sqrt{Mn}\cdot
d_{FT}\bkets{\sum_i a_i X_i, \sum_i a_i Y_i} \leq \eps\]
where the last inequality uses the fact that the Fourier transforms of
both random variables are $(2,n)$-\fshapes\ by Equation \eqref{eq:lin-fshape}.
\end{proof}

Next we prove Corollary \ref{cor:cshapes} giving $\prg$s from
combinatorial shapes.
\begin{proof}[Proof of Corollary \ref{cor:cshapes}]
Recall that a combinatorial shape $f:[m]^n \to \zo$ is a function
\[ f(x) = h\bkets{\sum_{i=1}^ng_i(x_i)} \]
where $g_i:[m] \to \zo$ and $h:\{0,\ldots, n\} \to \zo$.  Since
$\sum_ig_i(x_i) \in \{0,\ldots,n\}$, it suffices to fool the generalized
halfspaces
\[ f(x) = \sum_ig_i(x_i) -\theta \]
for $\theta \in \{0,\ldots,n\}$ each with error $\eps/n$.
Hence the claim follows from Corollary \ref{cor:halfspaces2} about
fooling generalized halfspaces.
\end{proof}

\ignore{
\begin{proof}[Proof of Corollary \ref{cor:gaussian}]
Essentially the only difficulty here is to reduce to the case of a halfspace on a discrete rather than continuous domain. For any $\delta>0$, this it is easy to construct a function $f:[m]\rightarrow \R$ with $m=O(\log(1/\delta))$ so that for $X\in_u [m]$ then there is a correlated Gaussian $Y$ so that $|f(X)-Y|<\delta$ with probability at least $1-\delta$ (see \cite{K15}, Lemma 5). Letting $\delta = \epsilon/n^2$ and letting $X_i\in_u [m]$ and $Y_i$ correlated Gaussians, we note that with probability better than $1-\epsilon/2$, that $|f(X_i)-Y_i|<\delta$ for all $i$. If $v$ is any unit vector, this would imply that $\left| \sum v_i f(X_i) - \sum v_i Y_i\right| \ll \epsilon$. Together these imply that the different in probabilities that $\sum v_i Y_i \leq \theta$ and $\sum v_i f(X_i) \leq \theta$ differ by $O(\epsilon)$. Therefore, it suffices to find a random variable $X'$ so that
$$
\left|\pr\left( \sum v_i f_i(X_i) < \epsilon \right) - \pr\left( \sum v_i f_i(X_i') < \epsilon \right)\right| < \epsilon.
$$
By Corollary \ref{cor:halfspaces2}, this can be done with seedlength $O(\log(n/\epsilon)\log\log^2(n/\epsilon))$. This completes the proof.
\end{proof}
}

\bibliographystyle{alpha}
\bibliography{references}

\appendix
\section{Proofs from Section \ref{sec:prelims}}

\label{sec:appendix}
\begin{comment}
\begin{proof}[Proof of Lemma \ref{hcBiasedLem}]
This follows from the fact that $\|Q^p\|_1 \leq \|Q\|_1^p.$ Therefore,
\[
\E\sbkets{ Q(x)^p} \leq \E_{X\in_u \{\pm 1\}^n}[Q(X)^p] +\|Q\|_1^p
\delta\leq (p-1)^{p d/2} \cdot \|Q\|_2^p + \|Q\|_1^p \delta.
\]
\end{proof}

\begin{proof}[Proof of Lemma \ref{lm:epschernoff}]
Note that because $\|v\|_2=1$ that $\|v\|_1\leq \|v\|_0^{1/2}$ by Cauchy-Schwarz. We note by the Markov inequality that for even $p$ that
$$
\pr[|\iprod{v}{x}| > t] \leq t^{-p}\E[|\iprod{v}{x}|^p].
$$
We need a slightly strengthened version of Lemma \ref{hcBiasedLem} to bound this.
Note that if $f(x)=\iprod{v}{x}$
$$
\E[f(x)] \leq \|f^p\|_0\epsilon + \|f\|_p^p \leq \|v\|_0^p \epsilon + (p-1)(p-3)\cdots 1.
$$
The bound on $\|f\|_p$ comes from noting that the expectation of $f^p$ under Gaussian inputs is $(p-1)(p-3)\cdots 1$ and that the expectation under Bernoulli inputs is at most this (which can be seen by expanding and comparing terms). Therefore, we have that
$$
\pr[|\iprod{v}{x}| > t] \leq t^{-p}\sqrt{2}(p/e)^{p/2} + t^{-p}\|v\|_0^p \epsilon
$$
Letting $p$ be the largest even integer less than $t^2$, we find that this is at most
$$
\sqrt{2} \exp(-p/2)+ \|v\|_0^{t^2} \epsilon,
$$
which is sufficient when $t\geq 2$. For $1\leq t \leq \sqrt{2}$, the
trivial upper bound of $1$ is sufficient, and for $\sqrt{2}\leq t \leq
2$, we may instead use the bound for $p=2$.
\end{proof}
\end{comment}

\begin{proof}[Proof of Lemma \ref{hashmomentsLem}]
Let $I_{i,k}$ be the indicator function of the event that
$h(i)=k$. Note that $h(v)=\sum_{i,j,k} I_{i,k}I_{j,k}v_i^2 v_j^2.$
Therefore,
$$
h(v)^p = \sum_{i_1,\ldots,i_{p},j_1,\ldots,j_p}\sum_{k_1,\ldots,k_p} \prod_{t=1}^p I_{i_{t},k_t}I_{j_{t},k_t} \prod_{t=1}^{p} v_{i_t}^2v_{j_t}^2.
$$
Let $R(i_t,j_t,k_t)$ be $0$ if for some $t,t'$ $k_t\neq k_t'$ but one
of $i_{t}$ or $j_t$ equals $i_{t'}$ or $j_{t'}$ and otherwise be equal
to $m^{-T}$ where $T$ is the number of distinct values taken by $i_t$
or $j_t$. Notice that by the $\delta$-biasedness of $h$ that
$$
\E\left[\prod_{t=1}^p I_{i_{t},k_t}I_{j_{t},k_t}\right] \leq R(i_t,j_t,k_t) + \delta.
$$
Combining with the above we find that
\begin{align*}
\E[h(v)^p] & \leq \sum_{i_1,\ldots,i_{p},j_1,\ldots,j_p}\sum_{k_1,\ldots,k_p} (R(i_t,j_t,k_t) + \delta)\prod_{t=1}^{p} v_{i_t}^2v_{j_t}^2\\
& \leq \sum_{i_1,\ldots,i_{p},j_1,\ldots,j_p}\sum_{k_1,\ldots,k_p} R(i_t,j_t,k_t)\prod_{t=1}^{p} v_{i_t}^2v_{j_t}^2 + \delta m^p\sum_{i_1,\ldots,i_{p},j_1,\ldots,j_p}\prod_{t=1}^{p} v_{i_t}^2v_{j_t}^2\\
& \leq \sum_{i_1,\ldots,i_{p},j_1,\ldots,j_p}\sum_{k_1,\ldots,k_p} R(i_t,j_t,k_t)\prod_{t=1}^{p} v_{i_t}^2v_{j_t}^2 + \delta m^p \|v\|_2^{4p}.
\end{align*}
Next we consider
$$
\sum_{k_1,\ldots,k_p} R(i_t,j_t,k_t)
$$
for fixed values of $i_1,\ldots,i_p,j_1,\ldots,j_p$. We claim that it
is at most $m^{-S/2}$ where $S$ is again the number of distinct
elements of the form $i_t$ or $j_t$ that appear in this way an odd
number of times. Letting $T$ be the number of distinct elements of the
form $i_t$ or $j_t$, the expression in question is $m^{-T}$ times the
number of choices of $k_t$ so that each value of $i_t$ or $j_t$
appears with only one value of $k_t$. In other words this is $m^{-T}$
times the number of functions $f:\{i_t,j_t\}\rightarrow[m]$ so that
$f(i_t)=f(j_t)$ for all $t$. This last relation splits $\{i_t,j_t\}$
into equivalence classes given by the transitive closure of the
operation that $x\sim y$ if $x=i_t$ and $y=j_t$ for some $t$. We note
that any $x$ that appears an odd number of times as an $i_t$ or $j_t$
must be in an equivalence class of size at least $2$ because it must
appear at least once with some other element. Therefore, the number of
equivalence classes, $E$ is at least $T-S/2$. Thus, the sum in
question is at most $m^{-T}m^E \leq m^{-S/2}$. Therefore, we have that
$$
\E[h(v)^p] \leq (2p)!\sum_{\textrm{Multisets }M\subset [n],
  |M|=2p}m^{-\{\mathrm{Odd}(M)\}/2}\prod_{i\in M} v_i^2 +\delta m^p
\|v\|_2^{4p}.
$$
Where $\mathrm{Odd}(M)$ is the number of elements occurring in $M$ an
odd number of times. This equals
\begin{align*}
\E[h(v)^p] & \leq (2p)!\sum_{k=0}^p\sum_{\textrm{Multisets }M\subset
  [n], |M|=2p, \mathrm{Odd}(M)=2k}m^{-k}\prod_{i\in M} v_i^2 +\delta
m^p \|v\|_2^{4p}\\
& \leq (2p)!\sum_{k=0}^p m^{-k}
\sum_{i_1,\ldots,i_{2k}}\sum_{j_1,\ldots,j_{p-k}}\prod v_{i_t}^2 \prod
v_{j_t}^4+\delta m^p \|v\|_2^{4p}\\
& = (2p)!\sum_{k=0}^p \left(\frac{\|v\|_2^4}{m}\right)^k
\|v\|_4^{4(p-k)}+\delta m^p \|v\|_2^{4p}\\
& \leq O(p)^{2p} \left(\frac{\|v\|_2^4}{m}\right)^p + O(p)^{2p}
\|v\|_4^{4p}+\delta m^p \|v\|_2^{4p}.
\end{align*}
Note that the second line above comes from taking $M$ to be the
multiset
$$\{i_1,i_2,\ldots,i_{2k},j_1,j_1,j_2,j_2,\ldots,j_{p-k},j_{p-k}\}.$$
This completes our proof.
\end{proof}

\begin{proof}[Proof of Lemma \ref{lm:hashing}]
 Let $X_i$ denote the indicator random variable which is $1$ if $h(i) = j$ and $0$ otherwise. Let $Z = \sum_i v_i X_i$. Now, if $h$ were a truly random hash function, then, by Hoeffding's inequality,
$$\pr\sbkets{| Z - \nmo{v}/m| \geq t} \leq 2 \exp\bkets{- t^2/2\sum_i v_i^2}.$$
Therefore, for a truly random hash function and even integer $p \geq 2$, $\nmp{Z} = O(\nmt{v}) \sqrt{p}$. Therefore, for a $\delta$-biased hash family, we get $\nmp{Z}^p \leq O( p)^{p/2} \nmt{v}^p + \nmo{v}^p \delta$. Hence, by Markov's inequality, for any $t > 0$,
$$\pr\sbkets{ |Z - \nmo{v}/m| \geq t} \leq \frac{O(p)^{p/2} \nmt{v}^p + \nmo{v}^p \delta}{t^p}.$$
\end{proof}

\section{Proofs from Section \ref{sec:maintech}}
\label{app:maintech}

\begin{proof}[Proof of Lemma \ref{lm:mombound}]
First we note that since for any complex random variable, $Z$, that
$$
\E\left[|Z|^k \right] = 2^{O(k)}\E[|\Re(Z)|^k+|\Im(Z)|^k]
$$
and $\var(Z) = \var(\Re(Z))+\var(\Im(Z))$, it suffices to prove our lemma when $Z$ is a real-valued random variable.

We can now compute the expectation of $\left(\sum_i Z_i\right)^k$ by expanding out the polynomial in question and computing the expectation of each term individually. In particular, we have that
\begin{align*}
\E\sbkets{\abs{\sum_i Z_i}^k} & = \sum_{i_1,\ldots,i_k} \E\left[\prod_{j=1}^k Z_{i_j} \right].
\end{align*}
Next we group the terms above by the set $S$ of indices that occur as $i_j$ for some $j$. Thus, we get
$$
\sum_{m=1}^k \sum_{|S|=m} \sum_{\substack{i_1,\ldots,i_k\in S\\ \{i_j\}=S}} \E\left[\prod_{j=1}^k Z_{i_j} \right].
$$
We note that the expectation in question is 0 unless for each $j\in S$, $Z_j$ occurs at least twice in the product. Therefore, the expectation is 0 unless $m\leq k/2$ and overall is at most $B^{k-2m}\prod_{j\in S} \var(Z_j)$. Thus, the expectation in question is at most
$$
\sum_{m=1}^{k/2} \sum_{|S|=m} m^k B^{k-2m}\prod_{j\in S} \var(Z_j).
$$
Next, note that by expanding out $\left(\sum_i \var(Z_i) \right)^m$ we find that $\sigma^{2m} \geq m!\sum_{|S|=m} \prod_{j\in S} \var(Z_j).$ Therefore, the expectation in question is at most
\begin{align*}
\sum_{m=1}^{k/2} 2^{O(k)} m^{k-m} B^{k-2m}\sigma^{2m} & \leq 2^{O(k)} \sum_{m=0}^{k/2} k^{k-m}B^{k-2m}\sigma^{2m}\\
& \leq 2^{O(k)}\left(k^{k/2}\sigma^k + k^kB^k\right)\\
& \leq 2^{O(k)}(\sigma \sqrt{k} + Bk)^k,
\end{align*}
as desired.
\end{proof}

\section{Proofs from Section \ref{sec:largenorm}}
\label{app:largenorm}

\begin{proof}[Proof of Lemma \ref{lem:pw-sampling}]
First note that $t \in \{0,\ldots,T\}$ satisfying the hypothesis
exists since $\nmt{v}^2 \in [1,n]$.
For $\ell \in [n]$, let $I(\ell)$ be the indicator random variable which is $1$ if $\ell
\in B_t$. Since $|B_t| =2^t$, $\Pr[I(\ell) =1] =2^t/n$. If we set $V = \nm{v_t}^2$,
\begin{align*}
V & = \sum_\ell v_\ell^2 I(\ell)\\
\E[V] & = \nm{v}^2 \frac{2^t}{n} \in [1/2,1].
\end{align*}
By the pairwise independence of $\sigma$,
\begin{align*}
\E[V^2] & = \sum_{\ell=1}^n v_\ell^4 I(\ell) + \sum_{\ell \neq
  \ell'=1}^n v_\ell^2 v_{\ell'}^2 I(\ell)I(\ell')\\
& \leq \sum_{\ell=1}^n v_\ell^4 \frac{2^t}{n} + \sum_{\ell \neq
  \ell'=1}^n v_\ell^2 v_{\ell'}^2 \frac{2^{2t}}{n}\\
&  \leq \frac{2^t}{n} \nm{v}_4^4 + \E[V]^2.
\end{align*}
Therefore,
\[ \var(V) = \E[V^2] - \E[V]^2  \leq \frac{2^t}{n}\nm{v}^4 \leq \frac{2^t\nmt{v}^2}{n} \nm{v}_\infty^2 \leq \frac{1}{16}\]
Thus, by Chebyshev's inequality,
\[\pr[|V - \E[V]| > 1/3] \leq 9/16\]
In particular, with probability at least $7/16$, $V = \nmt{v^t}^2 \in[1/6,4/3]$.
\end{proof}

\begin{proof}[Proof of Lemma \ref{SpreadingHashLem}]
Let $\ell = 2 C_1 \log(1/\delta)$ and $T=\Theta(\log^5(1/\delta))$ to
be chosen later. Let $\hh = \{h:[n]\rightarrow[T]\}$ to be a
$\delta'$-biased family for $\delta'=\exp(-C(\log(1/\delta)))$ for $C$ a
sufficiently large constant.

Let $p=c\log(1/\delta)/\log\log(1/\delta))$ for a constant $c$ to be
chosen later. Let $v \in [0,1]^n$ with $\nmt{v}^2 \geq C_2
\log^5(1/\delta)$ and note that if $\nmt{v_{h^{-1}(j)}}^2 \geq
\nmt{v}^2/\ell$ for some $j \in [T]$, then $h(v) \geq
\nmt{v}^4/\ell^2$ (recall the definition of $h(v)$ from Equation
\eqref{eq:hv}). Therefore, by Lemma \ref{hashmomentsLem} and Markov's
inequality, the probability that this happens is at most
\begin{align*}
\frac{\E[h(v)^p]\ell^{2p}}{\nmt{v}^{4p}} &\leq \bkets{\frac{\ell^{2p}}{\nmt{v}^{4p}}}\bkets{O(p)^{2p} \left(\frac{\nmt{v}^4}{T}\right)^p + O(p)^{2p} \|v\|_4^{4p}+ T^p \nmt{v}^{4p}\delta'}\\
&\leq O\bkets{\frac{p^2 \ell^2}{T}}^p + O\bkets{\frac{p^2 \ell^2}{\nmt{v}^2}}^p + T^p \ell^{2p} \delta'\\
&\leq O(\log(1/\delta))^{-p} + O(\log(1/\delta))^{7p} \delta'\\
&< \delta,
\end{align*}
for a suitable choice of the constant $c$ and $\delta' = \exp(-C \log(1/\delta))$.

Now suppose that $\nmt{v_{h^{-1}(j)}}^2 < \nmt{v}^2/\ell$ for all $j
\in [T]$. Let $I = \{j: \nm{v_{h^{-1}(j)}}^2 \geq
\nmt{v}^2/2T\}$. Then,
\[ \nmt{v}^2 \leq |I| \cdot (\nmt{v}^2/\ell) + T \cdot \nmt{v}^2/(2T).\]
Therefore, we must have $|I| \geq \ell/2$. This proves the claim.
\end{proof}

\section{Proofs from Section \ref{sec:dimredux}}
\label{app:dimredux}
\begin{proof}
Let $\alpha=n^{-1/3},\beta=n^{-1/36}$.

Note that $|L| \leq \tvar(f)/\alpha\leq n^{2/9}$. Since $h \in_u \hh$ is $k$-wise independent, for any index $j \in [t]$,
\begin{align}
\label{eq:bound_1}
\pr[|L \cap h^{-1}(j)| > k/2] \leq \binom{|L|}{k/2}
\left(\frac{1}{t}\right)^{k/2} \leq
\left(\frac{\tvar(f)}{\alpha}\right)^{k/2}
\left(\frac{1}{t}\right)^{k/2}\leq O\left(n^{-5/18} \right)^{k/2}.
\end{align}

Define $v \in \R^n$ by $v_j = \sigma^2(f_j)$ if $j \in S$ and $0$
otherwise. Now,
\[ \nm{v}_2^2 = \sum_{j \in S} \sigma^4(f_j) \leq \max_{j \in S}\sigma^2(f_j)\sum_{j \in S} \sigma^2(f_j) \leq
\tvar(f) \alpha.\]
By Lemma \ref{lm:hashing} applied to $v$, we get that for any $j \in [t]$,
\begin{align}
\pr_{h \in \hh}\left[\sum_{\ell \in S: h(\ell) = j} \sigma^2(f_\ell) \geq
  \frac{\tvar(f)}{t} + \frac{(\tvar(f))^{1/2} \alpha^{1/4}}{2} \leq \beta \right] \leq O(k)^{k/2}
  \alpha^{k/4}=O(k)^{k/2}n^{-\Omega(k)}.
\end{align}
This completes the proof.
\end{proof}

\end{document}